\documentclass[USenglish,a4paper,11pt]{article}

\usepackage{babel}
\usepackage{microtype}
\usepackage{amsmath,amsthm,amsfonts,amssymb}
\usepackage{mathtools}
\usepackage{thmtools,thm-restate}
\usepackage{nicefrac}
\usepackage{graphicx}
\usepackage{booktabs}
\usepackage{multirow}
\usepackage{titling}
\usepackage{authblk}
\usepackage{xcolor}
\usepackage[colorlinks=true,linkcolor=blue,citecolor=teal]{hyperref}
\usepackage{fullpage}
\usepackage[sort&compress,numbers]{natbib}
\usepackage[nameinlink]{cleveref}

\graphicspath{{./figures/}}

\newtheorem{theorem}{Theorem}

\newtheorem{lemma}[theorem]{Lemma}

\NewDocumentEnvironment{informaltheorem}{m}
{\par\medskip\noindent\textbf{Theorem~\ref{#1} (Informal).}\hspace{0.35em}\itshape\ignorespaces}
{\par\medskip}

\makeatletter
\AddToHook{cmd/appendix/before}{\def\cref@section@alias{appendix}}
\AddToHook{cmd/appendix/before}{\def\cref@subsection@alias{appendix}}
\makeatother

\newcommand{\OPT}{\mathrm{OPT}}
\newcommand{\Cmax}{C_{\textnormal{max}}}
\newcommand{\minkSP}{\textsc{Min}$k$\textsc{SP}}

\date{October 29, 2026}

\title{Diverse Schedules for Makespan Minimization\texorpdfstring{\\}{ }on Unrelated Machines\thanks{This collaboration originated at the Bremen--Hamburg Workshop in Bremen in 2025.
We thank the organizers for providing an environment that fostered this collaboration.}}

\newcommand{\email}[1]{\texttt{#1}}

\author[1]{Sebastian Bruchhold\thanks{This work was done while Sebastian Bruchhold was funded by the Deutsche Forschungsgemeinschaft (DFG\@, German Research Foundation) under Germany's Excellence Strategy -- The Berlin Mathematics Research Center MATH+ (EXC-2046/1, EXC-2046/2, project ID\@: 390685689).}}

\affil[1]{Goethe University Frankfurt, Institute of Computer Science, Frankfurt am Main, Germany
\authorcr
\email{Bruchhold@em.uni-frankfurt.de}}

\author[2]{Matthias Mnich}
\author[2]{Karolina Tammemaa\thanks{Supported by the Deutsche Forschungsgemeinschaft (DFG, German Research Foundation), project MN {59/4-1}}}

\affil[2]{Hamburg University of Technology, Institute for Algorithms and Complexity, Hamburg, Germany
\authorcr
\email{\string{matthias.mnich, karolina.tammemaa\string}@tuhh.de}}

\author[3]{Morten Weber\thanks{Funded by the Deutsche Forschungsgemeinschaft (DFG, German Research Foundation) under Germany's Excellence Strategy – EXC 3036 The Martian Mindset – project number~533607631.}}
\author[3]{Michalis Xefteris\thanks{Supported by a Humboldt Research Fellowship for Postdoctoral Researchers.}}
\author[3]{Bart Zondervan\thanks{Supported by the Deutsche Forschungsgemeinschaft (DFG, German Research Foundation), project number~547924951.}}

\affil[3]{University of Bremen, Faculty of Mathematics and Computer Science, Germany
\authorcr
\email{\string{moweber, xefteris, bart\string}@uni-bremen.de}}

\begin{document}
\maketitle
\begin{abstract}
    We study makespan minimization on unrelated machines with the additional goal of constructing diverse schedules:
    Instead of outputting a single schedule, we seek a collection of schedules with provable makespan guarantees and substantially different job assignments. This setting is motivated by practical applications in which committing to a single schedule in advance may be undesirable. Processing times may be uncertain, and additional constraints or domain knowledge may become available only later, so providing a set of $k$ schedules of similar quality gives the scheduler meaningful alternatives.
    We measure pairwise diversity by the weighted Hamming distance between job assignments and study the two standard objectives in the diverse-solutions literature: max-sum diversity, which maximizes the total pairwise distance between schedules, and max-min diversity, which maximizes the smallest pairwise distance.

    For the max-sum objective, we show that optimal diversity can be achieved in polynomial time without sacrificing the classic makespan approximation factor.
    For every feasible makespan target~$T$, we can compute~$k$ schedules of makespan at most~$2T$ whose max-sum diversity is optimal among all~$k$ schedules of makespan at most~$T$.
    Therefore, we match the classic factor-$2$ makespan guarantee of Lenstra, Shmoys, and Tardos (\emph{Math.\ Prog.}, 1990) and preserve optimal diversity.

    For the max-min objective, we show that the diversity problem is inherently harder.
    Namely, we show that, even without makespan constraints, deciding whether three schedules of given pairwise diversity exist is $\mathsf{NP}$-hard.
    We complete this with two complementary bicriteria approximation algorithms for every fixed~$k$ and $\varepsilon \in (0,1)$.
    The first preserves almost all of the optimal max-min diversity with makespan at most~$(3\lceil12/\varepsilon\rceil+1)T$.
    The second keeps the makespan arbitrarily close to the classic~$2$ guarantee while achieving a~$\frac{1}{2}(1-\varepsilon)$ fraction of the optimal max-min diversity.
\end{abstract}

\clearpage

\section{Introduction}
\label{sec:introduction}
Makespan minimization on unrelated machines is one of the most classic problems in scheduling.
Given a set~$M$ of~$m$ machines and a set~$J$ of~$n$ jobs, where each job~$j\in J$ has a machine-dependent processing time~$p_{ij}$ on machine~$i\in M$, one seeks an assignment, or schedule,~$\sigma:J\to M$ that minimizes the \emph{makespan}:
for a schedule~$\sigma$, the load of machine~$i$ is $L_i(\sigma)=\sum_{j\in J:\,\sigma(j)=i}p_{ij}$, and the makespan is $\Cmax(\sigma)=\max_{i\in M} L_i(\sigma)$.
The problem is denoted by~$R\|\Cmax$~\cite{GrahamLLR1979} and is well-known to be $\mathsf{NP}$-hard.
This hardness motivated the design of approximation algorithms, which compute in polynomial time a schedule~$\sigma$ whose makespan is within a factor~$\alpha$ of the optimum.
One of the most classic approximation algorithms in scheduling is the factor-$2$ approximation for~$R\|\Cmax$ due to~\cite{LenstraST1990}. 
Moreover, the same work shows that the problem is hard to approximate within any factor strictly smaller than~$\frac{3}{2}$, unless $\mathsf{P}=\mathsf{NP}$~\cite{LenstraST1990}.
Since its conception in 1990, no approximation factor strictly below~$2$ has been obtained for general unrelated machines.
Our main research question is whether one can retain such a guarantee when the output is required to contain not just one schedule, but several substantially different schedules.

Our motivation arises from settings in which committing to a single schedule in advance is undesirable. 
Processing times may be known only approximately, while additional constraints or application-specific information may become available only after candidate schedules have been computed. 
We therefore seek to preserve flexibility at the optimization stage by producing multiple schedules that are substantially different from one another. 
Such diversity enables a decision maker to incorporate information that was unavailable or unmodeled during optimization when selecting a final schedule. 
In this sense, diverse solution sets provide a natural mechanism for increasing robustness to uncertainty and unmodeled constraints.

This point of view is consistent with the broader diverse-solutions paradigm in combinatorial optimization.
A growing body of work studies how to compute multiple solutions that are not only good with respect to the original objective, but also far apart according to an appropriate distance measure~\cite{FellowsR2019, BasteFJMdOOPR2022, HanakaKKKKO2023}.
A similar perspective also arises in patient admission scheduling, where evolutionary diversity optimization has been used to compute diverse sets of high-quality schedules that provide alternatives under imperfect modeling~\cite{NikfarjamMNN2022}.

In this paper, we measure diversity at the level of job assignments: two schedules are far apart when many important jobs are assigned to different machines.
For example, a decision maker may later learn that certain job--machine assignments are desirable or undesirable and then choose the candidate schedule that best matches these preferences.
Optimizing this diversity creates a specific difficulty for unrelated-machine scheduling. 
Each schedule must satisfy its own load constraints, but the diversity objective couples all schedules through job assignments.
A change in one schedule may improve its distance from some schedules, decrease its distance from others, and at the same time violate a machine-load constraint.
Thus, diversity has to be built into the scheduling problem itself.

We study the following bicriteria problem.
Given a makespan target~$T$ and an integer~$k$, the goal is to compute~$k$ schedules with makespan close to~$T$ and large job-assignment diversity.
Our results separate the two standard diversity objectives: max-sum diversity maximizes the total pairwise distance in the collection, while max-min diversity maximizes the smallest pairwise distance.
For max-sum diversity, we retain the classic factor-$2$ makespan guarantee and achieve the optimal diversity among~$k$ schedules of makespan at most~$T$.
For max-min diversity, we show that the problem is computationally harder and design two complementary bicriteria algorithms.

\subsection{Our contributions}
We first formalize the diversity measure used throughout the paper. 
For two schedules~$\sigma_a$ and~$\sigma_b$, we define their pairwise diversity as the weighted Hamming distance 
\begin{align*}
  d(\sigma_a,\sigma_b) 
    \;=\; \sum_{j\in J} w_j \mathbf 1[\sigma_a(j)\ne \sigma_b(j)]  ,   
\end{align*}
where~$\mathbf 1[\cdot]$ is the indicator function and $w_j \ge 0$ is the \emph{diversity weight} (or \emph{weight} for short) of job~$j$, measuring the importance of scheduling~$j$ diversely.
Note that, since the diversity depends only on whether assignments differ, the machines are treated as labeled.
For~$k$ schedules, we follow the diverse-solutions literature and consider max-sum and max-min diversity~\cite{HanakaKKO2021, HanakaKKKKO2023}.
The max-sum and max-min objectives maximize, respectively, the total pairwise diversity and the minimum pairwise diversity, and are defined by
\begin{align*}
  d_\Sigma(\sigma_1,\dots,\sigma_k)
    \;=\; \sum_{1\le a<b\le k} d(\sigma_a,\sigma_b)
  \qquad\text{and}\qquad
  d_{\min}(\sigma_1,\dots,\sigma_k)
    \;=\; \min_{1\le a<b\le k} d(\sigma_a,\sigma_b) \enspace .
\end{align*}

The max-sum objective measures the total spread of the collection of schedules, and an optimal solution might include identical schedules or consist of a few well-separated clusters of nearly identical schedules. To counter such behavior, the max-min objective is stricter and requires every pair of schedules to be far apart.
An example for both objectives is depicted in~\Cref{fig:diversity}.

\begin{figure}
  \centering
  \includegraphics[width=0.8\linewidth]{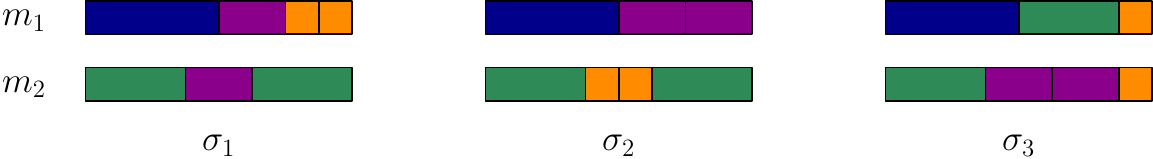}
  \caption{Three schedules~$\sigma_1$, $\sigma_2$, and~$\sigma_3$ of seven jobs on two machines~$m_1$ and~$m_2$.
  Jobs of the same color have the same processing time.
  All diversity weights are equal to~$1$.
  Then, the diversity of~$\sigma_1$ and~$\sigma_2$ is $d(\sigma_1, \sigma_2) = 3$,
  the diversity of~$\sigma_2$ and~$\sigma_3$ is $d(\sigma_2, \sigma_3) = 4$,
  and the diversity of~$\sigma_1$ and~$\sigma_3$ is $d(\sigma_1, \sigma_3) = 3$.
  Thus, $d_{\min}(\sigma_1, \sigma_2, \sigma_3) = 3$,
  while $d_\Sigma(\sigma_1, \sigma_2, \sigma_3) = 10$.}\label{fig:diversity}
\end{figure}

Diverse scheduling has two competing objectives, makespan and diversity, and our results approximate their Pareto tradeoff.
We state the guarantees for a makespan target~$T$, but they also admit the dual view of fixing a diversity target and minimizing the required makespan.
Given a target makespan~$T$, all our guarantees compare the diversity of the output schedules with the best diversity attainable by~$k$ schedules of makespan at most the original target~$T$, even though the output schedules may have larger makespan (see the discussion at the end of \Cref{sec:preliminaries}). 
Our results are summarized in \Cref{tab-contributions}.

\paragraph*{Max-sum diversity.}
Our first result is proved in \Cref{sec:max-sum-diversity}. 
We begin in \Cref{subsec:max-sum-warmup} by introducing a natural labeled $k$-tuple LP formulation, in which each of the $k$ schedules has a label and each variable assigns a job to a $k$-tuple of machines. 
This formulation can be viewed as a special case of the $k$-sided placement problem of \cite{KorupoluMRT2015}, and already gives optimal max-sum diversity with makespan~$(k+1)T$ in polynomial time for fixed~$k$. 
Our main result significantly improves the makespan guarantee and matches the best-known factor-$2$ approximation for $R\|\Cmax$~\cite{LenstraST1990}, while preserving exact max-sum diversity and running in time polynomial in the input size and~$k$. 
To do this, we rewrite the max-sum objective in terms of multiplicities. 
Occurrence-count arguments for weighted Hamming distance have already appeared in the diverse-solutions literature~\cite{HanakaKKKKO2023}.
In our work, however, multiplicities count job--machine assignments with unrelated processing times, so they must satisfy load constraints and be decomposed into actual schedules. We formulate a multiplicity LP in \Cref{subsec:max-sum-multiplicity-lp}. 
It tracks how many schedules assign each job to each machine, uses ranked copies to linearize the same-machine penalty, and keeps aggregate load constraints over the~$k$ schedules. 
We round this LP by iterative relaxation~\cite{LauRS2011} in \Cref{subsec:max-sum-rounding}, obtaining integral multiplicities with a small aggregate-load violation.
The remaining challenge is that this formulation controls only the aggregate load over all $k$ schedules, and an arbitrary realization of the rounded multiplicities could concentrate several large jobs in the same schedule. 
In \Cref{subsec:max-sum-decomposition}, we address this by decomposing the rounded multiplicities into labeled schedules, adapting the
slot construction by Shmoys and Tardos~\cite{ShmoysT1993} and combining it with bipartite edge coloring.
The resulting decomposition, together with an averaging argument that exploits the multiplicity structure, yields the per-schedule makespan guarantee $2T$.

\begin{informaltheorem}{thm:max-sum-two}
One can obtain $k$ schedules of makespan at most $2T$ with optimal max-sum diversity, in time polynomial in the input size and in~$k$.
\end{informaltheorem}

\paragraph*{Max-min diversity.}
We next turn to max-min diversity.
Unlike max-sum diversity, this objective must guarantee diversity for every pair of schedules. 
Thus, the multiplicity view is no longer enough: we need to know which pairs of schedules differ on each job. 
This obstruction is not only an artifact of our approach. 
\Cref{lem:max-min-hardness} in \Cref{app:max-min-hardness} shows that, given a diversity value~$Q$, it is $\mathsf{NP}$-hard to decide whether there exist three schedules with max-min diversity at least~$Q$, even when the makespan constraint is ignored.
For fixed $k$, we therefore give two complementary algorithms, one that preserves almost all of the optimum diversity, and one that keeps the makespan close to the factor-$2$ guarantee for $R\|\Cmax$.

\paragraph*{Near-optimal max-min diversity.}
In \Cref{subsec:near-optimal-max-min-diversity}, we give our first algorithm for max-min diversity. 
It keeps the labeled $k$-tuple LP and aims for near-optimal diversity. 
Assume that the diversity target is known through guessing.
For every pair of schedules, we guess the largest positive contributors to their pairwise distance (called \emph{heavy contributors}), together with their full $k$-tuple assignments. 
After guessing the diversity target and heavy contributors  for each pair of schedules, the algorithm solves a residual $k$-tuple LP and iteratively rounds it.
For each remaining job, we restrict the allowed tuples to use
at most $O(1/\varepsilon)$ distinct machines.
The main technical step is to prove that the residual LP remains
feasible despite this restriction.
The proof proceeds by contradiction, considering an LP point (a fractional assignment of jobs to $k$-tuple of machines) that
minimizes the sum of squared diversity shortfalls subject to
the job and load constraints. Here, a pair's \emph{shortfall} is the amount by which its diversity falls below its target, or zero if the target is met.
If some diversity target is unmet, we construct a point $y$
satisfying the job and load constraints.
The crucial idea is that, to construct $y$, the simultaneous
pairwise diversity requirements can be reduced to a single
weighted diversity target, using pairwise shortfalls as weights.
Meeting this weighted target ensures that moving slightly toward
$y$ decreases the sum of squared shortfalls, contradicting minimality.
We then round the residual LP by iterative relaxation.

\begin{informaltheorem}{thm:near-optimal-revised}
For fixed $k\ge 2$ and $\varepsilon\in (0,1)$, one can obtain $k$ schedules of make\-span at most $(3\lceil12/\varepsilon \rceil+1)T$ with $(1-\varepsilon)$-approximate max-min diversity, in polynomial time.
\end{informaltheorem}

In \Cref{app:near-optimal-max-min-small-k}, we also give an
alternative algorithm achieving the same diversity guarantee
with makespan at most $(2k+1+\varepsilon)T$.
Its run time is polynomial for fixed~$k$ and~$\varepsilon$.
This alternative bound can improve the makespan guarantee when $k$ is small.

\paragraph*{Low-makespan max-min diversity.}
In \Cref{subsec:2eps-apx-max-min-div}, we design our second algorithm for max-min diversity. Here we keep the makespan close to the factor-$2$ guarantee for $R\|\Cmax$.
The algorithm builds the schedules sequentially.
Using a metric argument, we show that at each step there exists a schedule from an optimal family that is at distance at least half the optimal max-min diversity from every schedule selected so far.
Thus, adding the next schedule becomes an unrelated-machine assignment problem with load upper bounds and one covering lower bound for each previously constructed schedule.
Related LPs appear in school-assignment-type problems~\cite{SanthiniMNS2025}.
We round this LP by first guessing heavy contributors and then applying iterative relaxation inspired by the rounding approach of Korupolu et al.~\cite{KorupoluMRT2015}.

\begin{informaltheorem}{thm:max-min-2-makespan}
For fixed $k\ge 2$ and $\varepsilon\in(0,1)$, one can obtain $k$ schedules of makespan at most $(2+\varepsilon)T$ with $\frac{1}{2}(1-\varepsilon)$-approximate max-min diversity, in polynomial time.
\end{informaltheorem}

\begin{table}[t]
 \centering
 \small
 \renewcommand{\arraystretch}{1.25}
 \begin{tabular}{@{}lccll@{}}
 \toprule
 \textbf{Objective} & \textbf{Diversity guarantee} & \textbf{Makespan guarantee} & \textbf{Run time} & {} \\
 \midrule
 \textbf{Max-sum}
 & $\OPT_k^\Sigma(T)$
 & $2T$
 & Polynomial time
 & (\Cref{thm:max-sum-two}) \\
 \midrule
 \multirow{3}{*}{\textbf{Max-min}}
 & $(1-\varepsilon)\OPT_k^{\min}(T)$
 & $(3 \lceil 12/\varepsilon \rceil+1)T$
 & \multirow{3}{*}{\begin{tabular}[c]{@{}l@{}}Polynomial time\\for fixed $k,\varepsilon$\end{tabular}}
 & (\Cref{thm:near-optimal-revised}) \\
 & $(1-\varepsilon)\OPT_k^{\min}(T)$
 & $(2k+1+\varepsilon)T$
 &
 &(\Cref{thm:max-min-ktuple}) \\
 & $\frac{1}{2}(1-\varepsilon)\OPT_k^{\min}(T)$
 & $(2+\varepsilon)T$
 &
 & (\Cref{thm:max-min-2-makespan}) \\
 \bottomrule
 \end{tabular}
 \caption{Summary of our main guarantees. $\OPT_k^\Sigma(T)$ denotes the optimum max-sum diversity, and $\OPT_k^{\min}(T)$ the optimum max-min diversity, among $k$ schedules of makespan at most $T$.}
 \label{tab-contributions}
 \vspace{-1em}
\end{table}

\subsection{Related work}
There is a long history of works that study combinatorial optimization problems with the goal of finding multiple ``diverse'' solutions:
For example, Krarup~\cite{Krarup1975} introduced the \textsc{Peripatetic Salesman} problem,
where multiple edge-disjoint Hamilton cycles in a graph need to be found that have minimum weight in sum.
Erkut et al.~\cite{ErkutN1989} survey problems where facility locations need to be computed that are maximally far apart.
Kuo et al.~\cite{KuoGD1993} consider the \textsc{Maximum Diversity} problem,
where a number of elements from a given set need to be selected such that the sum of their pairwise diversity coefficients is maximized, and show it to be~$\mathsf{NP}$-hard.
The computational intractability that is common for these diversity problems (see, e.g., by Hanaka et al.~\cite{HanakaKKKKO2023}) has motivated approaches via parameterized algorithms as well as approaches via approximation algorithms.

With a focus on the former approaches,
Fellows and Rosamond~\cite{FellowsR2019} recently introduced the \textsc{Diverse}~$X$ paradigm.
Here, $X$ can be any algorithmic problem and the proposal is to study ``diverse'' versions of~$X$, possibly with varying diversity measures, with the number of desired solutions as a parameter.
To give only a few examples, results of the study of parameterized algorithms for diverse problems encompass a framework for low-treewidth graphs~\cite{BasteFJMdOOPR2022}, a framework using \emph{odd sieving}~\cite{EibenKM2024}, and an algorithmic meta-theorem using cliquewidth~\cite{DrabikM2026}.

As Fellows and Rosamond~\cite{FellowsR2019} also point out, in applied settings, optimality is not so essential.
In line with this motivation, there is a growing body of literature on \emph{approximately} solving diverse problems,
including multiple frameworks~\cite{GaoGKTTY2022,HanakaKKKKO2023,GalvezGMPTV2025,AustrinBGLS2025},
and on selecting approximately diverse subsets of a \emph{given} set~\cite{CevallosEZ2017}.
While these have yielded approximation algorithms for diverse versions of, for example,
\textsc{Shortest Path}~\cite{GaoGKTTY2022},
\textsc{Interval Scheduling}~\cite{HanakaKKKKO2023},
\textsc{Knapsack}~\cite{GalvezGMPTV2025},
and \textsc{Interval Vertex Deletion}~\cite{AustrinBGLS2025},
the diverse version of makespan minimization on unrelated machines has not been studied.
To our knowledge, prior work on diverse scheduling problems thus far covers variants of \textsc{Diverse Interval Scheduling}~\cite{HanakaKKKKO2023}, including \textsc{Diverse Job Selection}~\cite{HanakaKKO2021}, and heuristics for hybrid flow-shop scheduling~\cite{GorenP2013}.
In \textsc{Interval Scheduling}, a maximum-cardinality subset of non-overlapping intervals needs to be selected from a given set of intervals,
and so the problem differs substantially from makespan minimization on unrelated machines.
G\"oren and Pierreval~\cite{GorenP2013} propose a heuristic framework for diverse scheduling and evaluate it on hybrid flow-shop makespan instances, without worst-case guarantees.
Our algorithms for unrelated machines provide approximation guarantees for both makespan and diversity.

\section{Preliminaries}
\label{sec:preliminaries}
We study nonpreemptive scheduling of jobs on unrelated machines.
Let~$M$ be a set of~$m$ machines and~$J$ a set of~$n$ jobs.
Each job~$j\in J$ has machine-dependent processing time $p_{ij} \in \mathbb{Q}_{\geq0}$ on machine~$i\in M$ and diversity weight~$w_j \in \mathbb{Q}_{\geq0}$.
A schedule $\sigma\colon J\rightarrow M$ assigns each job to one machine.
The load of machine~$i$ in schedule~$\sigma$ is~$L_i(\sigma) = \sum_{j\in J:\, \sigma(j)=i} p_{ij}$, and the makespan is~$\Cmax(\sigma) = \max_{i\in M} L_i(\sigma)$.
We write~$[k]=\{1,\dots,k\}$.

For machine-dependent load targets~$T_i$, we restrict each job~$j$ to its set of \emph{eligible} machines $\Gamma(j)=\{i\in M \mid p_{ij}\le T_i\}$.
No schedule respecting these load targets can assign~$j$ to a machine outside of~$\Gamma(j)$.
For a fixed makespan target $T$, this becomes $\Gamma(j)=\{i\in M \mid p_{ij}\le T\}$. This restriction is used in the algorithmic parts of the paper.
We also call a makespan target~$T$ \emph{feasible} if there exists at least one schedule~$\sigma$ with~$\Cmax(\sigma)\le T$. 

The pairwise diversity~$d$ defined in the introduction is a weighted Hamming distance with nonnegative weights.
If some weights are zero, it is formally a pseudometric.
In any case, it satisfies the triangle inequality.
Indeed, for each job~$j$, the indicator $\mathbf 1[\sigma_1(j)\ne\sigma_3(j)]$ is at most \mbox{$\mathbf 1[\sigma_1(j)\ne\sigma_2(j)]+\mathbf 1[\sigma_2(j)\ne\sigma_3(j)]$}, and summing with weights~$w_j\ge 0$ gives the claim.

For a makespan target~$T$, we define
\begin{align*}
  \OPT_k^\Sigma(T) \;=\; \max\Bigl\{ d_\Sigma(\pi_1,\dots,\pi_k) \;\Big|\; \Cmax(\pi_a) \leq T \ \forall a \in [k] \Bigr\} \enspace .
\end{align*}
Analogously, we define~$\OPT_k^{\min}(T)$ for the max-min diversity.

Let $d_\circ$, where $\circ\in\{\Sigma,\min\}$, denote the corresponding diversity objective and let $\OPT_k^\circ(T)$ denote the optimum value of this objective among $k$ schedules of makespan at most~$T$.
We call an algorithm an \emph{$(\alpha,\beta)$-bicriteria approximation} for the chosen diversity objective if, given a target makespan~$T$, it returns~$k$ schedules~$\sigma_1,\dots,\sigma_k$ such that $\Cmax(\sigma_a) \leq \alpha T$ for every $a\in [k]$ and $d_\circ(\sigma_1,\dots,\sigma_k) \geq \beta \OPT_k^\circ(T)$.

Throughout the paper we write $x^+ = \max\{x, 0\}$ for a real number $x$.

\paragraph*{Comparison benchmark.}
All bicriteria guarantees in this paper compare the diversity of the output schedules to the optimum diversity achievable by $k$ schedules of makespan at most the original target~$T$. 
Thus, if an algorithm computes schedules of makespan at most~$\alpha T$, the diversity benchmark for the chosen objective remains~$\OPT_k^\circ(T)$, rather than~$\OPT_k^\circ(\alpha T)$.

\section{Hardness}

Since makespan minimization is $\mathsf{NP}$-hard even when producing a single schedule, it is not surprising that hardness persists when producing \emph{multiple} diverse schedules.
However, we prove that producing multiple diverse schedules is hard even in settings where the makespan target is essentially irrelevant.

In \Cref{subsec:fixed-target-hardness}, we show that computing two schedules of makespan at most~$T$ with optimal diversity at target~$T$ is $\mathsf{NP}$-hard even when finding non-diverse schedules of makespan at most $T$ is trivial. The same hardness applies to both max-sum and max-min diversity, since for $k=2$ the two objectives coincide.

In \Cref{subsec:further-hardness}, we show that computing three schedules of optimal max-min diversity is $\mathsf{NP}$-hard even in the absence of a makespan target.
Further, we observe that the related problem of selecting~$k$ maximally diverse schedules from a \emph{given} set of schedules is $\mathsf{W}[1]$-hard with respect to~$k$, for both max-sum and max-min diversity.

\subsection{Hardness at a fixed makespan target}
\label{subsec:fixed-target-hardness}

We give a simple reduction showing that optimizing diversity at a fixed makespan target is $\mathsf{NP}$-hard. 
Our hardness result holds even when all diversity weights are equal to $1$ and every job is eligible on only two machines.
Each job has the same processing time on both eligible machines, and so we establish hardness even in the restricted-assignment machine model (the graph-balancing machine model \cite{EbenlendrKS2014} on stars, in fact).
Moreover, finding two distinct feasible schedules of makespan at most the target is trivial for the instances produced by the reduction.
Thus, the hardness comes from the diversity objective.

\begin{restatable}{lemma}{fixedtargethardness}
\label{lem:fixed-target-hardness}
  Given a makespan target~$T$ and a diversity value $Q$, it is $\mathsf{NP}$-hard to decide whether there exist two schedules $\sigma_1, \sigma_2$ such that $\Cmax(\sigma_1),\Cmax(\sigma_2)\le T$ and $d(\sigma_1,\sigma_2)\ge Q$.
  The same hardness applies to both max-sum and max-min diversity, since for $k=2$ the two objectives coincide.
\end{restatable}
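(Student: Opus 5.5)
Here is the plan: reduce from \textsc{Partition}, with a structure matching the preceding discussion. The instance should have unit weights, every job eligible on exactly two machines with the same processing time on both, a star as the machine graph, and a trivial feasible schedule.

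\emph{Construction.} Start from a \textsc{Partition} instance $a_1,\dots,a_n$ with $\sum_j a_j = 2B$. Create a center machine $c$ and leaf machines $\ell_1,\dots,\ell_n$. Job $j$ has processing time $a_j$ on $c$ and on $\ell_j$, and $\infty$ (or some value exceeding $T$) elsewhere. Set $T = B$, all weights $w_j = 1$, and $Q = n$.

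\emph{Trivial feasibility.} Each leaf $\ell_j$ can only host job $j$, so its load is at most $a_j \le B$ (assume $a_j\le B$ w.l.o.g., otherwise the instance is trivially a no-instance). Hence placing every job on its own leaf is a feasible schedule. The same holds for any schedule that puts a subset of total size at most $B$ on $c$, so finding feasible schedules, even two distinct ones, is easy.

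\emph{Equivalence.} Since each job has only two eligible machines, $d(\sigma_1,\sigma_2) \ge n$ holds exactly when every job is on $c$ in one schedule and on its leaf in the other. In that case the jobs on $c$ in $\sigma_1$ and the jobs on $c$ in $\sigma_2$ partition $J$. Each of these sets has total size at most $B$, and together they sum to $2B$, so both equal $B$. That is a partition. Conversely, a partition $S,\bar S$ yields $\sigma_1$ placing $S$ on $c$ and $\sigma_2$ placing $\bar S$ on $c$, with all other jobs on their leaves. Both schedules have makespan $B$ and diversity $n$. Since $d_\Sigma = d_{\min} = d$ when $k = 2$, both objectives are covered.

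The only delicate points are handling ineligible assignments and ensuring polynomial encoding. The value $\infty$ can be replaced by $2B+1$, which is still larger than $T$. \textsc{Partition} is only weakly NP-hard, but that suffices for this NP-hardness claim.
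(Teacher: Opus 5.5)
Your proposal is correct and is essentially the paper's proof: the same \textsc{Partition} reduction with one shared machine, a private machine per job, $T=B$, unit weights, and $Q=n$, where full diversity forces a split whose two sides are each at most $B$, hence both equal to $B$. The only cosmetic difference is that you use $2B+1$ rather than the paper's $B+1$ for the ineligible processing times, which changes nothing.
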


The reduction is from \textsc{Partition} and establishes weak $\mathsf{NP}$-hardness; see \Cref{app:fixed-target-hardness} for the proof.

\subsection{Further hardness results}
\label{subsec:further-hardness}

We give an additional hardness result specifically for max-min diversity.
In particular, for this objective, producing three maximally diverse schedules is $\mathsf{NP}$-hard even without a makespan constraint.

\begin{restatable}{lemma}{maxminhardness}
\label{lem:max-min-hardness}
  Given a diversity value $Q$, it is $\mathsf{NP}$-hard to decide whether there exist three schedules $\sigma_1, \sigma_2, \sigma_3$ such that $d_{\min}(\sigma_1, \sigma_2, \sigma_3) \ge Q$, even if the makespan constraint is ignored.
\end{restatable}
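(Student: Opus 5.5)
The plan is a reduction from \textsc{Partition} on an instance with exactly two machines. With three or more machines the question is trivial: without a makespan constraint every job can be sent to three distinct machines in the three schedules, so every pair differs on every job. The whole combinatorial content therefore comes from having only two machine labels.

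\textbf{Step 1 (structure for $m=2$).} Fix three schedules $\sigma_1,\sigma_2,\sigma_3$ on machines $\{m_1,m_2\}$. For each job $j$ the assignment pattern $(\sigma_1(j),\sigma_2(j),\sigma_3(j))$ takes one of two forms.
\begin{itemize}
\item All three values are equal. Then $j$ contributes $0$ to every pairwise distance.
\item Exactly one schedule $\sigma_c$ disagrees with the other two. Then $j$ contributes $w_j$ to the two pairs containing $c$ and $0$ to the pair $[3]\setminus\{c\}$.
\end{itemize}
Put $j$ into class $A_c$ when $\sigma_c$ is the odd one out, and into a class $A_0$ when all three agree. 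Write $W=\sum_j w_j$. The distance of the pair $\{a,b\}$, with third index $c$, is then $d(\sigma_a,\sigma_b)=W-w(A_0)-w(A_c)$. Conversely, any partition of $J$ into $A_0,A_1,A_2,A_3$ is realised by some triple of schedules: for instance, put every job on $m_1$ in all schedules except in $\sigma_c$ for $j\in A_c$. Since moving jobs out of $A_0$ never hurts, we obtain
\[
d_{\min}\ge Q \iff J \text{ can be partitioned into } A_1,A_2,A_3 \text{ with } w(A_c)\le W-Q \text{ for all } c .
\]

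\textbf{Step 2 (reduction).} Take a \textsc{Partition} instance $a_1,\dots,a_n$ with total $2S$. Create one job per $a_i$ with weight $w=a_i$, plus one extra job of weight $S$, so that $W=3S$. Set $Q=2S$; processing times are irrelevant and can be chosen arbitrarily, say $0$. By Step 1, $d_{\min}\ge 2S$ holds iff the jobs split into three classes each of weight at most $S$. Because the total is $3S$, each class must have weight exactly $S$. The class containing the extra job then contains nothing else, since all $a_i>0$; jobs of weight $0$ can be discarded beforehand. The remaining two classes form a partition of the $a_i$ into two halves of sum $S$. The converse direction is immediate.

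The only step needing care is Step 1: the exact pattern analysis and the claim that every class partition is realisable. Everything else is bookkeeping. The reduction gives weak $\mathsf{NP}$-hardness, matching \Cref{lem:fixed-target-hardness}. For strong hardness one could instead reduce from \textsc{3-Partition}, but since all classes are unconstrained in size here, that route would still only give a $3$-way number partitioning problem. So weak hardness is what I would claim.
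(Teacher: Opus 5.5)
Your proposal is correct and is essentially the paper's proof: the same two-machine reduction from \textsc{Partition} with a dummy job of weight $S$ and threshold $Q=2S$, and the same ``odd-one-out'' classification giving $d(\sigma_a,\sigma_b)=w(A_a)+w(A_b)$. The only cosmetic difference is packaging. You state a general characterization (three classes of weight at most $W-Q$, with $A_0$ absorbed by a shifting argument). The paper instead sums the three pairwise inequalities, which directly forces $W_0=0$ and $W_1=W_2=W_3=B$.
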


The reduction is again from \textsc{Partition} and establishes weak $\mathsf{NP}$-hardness.
We refer to \Cref{app:max-min-hardness} for the details.
Note that this result separates max-min diversity from max-sum diversity in terms of hardness.
This is because max-sum diversity for three schedules without makespan constraints can be formulated as a linear objective subject to assignment constraints and thus solved in polynomial time.

Also, even when a set of schedules \emph{is already given}, deciding whether a sufficiently diverse fixed-size subset exists is hard for both max-sum and max-min diversity:
With an observation similar to that of Hanaka et al.~\cite[Observation~2]{HanakaKKKKO2023},
the problem of selecting a size-$k$ subset of maximum diversity from a given set of schedules can be shown to be $\mathsf{W}[1]$-hard with respect to~$k$ by a reduction from \textsc{Set Packing}, even with unit jobs on two machines.
Intuitively, the input set of schedules can omit easy solutions, making the problem hard.

\section{Max-sum diversity}
\label{sec:max-sum-diversity}
In this section, we present our bicriteria approximation algorithm for max-sum diversity. 
We begin in \Cref{subsec:max-sum-warmup} with the natural labeled $k$-tuple formulation, which keeps track of the machine chosen by each job in each of the $k$ schedules and yields a~$(k+1)$-makespan guarantee with optimal diversity. 
Our main result, in \Cref{subsec:max-sum-main-theorem}, shows that this labeled view is unnecessarily detailed: the max-sum diversity objective depends only on the \emph{multiplicities} with which each job is assigned to each machine across the~$k$ schedules, not on which schedule makes each assignment.

We exploit this observation through a multiplicity LP, introduced in \Cref{subsec:max-sum-multiplicity-lp}, which decides only how many of the schedules assign each job to each machine. 
The LP uses ranked copies to linearize the convex same-machine penalty and keeps only aggregate-load constraints summed over all schedules.
In \Cref{subsec:max-sum-rounding}, we round this LP by iterative rounding, obtaining integral multiplicities with a small aggregate-load violation.
Finally, in \Cref{subsec:max-sum-decomposition}, we decompose the multiplicities into~$k$ schedules using a slot construction,  bipartite edge coloring, and an averaging argument to bound the load of each output schedule.
Together, these steps yield~$k$ schedules of makespan at most~$2T$ whose max-sum diversity is at least~$\OPT_k^\Sigma(T)$.

\subsection{Warm-up: the labeled \texorpdfstring{$k$}{k}-tuple linear program}
\label{subsec:max-sum-warmup}

A natural first formulation for the problem keeps the labels of the $k$ schedules explicit. 
Concretely, we introduce for every job~$j\in J$ and every tuple~$\mathbf{i}=(i_1,\dots,i_k) \in \Gamma(j)^k$ a variable~$z^j_{\mathbf{i}}$ that equals~$1$ if and only if for every schedule~$a$ job~$j$ is assigned to machine~$i_a$.

For a job $j$ and a tuple $\mathbf{i}=(i_1,\ldots,i_k)$, let $\mu_i^j(\mathbf{i})=|\{a\in[k]\mid i_a=i\}|$ be the number of schedules in which the tuple assigns job $j$ to machine $i$.
Two schedules~$a<b$ contribute~$w_j$ to the diversity of job~$j$ exactly when~$i_a\neq i_b$, so the contribution of tuple~$\mathbf{i}$ to the total diversity is $w_j\left(\binom{k}{2} - \sum_{i\in \Gamma(j)}\binom{\mu_i^j(\mathbf{i})}{2}\right)$.
Here,~$\binom{\mu_i^j(\mathbf{i})}{2}$ counts the number of schedule pairs both using machine~$i$.

Since~$w_j \binom{k}{2}$ is a constant, maximizing total diversity is equivalent to minimizing the same-machine penalty $\phi^j_{\mathbf{i}} = w_j\sum_{i\in \Gamma(j)}\binom{\mu_i^j(\mathbf{i})}{2}$.
This gives a minimization integer program with objective $\sum_{j\in J} \sum_{\mathbf{i}\in \Gamma(j)^k} \phi_{\mathbf{i}}^j z_{\mathbf{i}}^j$ subject to assignment constraints $\sum_{\mathbf{i}\in \Gamma(j)^k} z^j_{\mathbf{i}}=1$ for every job~\mbox{$j\in J$}, and load constraints $\sum_{j\in J}\sum_{\mathbf{i}\in \Gamma(j)^k:\,i_a=i} p_{ij}z^j_{\mathbf{i}}\leq T$ for every machine~$i\in M$ and schedule \mbox{$a\in [k]$}.

This formulation is a special case of the minimization $k$-sided placement problem (\minkSP), the~$k$ sides are the~$k$ schedules, the nodes on each side are copies of the machines, and a tuple $(i_1,\dots,i_k)$ specifies the placement of one job on all~$k$ sides.
We recall \minkSP{} in~\Cref{subsec:min-k-sided-placement}.
Korupolu, Meyerson, Rajaraman, and Tagiku~\cite{KorupoluMRT2015} showed that~\minkSP\ admits a polynomial-time algorithm preserving the optimal linear cost while increasing every capacity by a factor of~$k+1$.\footnote{
    Korupolu, Meyerson, Rajaraman, and Tagiku note, based on a personal communication, that the Lov\'asz Local Lemma framework of Harris and Srinivasan for generalized scheduling problems~\cite{HarrisS2013,HarrisS2019} applies to \minkSP, yielding an $\mathcal{O}(\log k / \log\log k)$-capacity approximation.
    This would improve the makespan bound of this labeled $k$-tuple baseline from~$k+1$ to~$\mathcal{O}(\log k/\log\log k)$.
}
Instantiating this with cost~$\phi^j_{\mathbf{i}}$ and capacities~$T$ gives the following guarantee, whose proof is given in \Cref{app:max-sum}.

\begin{restatable}[Labeled $k$-tuple guarantee]{corollary}{ktupleresult}
\label{cor:k-tuple}
  Fix~$k\in \mathbb{N}_{\ge 2}$. There is a polynomial-time algorithm that, for every feasible makespan target~$T$, computes~$k$ schedules $\sigma_1,\dots,\sigma_k$, each with makespan at most~$(k+1)T$ and with $d_\Sigma(\sigma_1,\ldots,\sigma_k)\ge \OPT_k^\Sigma(T)$.
\end{restatable}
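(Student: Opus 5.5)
We instantiate \minkSP{} with the labeled $k$-tuple formulation and apply the theorem of Korupolu, Meyerson, Rajaraman, and Tagiku as a black box. Throughout, we restrict attention to eligible machines $\Gamma(j)=\{i\mid p_{ij}\le T\}$.

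\textbf{Step 1: the LP optimum is a valid benchmark.} Let $\pi_1,\dots,\pi_k$ be $k$ schedules attaining $\OPT_k^\Sigma(T)$. Such schedules exist because $T$ is feasible and there are finitely many schedules. Each of them assigns every job to an eligible machine. Setting $z^j_{\mathbf{i}}=1$ for $\mathbf{i}=(\pi_1(j),\dots,\pi_k(j))$ and $0$ otherwise satisfies the assignment constraints and the per-schedule load constraints $\sum_j\sum_{\mathbf{i}:i_a=i}p_{ij}z^j_{\mathbf{i}}=L_i(\pi_a)\le T$. Its penalty equals $\sum_j w_j\binom{k}{2}-\OPT_k^\Sigma(T)$, by the identity for the contribution of a tuple derived above. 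So the minimum penalty of the LP relaxation is at most this value.

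\textbf{Step 2: round via \minkSP.} We view the problem as \minkSP{}. There are $k$ sides, one per schedule. The side-$a$ nodes are copies $(i,a)$ of the machines, each with capacity $T$. Job $j$ placed via tuple $\mathbf{i}$ consumes $p_{i_aj}$ on node $(i_a,a)$ and has cost $\phi^j_{\mathbf{i}}$. The number of variables is $n\cdot m^k$, which is polynomial for fixed $k$, so the LP can be solved in polynomial time. The result of Korupolu et al.\ then yields in polynomial time an integral placement, with exactly one tuple $\mathbf{i}^j$ per job, whose cost is at most the LP optimum and whose load on every node is at most $(k+1)T$.

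\textbf{Step 3: read off the schedules.} Define $\sigma_a(j)=i^j_a$. The load of machine $i$ in $\sigma_a$ is exactly the load of node $(i,a)$, so $\Cmax(\sigma_a)\le(k+1)T$. The integral cost equals $\sum_j w_j\binom{k}{2}-d_\Sigma(\sigma_1,\dots,\sigma_k)$. Combined with Step 1, this gives $d_\Sigma\ge\OPT_k^\Sigma(T)$.

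\textbf{Main obstacle.} The only delicate point is checking that our instance fits the exact hypotheses of the \minkSP{} theorem. Three things need verification: item sizes may differ across sides (here $p_{i_aj}$ depends on the machine); the capacity blow-up $k+1$ is stated relative to the capacities and not to an additive $\max p$ term; and the eligibility restriction $p_{ij}\le T$ is what makes any additive item-size term at most $T$. If the theorem's guarantee is of the form ``capacity plus $k$ times the largest item'', the restriction to $\Gamma(j)$ supplies exactly the bound $T+kT=(k+1)T$. I would check this against the precise statement recalled in \Cref{subsec:min-k-sided-placement}.
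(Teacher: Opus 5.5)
Your proposal is correct and matches the paper's proof. Both exhibit the optimal family $\pi_1,\dots,\pi_k$ as a feasible placement at capacity $T$ with penalty $\binom{k}{2}\sum_j w_j-\OPT_k^\Sigma(T)$, invoke the $(k+1)$-capacity \minkSP{} algorithm of Korupolu et al.\ as a black box, and translate the resulting tuples back into schedules via the identity that diversity equals $\binom{k}{2}\sum_j w_j$ minus the penalty. Your observation that restricting tuples to $\Gamma(j)^k$ keeps every demand $p_{i_a j}$ at most the node capacity $T$ is exactly the hypothesis the paper relies on, implicitly, when applying the $(k+1)$ capacity bound.
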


This baseline guarantee has two limitations.
First, the makespan guarantee~$(k+1)T$ grows with the number~$k$ of schedules.
Second, the formulation can have~$\Omega(nm^k)$ variables, so it is polynomial only for constant~$k$.
We now show that max-sum diversity has additional structure that lets us avoid the labeled formulation.

\subsection{Exact max-sum diversity with \texorpdfstring{$2$}{2}-approximate makespan}
\label{subsec:max-sum-main-theorem}

Let $p_{\max} = \max\{p_{ij} \mid j\in J,\ i\in \Gamma(j)\}$ be the largest processing requirement. 
Since every eligible pair satisfies $p_{ij}\le T$, we have~$p_{\max} \le T$. 
We now state our main result for max-sum diversity.

\begin{theorem}[Exact max-sum diversity with makespan~$2T$]
\label{thm:max-sum-two}
    There is an algorithm that, for every $k \in \mathbb{N}_{\geq 2}$ and every feasible target~$T$, in polynomial time in the input size and in $k$, computes schedules $\sigma_1, \dots, \sigma_k$ each with makespan at most~$T+p_{\max} \leq 2T$ and with diversity $d_\Sigma(\sigma_1,\ldots,\sigma_k)\ge \OPT_k^\Sigma(T)$.
\end{theorem}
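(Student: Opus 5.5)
We follow the outline in the section introduction: multiplicity LP, iterative rounding, then decomposition into labeled schedules. For each job $j$ and machine $i\in\Gamma(j)$ we introduce copies $(i,r)$ for ranks $r=1,\dots,k$, with variables $y^j_{i,r}\in[0,1]$. Copy $r$ carries marginal penalty $w_j(r-1)$, the increment of $\binom{\mu}{2}$ when the multiplicity goes from $r-1$ to $r$. The constraints are
\[
\sum_{i,r} y^j_{i,r}=k \quad\text{for every job } j,
\qquad
\sum_{j}\sum_{r} p_{ij}\,y^j_{i,r}\le kT \quad\text{for every machine } i .
\]
The objective is to minimize $\sum_{j,i,r} w_j(r-1)\,y^j_{i,r}$. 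The aggregate load constraint is the sum of the $k$ per-schedule constraints.

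\textbf{Step 1: the LP is a relaxation.} Given $k$ optimal schedules of makespan at most $T$, set $y^j_{i,r}=1$ exactly for $r\le\mu^j_i$. This point is feasible and its cost equals the same-machine penalty. Moreover, because the marginal penalties increase in $r$, any optimal LP solution may be assumed to fill lower ranks first. Hence an integral solution of the LP has cost exactly $\sum_j w_j\sum_i\binom{\mu^j_i}{2}$ with $\mu^j_i=\sum_r y^j_{i,r}$. Consequently, the LP optimum is at most $\binom{k}{2}\sum_j w_j-\OPT^\Sigma_k(T)$. The LP has $O(nmk)$ variables, so it is polynomial in $k$.

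\textbf{Step 2: iterative relaxation.} Take an extreme point and fix variables at $0$ or $1$. Whenever a machine constraint has small residual support, drop it, so that its load exceeds $kT$ by a bounded amount. The counting argument gives $\#\text{vars}\le\#\text{tight jobs}+\#\text{tight machines}$. Each fractional job has at least two fractional copies. The target is an LP-style guarantee: a dropped machine constraint receives at most one extra job, or total extra load at most $k\,p_{\max}$. This yields integral multiplicities $\mu^j_i$ with $\sum_i\mu^j_i=k$, cost at most the LP optimum, and aggregate load $\sum_j p_{ij}\mu^j_i\le kT+k\,p_{\max}$. Some care is needed with multiplicities exceeding $1$: a job may need to stay fractional across several copies of the same machine. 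Using the monotone-cost structure, we will merge the copies of machine $i$ for job $j$ into a single integer-valued variable $\mu^j_i\in[0,k]$ with convex piecewise-linear cost, so that the standard Lenstra--Shmoys--Tardos-type bound applies.

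\textbf{Step 3: decomposition, the main obstacle.} We must split the multiset of assignments into $k$ schedules so that each job appears exactly once per schedule, and each machine's per-schedule load is at most $T+p_{\max}$. Aggregate load alone does not prevent several large jobs from landing in one schedule. For each machine $i$, list its job copies in nonincreasing order of $p_{ij}$, with each job repeated $\mu^j_i$ times. We cut this list into consecutive groups of $k$ consecutive units, the Shmoys--Tardos slots. This forms a bipartite multigraph: on one side jobs of degree $k$, on the other side slots of degree at most $k$, where slot $s$ of machine $i$ holds the units in positions $(s-1)k+1,\dots,sk$. A job occurring several times on the same machine may hit the same slot twice; this gives parallel edges, which is fine. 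By K\H{o}nig's theorem, this graph is $k$-edge-colorable. Color class $a$ becomes schedule $\sigma_a$: each job gets exactly one edge of color $a$, and each slot at most one job per schedule. To bound the load, note that schedule $a$ receives on machine $i$ at most one job from each slot $s$, of size at most the maximum in slot $s$. That maximum is at most the minimum in slot $s-1$, hence at most $\frac1k$ times the total of slot $s-1$. Summing over slots gives
\[
L_i(\sigma_a)\le p_{\max}+\frac{1}{k}\sum_j p_{ij}\mu^j_i .
\]
For this to be at most $T+p_{\max}$, we need the aggregate bound $kT$ exactly, not $kT+kp_{\max}$. So Step 2 must be done carefully. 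We will try to absorb the rounding violation into the first slot, arguing that the iterative rounding only increases the units in the first partially filled slot. Alternatively, we will perform the rounding directly on the fractional slot graph, following Shmoys--Tardos, who round a fractional bipartite matching using integrality of the bipartite polytope. Concretely, with the fractional slot assignment being a fractional $k$-regular $b$-matching, we will take an integral $b$-matching of no larger cost; the load bound then follows from the sorted-slot argument applied to the fractional solution, which has aggregate load at most $kT$. This second route is the one we would try first.
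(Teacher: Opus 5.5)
\textbf{There is a gap.} Your committed chain does not reach the stated bound. Step~2 as written leaves aggregate load $kT+kp_{\max}$, and your slot estimate is $L_i(\sigma_a)\le p_{\max}+L_i/k$; together these give only $T+2p_{\max}$. Even with aggregate overload $p_{\max}$, your slot estimate gives $T+(1+\tfrac1k)p_{\max}$.

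So everything rests on your preferred fallback, and it fails exactly at ``take an integral $b$-matching of no larger cost.'' The penalty $w_j\binom{\mu_{ij}}{2}$ is convex in $\mu_{ij}$, and $\mu_{ij}$ is a sum of edge multiplicities over the slots of machine~$i$. So the penalty is not a linear cost on job--slot edges. Suppose you linearize by making each ranked copy $(i,j,s)$ a parallel edge of cost $(s-1)w_j$ to each slot it meets. A copy straddling a slot boundary then produces two edges with the same rank, an integral $b$-matching may take both, and the charged cost undercounts the true penalty.

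Concretely, take $k=2$, $T=1.4$, two machines, a job $h$ with $w_h=0$ and $p_{1h}=p_{2h}=1.2$, and a job $j$ with $w_j=1$ and $p_{1j}=p_{2j}=1$. The optimal extreme point $y_{1h1}=1$, $y_{1h2}=y_{2h1}=\tfrac12$, $y_{1j1}=y_{2j1}=1$ has cost $0$. It puts $j$'s rank-$1$ copy on machine~$1$ half in each of the first two slots. Using both halves, together with $h$ once on each machine, is an integral $b$-matching on the support of cost $0$. But it places $j$ on machine~$1$ in both schedules, so the diversity is $0<\OPT_2^\Sigma(T)=1$.

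\textbf{Two ways to close it.} Your route works if the ranked copies become unit-capacity nodes in a flow network:
source $\to$ job $j$ with flow exactly $k$;
job $\to$ copy $(i,j,s)$ with capacity $1$ and cost $(s-1)w_j$;
copy $\to$ the at most two slots of machine~$i$ it meets in the fractional sorted list;
slot $\to$ sink with capacity $k$.
This is a network-flow polytope, so an integral flow on the support of no larger cost exists. Each rank is then used at most once, so $\binom{\mu_{ij}}{2}$ is at most the charged cost. Your fractional sorted-slot argument then gives $T+p_{\max}$ directly from the aggregate bound $kT$: the largest item of slot $t+1$ is at most the average of the full slot $t$. That would be a different proof from the paper's, and somewhat shorter, since it avoids iterative relaxation.

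The paper instead keeps your Steps~2--3 and sharpens both.
\begin{itemize}
\item In an all-fractional extreme point, $|E|\le|J'|+|M'|$ and $\sum_E y=\sum_j b_j\ge|J'|$ give $\sum_{i\in M'}\sum_{\delta(i)}(1-y_{ijs})\le|M'|$. Hence some machine has $\sum_{\delta(i)}(1-y_{ijs})\le 1$, and dropping its constraint adds only $p_{\max}$ (not $kp_{\max}$) to the aggregate load.
\item In the slot lemma, color $a$'s own copy counts toward slot $u$'s total, so $P_u\ge p_{au}+(k-1)p_{a,u+1}$. This yields $L_i(\sigma_a)\le L_i/k+(1-\tfrac1k)p_{\max}$.
\end{itemize}
With $L_i\le kT+p_{\max}$ this is exactly $T+p_{\max}$; it is the precise form of your ``absorb the violation into the first slot'' idea. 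Finally, merging copies into convex integer variables in Step~2 is unnecessary: after rounding, just move each pair's selected ranks down to $1,\dots,\mu_{ij}$.
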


This improves on the labeled~$k$-tuple formulation in two ways.
The makespan decreases from $(k+1)T$ to~$T+p_{\max}\leq 2T$, which is independent of~$k$.
Moreover, the algorithm has polynomial run time even when~$k$ is part of the input.
Both gains come from the same observation: the sum of diversities depends only on the multiplicities with which each job is assigned to each machine, not on which schedule makes each assignment.

The proof has three steps. 
First, we express the sum of diversities using only unlabeled multiplicities.
Second, we optimize and round these multiplicities using a multiplicity LP. 
Finally, we decompose the rounded multiplicities into~$k$ schedules by a slot-coloring argument.

\subsection{The multiplicity linear program}
\label{subsec:max-sum-multiplicity-lp}
Max-sum diversity does not depend on the labels of the schedules: it suffices to know, for every job~$j$ and machine~$i$, how many of the~$k$ schedules assign~$j$ to~$i$.
For schedules $\sigma_1,\dots,\sigma_k$, we let $\mu_{ij}=|\{a\in[k] \mid \sigma_a(j)=i\}|$ count the schedules assigning~$j$ to~$i$, so that~$\sum_{i\in \Gamma(j)} \mu_{ij} = k$.
Among the~$\binom{k}{2}$ pairs of schedules, exactly~$\sum_{i\in \Gamma(j)} \binom{\mu_{ij}}{2}$ place~$j$ on the same machine.
Therefore, job~$j$ contributes~$w_j \bigl(\binom{k}{2} - \sum_{i\in \Gamma(j)} \binom{\mu_{ij}}{2}\bigr)$ to the total diversity, and summing over all jobs gives
$d_\Sigma(\sigma_1,\ldots,\sigma_k) = \Bigl( \binom{k}{2}\sum_{j\in J} w_j \Bigr) - \sum_{j\in J} w_j\sum_{i\in \Gamma(j)}\binom{\mu_{ij}}{2}$.
We write $Q_0=\binom{k}{2} \sum_{j\in J} w_j$.
For the multiplicity vector~$\boldsymbol{\mu}=(\mu_{ij})_{j\in J, i\in \Gamma(j)}$, we define the same-machine penalty by
$\Phi(\boldsymbol \mu) = \sum_{j\in J} w_j\sum_{i\in \Gamma(j)}\binom{\mu_{ij}}{2}$.
We obtain~$d_\Sigma(\sigma_1,\ldots,\sigma_k)=Q_0-\Phi(\boldsymbol \mu)$. 
Since~$Q_0$ is independent of the schedules to compute, maximizing the sum of diversities is equivalent to minimizing the same-machine penalty~$\Phi(\boldsymbol \mu)$.

If a job $j$ is assigned to the same machine in multiple schedules, we think of these as \emph{copies} of $j$, and the $s^{\textnormal{th}}$ copy of $j$ is said to have \emph{rank} $s \in [k]$.
To linearize~$\Phi$, we use the identity~$\binom{\mu_{ij}}{2} = \sum_{s=1}^{\mu_{ij}} (s-1)$.
Thus, the $s^{\textnormal{th}}$ copy of job~$j$ on a fixed machine contributes penalty cost~$(s-1) w_j$.
For every eligible pair $(i,j)$ and every rank $s\in[k]$, we introduce a binary variable~$y_{ijs}$ representing selection of the~$s^{\textnormal{th}}$ copy of job~$j$ on machine~$i$.
We relax the binary constraints to obtain the multiplicity LP:
\begin{alignat}{3}
    \text{minimize} \quad
         \sum_{j\in J} w_j \sum_{i\in \Gamma(j)}\sum_{s\in [k]}
        &(s-1) y_{ijs} \tag{Mult-LP}
        \label{lp:multiplicity}
        \\[-0.5mm]
    \text{subject to}\qquad
        \sum_{i\in \Gamma(j)}\sum_{s\in [k]} y_{ijs} \;&=\; k
        \qquad &&\forall j\in J  , \notag\\[-0.5mm]
         \sum_{j\in J:\, i\in\Gamma(j)}\sum_{s\in [k]} p_{ij}y_{ijs} \;&\le\; kT
        \qquad &&\forall i\in M  , \notag\\[-0.5mm]
         0 \;\le\; y_{ijs} \;&\le\; 1
        \qquad &&\forall j\in J,\ i\in \Gamma(j),\ s\in[k]  \enspace . \notag
\end{alignat}
Note that, for fixed multiplicities, the cheapest ranks $1, \dots, \mu_{ij}$ preserve all loads and minimize the rank objective. Let~$Z_{\mathrm{mult}}$ denote the optimal value of~\eqref{lp:multiplicity}.
We prove the following lemma in \Cref{app:max-sum}.

\begin{restatable}{lemma}{multiplicityrelaxation}
\label{lem:multiplicity-lp-relaxation}
  The multiplicity LP~\eqref{lp:multiplicity} is a relaxation of the max-sum problem at makespan target~$T$. 
  In particular, $Z_{\mathrm{mult}}\le Q_0-\OPT_k^\Sigma(T)$.
\end{restatable}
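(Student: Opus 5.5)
Take an optimal solution $\pi_1,\dots,\pi_k$ for the max-sum problem at target $T$, so each $\Cmax(\pi_a)\le T$ and $d_\Sigma(\pi_1,\dots,\pi_k)=\OPT_k^\Sigma(T)$. Such a solution exists because $T$ is feasible and the set of schedules is finite. From it we build a feasible point of \eqref{lp:multiplicity} whose objective value is exactly $\Phi(\boldsymbol\mu)=Q_0-\OPT_k^\Sigma(T)$. Since $Z_{\mathrm{mult}}$ is a minimum, this gives $Z_{\mathrm{mult}}\le Q_0-\OPT_k^\Sigma(T)$.

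\textbf{Construction.} Let $\mu_{ij}=|\{a\in[k]\mid \pi_a(j)=i\}|$. Every $\pi_a$ has makespan at most $T$, so each assignment $\pi_a(j)=i$ satisfies $p_{ij}\le L_i(\pi_a)\le T$; hence $i\in\Gamma(j)$. Consequently $\mu_{ij}=0$ for $i\notin\Gamma(j)$, and $\sum_{i\in\Gamma(j)}\mu_{ij}=k$. Set $y_{ijs}=1$ if $s\le \mu_{ij}$ and $y_{ijs}=0$ otherwise, for $i\in\Gamma(j)$ and $s\in[k]$. This uses the cheapest ranks, as noted before the lemma.

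\textbf{Feasibility.} The box constraints hold trivially. The assignment constraint reads $\sum_{i\in\Gamma(j)}\sum_s y_{ijs}=\sum_{i\in\Gamma(j)}\mu_{ij}=k$. For the load constraint at machine $i$, exchange the order of summation:
\begin{align*}
\sum_{j: i\in\Gamma(j)}\sum_{s} p_{ij}y_{ijs}=\sum_{j} p_{ij}\mu_{ij}=\sum_{a\in[k]}\sum_{j:\pi_a(j)=i}p_{ij}=\sum_{a\in[k]}L_i(\pi_a)\le kT.
\end{align*}

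\textbf{Objective value.} The objective equals $\sum_j w_j\sum_{i\in\Gamma(j)}\sum_{s=1}^{\mu_{ij}}(s-1)=\sum_j w_j\sum_i\binom{\mu_{ij}}{2}=\Phi(\boldsymbol\mu)$. By the identity $d_\Sigma=Q_0-\Phi(\boldsymbol\mu)$ derived above, this equals $Q_0-\OPT_k^\Sigma(T)$.

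\textbf{Relaxation claim.} More generally, the same construction maps any $k$ schedules of makespan at most $T$ to an integral LP point whose cost equals $Q_0$ minus their diversity. This is the sense in which \eqref{lp:multiplicity} is a relaxation.

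The argument is routine. The only points needing care are the eligibility observation ($p_{ij}\le T$ for every assignment used) and the load calculation, which aggregates the per-schedule bounds into the single constraint with right-hand side $kT$.
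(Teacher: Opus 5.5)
Your proposal is correct and follows the paper's proof: map the $k$ schedules of makespan at most $T$ (in particular an optimal family) to the lowest-rank indicator vector $y_{ijs}=\mathbf 1[s\le\mu_{ij}]$, check feasibility via the aggregated load bound $kT$, and note that the objective equals $\Phi(\boldsymbol\mu)=Q_0-\OPT_k^\Sigma(T)$. Your explicit check that every used pair is eligible ($p_{ij}\le L_i(\pi_a)\le T$, using $p_{ij}\ge 0$) is a detail the paper leaves implicit, and it is correct.
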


\subsection{Rounding the multiplicity linear program}
\label{subsec:max-sum-rounding}
We iteratively round~\eqref{lp:multiplicity}.
At each step, we fix integral variables. 
If no such variable exists, the extreme-point structure yields a machine-load constraint with total remaining slack at most one, which we drop. 
This gives integral multiplicities with an aggregate load bound, which we later decompose into schedules in \Cref{subsec:max-sum-decomposition}.
The proof of the following lemma is deferred to \Cref{app:max-sum}.

\begin{restatable}[Multiplicity rounding]{lemma}{multiplicityrounding}
\label{lem:multiplicity-rounding}
  There is a polynomial-time algorithm that computes~$Y\in\{0,1\}^{m\times n\times k}$ such that
  $\sum_{i\in\Gamma(j)}\sum_{s\in [k]}Y_{ijs} = k$ for every $j\in J$, $\sum_{j\in J:i\in\Gamma(j)}\sum_{s\in [k]} p_{ij} Y_{ijs} \le kT+p_{\max}$ for every $i\in M$, and $\sum_{j\in J}w_j \sum_{i\in \Gamma(j)} \sum_{s\in [k]} (s-1) Y_{ijs} \le Z_{\mathrm{mult}}$.
\end{restatable}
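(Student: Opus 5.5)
The plan is iterative relaxation in the style of Lau, Ravi, and Singh, applied to \eqref{lp:multiplicity}. We keep a residual LP over the set $F$ of variables $y_{ijs}$ not yet fixed, and a set $M'\subseteq M$ of machines whose load constraints are still present. Initially $F$ contains all variables and $M'=M$. Fixed variables are moved to the right-hand sides. Job $j$'s constraint becomes $\sum_{(i,s):\,y_{ijs}\in F} y_{ijs}=k_j$, where $k_j$ is $k$ minus the number of variables of $j$ fixed to $1$; in particular $k_j$ is an integer. Machine $i$'s constraint keeps its current right-hand side $kT$ minus the load of variables fixed to $1$ on $i$.

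In each iteration we compute an optimal extreme point $y^*$ of the residual LP and do the following.
\begin{enumerate}
\item If some $y^*_{ijs}\in\{0,1\}$, we fix it permanently, update the right-hand sides, and remove it from $F$.
\item Otherwise we drop the constraint of one suitable machine $i\in M'$; choosing it is the counting step below.
\end{enumerate}
The restriction of $y^*$ stays feasible for the next residual LP. Hence the residual optimum plus the fixed cost never increases, and the final cost is at most $Z_{\mathrm{mult}}$. Each iteration removes a variable or a constraint, so the number of iterations is polynomial in $n$, $m$, and $k$. At termination all variables are integral and every job constraint holds exactly, which gives the first and third properties of the lemma.

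The main step is the counting argument when $0<y^*_{ijs}<1$ for all variables in $F$. Let $J'$ be the jobs that still have variables in $F$, and let $M'_t\subseteq M'$ be the machines whose constraints are tight.
\begin{itemize}
\item Each $j\in J'$ has fractional variables summing to the integer $k_j\ge 1$, so it has at least two variables. Hence $|F|\ge 2|J'|$, and also $\sum_{F} y^*\ge |J'|$.
\item $y^*$ is an extreme point, so $|F|$ is at most the number of linearly independent tight constraints, which is at most $|J'|+|M'_t|$.
\item The two previous bounds give $|M'_t|\ge |J'|\ge 1$, so some tight machine constraint exists.
\item Summing over all variables,
\[
\sum_{y_{ijs}\in F}(1-y^*_{ijs}) \;=\; |F|-\sum_{F}y^*\;\le\;(|J'|+|M'_t|)-|J'| \;=\; |M'_t| .
\]
\item The sum on the left contains the contributions of all machines in $M'_t$. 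By averaging, some $i\in M'_t$ satisfies $\sum_{y_{ijs}\in F}(1-y^*_{ijs})\le 1$, and we drop this machine's constraint.
\end{itemize}

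It remains to bound the load of a machine $i$ whose constraint is dropped. At the moment of dropping, its constraint is tight, so the fixed load plus $\sum_{F} p_{ij}y^*_{ijs}$ equals the current right-hand side. Together with the fixed load, this is at most $kT$. The variable set only shrinks afterwards, so the variables of $i$ that can still become $1$ are exactly those in $F$ at that time. Its final load is therefore at most
\[
kT+\sum_{y_{ijs}\in F}p_{ij}(1-y^*_{ijs})\;\le\; kT+p_{\max}\sum_{y_{ijs}\in F}(1-y^*_{ijs})\;\le\; kT+p_{\max},
\]
using $p_{ij}\le p_{\max}$ for eligible pairs. Machines whose constraints are never dropped end with load at most $kT$. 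This gives the second property of the lemma.

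The step that needs the most care is this counting. Bounding the number of fractional variables per machine is not enough, since two fractional variables with a small sum could cost up to $2p_{\max}$. The fix is to count the quantity $\sum(1-y^*)$ instead, using that each job's fractional mass is an integer of at least $1$. We also have to confirm that the averaging lands on a tight machine still in $M'$, which holds because only tight constraints enter the rank bound.
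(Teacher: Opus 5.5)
Your proposal is correct and follows essentially the same iterative-relaxation argument as the paper. You use the rank bound together with the integral job right-hand sides to get $\sum(1-y^*)\le$ the number of machine constraints, average to find a machine with $\sum(1-y^*_{ijs})\le 1$, drop it, and charge at most $p_{\max}$ extra load to that machine. The only cosmetic difference is that you average over the tight machine constraints and obtain their nonemptiness from the two-variables-per-job count, whereas the paper averages over all remaining machine constraints and argues nonemptiness by contradiction; both versions work.
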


We now define the rounded multiplicities~$\mu_{ij}=\sum_{s\in [k]} Y_{ijs}$.
By \Cref{lem:multiplicity-rounding},  $\sum_{i\in \Gamma(j)} \mu_{ij}=k$ for every job $j\in J$.
Each machine~$i\in M$ has load $L_i = \sum_{j\in J} p_{ij}\mu_{ij} = \sum_{j\in J} \sum_{s\in [k]} p_{ij} Y_{ijs} \le kT + p_{\max}$.

It remains to relate the penalty~$\Phi(\boldsymbol \mu)$ to the rounded objective.
For a fixed eligible pair~$(i,j)$, the~$\mu_{ij}$ ranks selected by~$Y$ need not be the lowest ranks~$1,2,\ldots,\mu_{ij}$.
Replacing them by the cheapest $\mu_{ij}$ ranks preserves the multiplicities, all load constraints, and can only decrease the rank objective. 
Therefore,
\begin{align}
        \Phi(\boldsymbol \mu)
        \;=\; \sum_{j\in J} w_j \sum_{i\in \Gamma(j)} \binom{\mu_{ij}}{2}
        \;\le\; \sum_{j\in J} w_j \sum_{i\in \Gamma(j)} \sum_{s\in [k]} (s-1) Y_{ijs}
        \;\le\; Z_{\mathrm{mult}} \enspace .
        \label{eq:phi-bound}
\end{align}

\subsection{Decomposing multiplicities into schedules}
\label{subsec:max-sum-decomposition}
We now turn the integer multiplicities~$\mu_{ij}$ into~$k$ schedules.
We build a bipartite multigraph between jobs and so-called ``machine slots'', properly edge-color it to obtain the schedules, and use an averaging argument over slots to bound the makespan of each schedule.

For every eligible pair~$(i,j)$ with~$\mu_{ij}>0$, we create~$\mu_{ij}$ copies of the pair~$(i,j)$, each of processing time~$p_{ij}$.
For each machine~$i$, we sort all copies involving~$i$ in nonincreasing order of processing time and cut this list into consecutive \emph{slots} of~$k$ copies each, with the last slot possibly smaller.
The ordering is inspired by the slot construction in the generalized-assignment rounding of Shmoys and Tardos~\cite{ShmoysT1993}.
In our setting, the slots are used to decompose integral multiplicities into~$k$ schedules via edge coloring, and the resulting load bound exploits this additional structure.

We construct a bipartite multigraph~$H$ with a vertex for each job on the left and a vertex for each machine slot on the right, adding one edge between job~$j$ and a slot of machine~$i$ for every copy of~$(i,j)$ in that slot.
Each job~$j$ has degree~$\sum_{i\in \Gamma(j)} \mu_{ij} = k$, and each machine slot has degree at most~$k$, as a slot holds at most~$k$ copies, so~$H$ has maximum degree~$k$.

By K\H{o}nig's edge-coloring theorem, $H$ admits a proper~$k$-edge-coloring~\cite{Konig1916, Schrijver2003} computable in polynomial time.
Each color defines one schedule.
If an edge of color~$a\in [k]$ is incident to job~$j$ and a slot of machine~$i$, then we set~$\sigma_a(j) = i$.
Since the coloring is proper and every job has degree exactly~$k$, each job has exactly one edge of each color, so every~$\sigma_a$ schedules all jobs. Moreover, for every pair~$(i,j)$, there are exactly~$\mu_{ij}$ edges from job~$j$ to slots of machine~$i$, and hence exactly~$\mu_{ij}$ colors assign~$j$ to~$i$. Thus,  job~$j$ is scheduled on machine~$i$ in exactly~$\mu_{ij}$ of the schedules.
Therefore,~$\sigma_1,\dots,\sigma_k$ realize the multiplicities~$\boldsymbol{\mu}$. \Cref{fig:slot-construction} illustrates the slot construction and the edge-coloring for one machine.

\begin{figure}[t]
  \centering
  \includegraphics[width=0.5\linewidth]{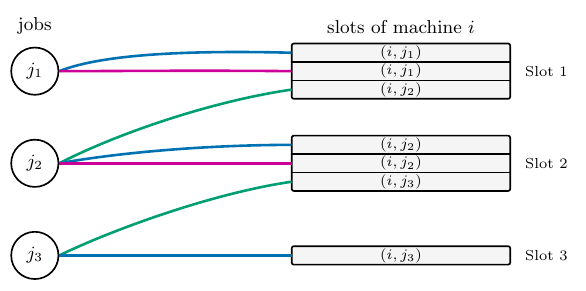}
  \caption{
  Slot construction for~$k=3$ and a fixed machine~$i$.
  The left vertices are jobs, the right vertices are the slots of machine~$i$, and each copy~$(i,j)$ becomes one edge from job~$j$ to the slot containing it. The edge colors correspond to three schedules~$\sigma_1,\sigma_2,\sigma_3$. Only the edges incident to slots of machine~$i$ are shown. In the full multigraph, the omitted incident edges of~$j_1$ and~$j_3$ go to slots of other machines.
  }
  \label{fig:slot-construction}
\end{figure}

We now bound the makespan of the constructed schedules.

\begin{lemma}[Slot decomposition]
\label{lem:slot-decomposition}
  For every machine $i \in M$ and color $a \in [k]$, the load of machine~$i$ in schedule $\sigma_a$ is at most $L_i(\sigma_a) \leq \frac{1}{k} \Bigl( L_i(\sigma_1)+\cdots+L_i(\sigma_k) \Bigr) +\left(1-\frac1k\right) p_{\max}$. 
\end{lemma}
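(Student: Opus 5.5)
Fix a machine~$i$ and let $c_1\ge c_2\ge\dots\ge c_N$ be the processing times of all $N=\sum_j\mu_{ij}$ copies involving~$i$, in the sorted order used for the slots. Slot~$t$ consists of copies $(t-1)k+1,\dots,\min\{tk,N\}$, for $t=1,\dots,r$ with $r=\lceil N/k\rceil$. Write $\max_t$ and $\min_t$ for the largest and smallest processing time in slot~$t$. Since the coloring is proper, each slot has at most one edge of color~$a$. Hence $L_i(\sigma_a)\le\sum_{t=1}^{r}\max_t$. The sum $L_i(\sigma_1)+\dots+L_i(\sigma_k)$ is the total aggregate load $L_i=\sum_{\ell}c_\ell$, because every copy receives exactly one color. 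So it suffices to prove
\[
\sum_{t=1}^{r}\max_t \;\le\; \frac1k\sum_{\ell=1}^{N}c_\ell+\Bigl(1-\frac1k\Bigr)p_{\max}.
\]

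I will use the standard Shmoys--Tardos chaining argument. The ordering is nonincreasing, so $\max_{t+1}\le\min_t$ for every $t<r$. Each full slot $t<r$ has exactly $k$ copies, so $\min_t\le\frac1k\sum_{\ell\in\text{slot }t}c_\ell$. This gives
\[
\sum_{t=2}^{r}\max_t\le\sum_{t=1}^{r-1}\min_t\le\frac1k\sum_{t=1}^{r-1}\sum_{\ell\in\text{slot }t}c_\ell,
\]
and adding $\max_1\le p_{\max}$ yields the cruder bound $\frac1k L_i+p_{\max}$. The main obstacle is getting the sharper additive term $(1-\frac1k)p_{\max}$. That bound requires charging the last slot's contribution and the first slot's maximum more carefully.

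For the refinement, I will keep the first slot separately. Its sum of copies is at least $\max_1+(|S_1|-1)\min_1$, where $|S_1|$ is its size. Then I will bound $\max_1+\sum_{t\ge2}\max_t$ with a shifted charging, pairing $\max_{t+1}$ with the average of slot~$t$ but deducting $\frac1k\max_1$ from the first slot's average. Concretely, the target is
\[
\max_1+\sum_{t=2}^{r}\max_t\le\frac1k\max_1+\frac1k\sum_{\ell\ge 2}c_\ell+\Bigl(1-\frac1k\Bigr)\max_1.
\]
This is equivalent to $\sum_{t\ge2}\max_t\le\frac1k\sum_{\ell\ge2}c_\ell$. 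To prove it, I will view the copies $c_2,\dots,c_N$ as regrouped into consecutive blocks starting from index~$2$. Every element of slot $t+1$ is at most every element of slot $t$, so $\max_{t+1}\le c_\ell$ for all $\ell$ in positions $(t-1)k+2,\dots,tk+1$. These blocks of size~$k$ are disjoint for $t=1,\dots,r-1$ and are contained in $\{2,\dots,N\}$, since slot $r$ is nonempty and so $N\ge (r-1)k+1$. Averaging over each block gives $\max_{t+1}\le\frac1k\sum$ of that block, and summing gives the inequality.

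Combining the steps, $L_i(\sigma_a)\le\frac1k L_i+(1-\frac1k)\max_1\le\frac1k\sum_b L_i(\sigma_b)+(1-\frac1k)p_{\max}$. I expect the shifted-block indexing to be the only delicate step, in particular checking the boundary case where the last slot is small.
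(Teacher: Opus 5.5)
Your proof is correct, and it reaches the bound by a slightly different route than the paper's.

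\textbf{The paper's route.} The paper works color by color. It uses that every full slot $u<\ell$ contains each of the $k$ colors exactly once. So the slot total $P_u$ splits into $p_{au}$ plus $k-1$ copies of other colors, each of which is at least $p_{a(u+1)}$. This gives $F_u\ge\frac1k p_{au}+(1-\frac1k)p_{a(u+1)}$. Summing telescopes to $L_i(\sigma_a)\le L_i/k+(1-\frac1k)p_{a1}$.

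\textbf{Your route.} You drop the coloring after the first line and bound $L_i(\sigma_a)$ by the color-independent quantity $\sum_t \max_t$. You then prove the purely combinatorial fact $\sum_{t\ge2}\max_t\le\frac1k\sum_{\ell\ge2}c_\ell$ for a sorted sequence. The tool is averaging over windows shifted by one position, $\{(t-1)k+2,\dots,tk+1\}$. Each such window consists of entries at least $c_{tk+1}=\max_{t+1}$. For the last index $tk+1$, which lies in slot $t+1$ itself, this holds with equality, not by the slot-$t$ versus slot-$(t+1)$ comparison you cite.

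\textbf{What each buys.} Your version needs only that a color meets each slot at most once, not that full slots contain every color exactly once. It also yields a bound uniform over colors, $\frac1k L_i+(1-\frac1k)c_1$. The paper's version keeps the marginally sharper color-specific term $p_{a1}\le c_1$. This difference disappears once both are relaxed to $p_{\max}$.

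\textbf{Cleanup.} The first-slot decomposition $\max_1+(|S_1|-1)\min_1$ you mention is never used and can be dropped. The case of a machine with no copies should be dismissed separately, since $\max_1$ is then undefined; the claim holds trivially there.
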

\begin{proof}
  Fix a machine $i\in M$ and a color $a \in [k]$. 
  If~$i$ has no slots, then the load is zero and the claim holds, so let its slots be~$1,\dots,\ell$ in the sorted order.
  For slot~$u\in [\ell]$, let~$P_u$ be the total processing time of its copies and set~$F_u = P_u/k$. This is the average processing time of the copies in slot~$u$, except possibly for the last slot~$\ell$.
  For every color~$b\in[k]$, we define~$p_{bu}$ as the processing time of the copy selected by color~$b$ from slot~$u$, or~$0$ if it selects none.
  This is well-defined, because the coloring is proper, so each color selects at most one copy per slot.
  By the sorting, every copy in slot~$u$ has processing time at least as large as every copy in slot~$u+1$.

  Each full slot~$u<\ell$ holds exactly~$k$ copies.
  Since the coloring is proper and uses~$k$ colors, every color appears exactly once at such a slot.
  For the fixed color~$a$ we have \mbox{$P_u = p_{au} + \sum_{b\neq a} p_{bu}$} with~$p_{bu} \geq p_{a (u+1)}$ by the slot order. 
  Thus~$P_u \geq p_{au} + (k-1) p_{a (u+1)}$.
  It follows that
  $F_u = \frac1k \cdot P_u \ge \frac{1}{k} \cdot p_{au} + \left(1-\frac1k\right) p_{a (u+1)}$.
  Moreover, for the last slot we have~$F_\ell \geq \frac1k \cdot p_{a\ell}$.
  Summing these inequalities gives 
  $\sum_{u=1}^{\ell} F_u \ge \frac{1}{k} \cdot p_{a1} + \sum_{u=2}^{\ell}p_{au}$.
  This implies that~$\sum_{u\in [\ell]} p_{au} \leq \bigl(\sum_{u\in [\ell]} F_u\bigr) + (1-1/k) p_{a1}$.
  Recall that~$L_i = \sum_{j\in J} p_{ij}\mu_{ij} = \sum_{b \in[k]}L_i(\sigma_b)$, since the schedules realize the multiplicities. Thus, $\sum_{u\in [\ell]} F_u = L_i/k$ is the average load on machine~$i$ over all schedules.
  Moreover, we have~$p_{a1}\leq p_{\max}$.
  Thus, the load of machine~$i$ on schedule~$\sigma_a$ is 
  \begin{equation*}
    L_i(\sigma_a)
        \;=\; \sum_{u=1}^{\ell}p_{au}
        \;\le\; \left(\sum_{u=1}^{\ell} F_u\right) + \left(1-\frac{1} {k}\right)p_{a1} 
        \;\le\; \frac{L_i}{k}+\left(1-\frac{1}{k}\right) p_{\max}  \enspace . \qedhere
  \end{equation*}
\end{proof}

\subsection{Proof of \texorpdfstring{\Cref{thm:max-sum-two}}{Theorem \ref{thm:max-sum-two}}}
\label{subsec:max-sum-final-proof}
\begin{proof}
  By \Cref{lem:multiplicity-rounding}, the rounded multiplicities have aggregate load at most $kT+p_{\max}$ on every machine.
  Furthermore, by \Cref{lem:slot-decomposition}, every output schedule~$\sigma_a$ satisfies, for every machine~$i\in M$, the load bound
  \begin{align*}
    \begin{aligned}
        L_i(\sigma_a)
        &= \sum_{j\in J:\sigma_a(j)=i} p_{ij} 
        \le \frac{L_i}{k} + \left(1-\frac{1}{k}\right) p_{\max} 
        \le \frac{kT+p_{\max}}{k} + \frac{k-1}{k} p_{\max} 
        = T + p_{\max} \enspace .
    \end{aligned}
  \end{align*}
  In particular, this implies that~$\Cmax(\sigma_a) \leq T+p_{\max} \leq 2T$ for every schedule~$a\in [k]$.
    
  The decomposition preserves the multiplicities~$\mu_{ij}$. 
  Hence,~$d_\Sigma(\sigma_1,\ldots,\sigma_k) = Q_0-\Phi(\boldsymbol \mu)$.
  Using~\eqref{eq:phi-bound} and \Cref{lem:multiplicity-lp-relaxation}, we conclude that $d_\Sigma(\sigma_1,\ldots,\sigma_k) \ge Q_0-Z_{\mathrm{mult}} \ge \OPT_k^\Sigma(T)$. \qedhere
\end{proof}

\section{Max-min diversity}
\label{sec:max-min-diversity}
The diversity contribution of each job in the max-sum objective from the preceding section depends only on how often it is assigned to each machine across the~$k$ schedules, not on the labels of the schedules.
Hence, we optimized aggregate multiplicities and recovered the labeled schedules only at the end.
In contrast, for max-min diversity we must guarantee diversity for every pair of schedules.
Thus, we must know which pairs of schedules differ on each job.

The max-min objective is computationally harder than the max-sum objective because, given a diversity target~$Q$, it is already $\mathsf{NP}$-hard to decide whether there exist three schedules with $d_{\min}(\sigma_1,\sigma_2,\sigma_3)\ge Q$, even when dropping the makespan constraints (\Cref{lem:max-min-hardness}, see \Cref{app:max-min-hardness}).
Therefore, in this section, we give two complementary bicriteria approximation algorithms for max-min diversity.

First, in \Cref{subsec:near-optimal-max-min-diversity}, we achieve near-optimal diversity.
For fixed $k$ and $\varepsilon>0$, our labeled $k$-tuple algorithm achieves a $(1-\varepsilon)$-approximation to the optimum max-min diversity at the original target $T$, while increasing the makespan of the computed schedules to at most $(3\lceil12/\varepsilon\rceil+1)T$. 
After guessing the diversity target and heavy contributors to the diversity for each pair of schedules, the algorithm solves a residual $k$-tuple LP and iteratively rounds it.
For each remaining job, we restrict the allowed tuples to use at most $O(1/\varepsilon)$ distinct machines, and count the processing time of the job once on each machine used by its tuple.
The main technical step is to prove that the residual LP remains feasible despite this restriction. 

Second, in \Cref{subsec:2eps-apx-max-min-div}, we achieve the classic factor-$2$ makespan guarantee approximately.
The main idea is to build the schedules sequentially. 
Given the schedules already constructed, we view the construction of the next schedule as an unrelated-machine assignment problem with covering constraints, one per previously constructed schedule, each requiring the new schedule to be sufficiently diverse from it. 
For every fixed $k\in\mathbb N$ and $\varepsilon > 0$, we round this assignment problem and combine it with a metric argument to obtain schedules of makespan at most $(2+\varepsilon)T$ and a $\frac12(1-\varepsilon)$-approximation to the optimum max-min diversity.

\subsection{Near-optimal max-min diversity}
\label{subsec:near-optimal-max-min-diversity}

We first give an algorithm that obtains near-optimal max-min diversity. The details and full proofs of this subsection are given in \Cref{app:near-optimal-max-min}.
\begin{restatable}[Near-optimal max-min diversity]{theorem}{nearoptimalrevised}
    \label{thm:near-optimal-revised}
    Fix $k\in\mathbb N_{\ge2}$ and $\varepsilon\in(0,1)$.
    There is a deterministic algorithm that, for every feasible makespan
    target $T$, computes schedules $\sigma_1,\ldots,\sigma_k$ such that
    $\Cmax(\sigma_a)\le(3\lceil12/\varepsilon\rceil+1)T$, for every $a\in[k]$, and
    \text{$ d_{\min}(\sigma_1,\ldots,\sigma_k)\ge(1-\varepsilon)\OPT_k^{\min}(T)$.}
    The algorithm runs in time
    \((nm^k)^{O(k^4/\varepsilon)}\cdot|I|^{O(1)}\), where
    \(|I|\) is the input encoding length.
\end{restatable}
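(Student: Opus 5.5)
We follow the outline given in the introduction. Let $(\pi_1,\dots,\pi_k)$ be an optimal family at target $T$ with value $Q^*=\OPT_k^{\min}(T)$, and set $t=\lceil 12/\varepsilon\rceil$.

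\textbf{Guessing.} We guess $Q^*$, or a value $Q\le Q^*$ with $Q\ge(1-\varepsilon/4)Q^*$; such a value can be found among polynomially many candidates, for instance from sums of weights of at most $O(k^2/\varepsilon)$ jobs, or by binary search over a rational grid. For every pair $a<b$, we then guess the set $H_{ab}$ of the $h=O(k^2/\varepsilon)$ jobs with largest $w_j$ among those with $\pi_a(j)\ne\pi_b(j)$, together with their full $k$-tuples $(\pi_1(j),\dots,\pi_k(j))$. This gives $(nm^k)^{O(k^4/\varepsilon)}$ guesses in total, which matches the stated run time. Let $H=\bigcup_{a<b}H_{ab}$ and fix these jobs. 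Every remaining job $j$ with $\pi_a(j)\ne\pi_b(j)$ has $w_j\le w_{\min}(H_{ab})\le Q^*/h$, so each residual job is light for each pair it separates. The residual pairwise target is $Q_{ab}=(1-\varepsilon/2)Q-d_{ab}(H)$.

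\textbf{Residual LP and the restriction.} For each residual job we allow only tuples $\mathbf i\in\Gamma(j)^k$ that use at most $t$ distinct machines. A job placed on tuple $\mathbf i$ is charged $p_{ij}$ once on each machine $i$ the tuple uses; this charge dominates its load in every schedule. The LP then has assignment constraints, aggregate capacity $\le 3tT$ (minus the load of the fixed jobs) on each machine, and diversity constraints $\sum_j w_j\sum_{\mathbf i:i_a\ne i_b}z^j_{\mathbf i}\ge Q_{ab}$ for each pair. The key lemma is that this LP is feasible, and I expect it to be the main obstacle. The labeled solution given by $\pi$ itself satisfies capacity $kT$, not $tT$, because a job may use up to $k$ machines.

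To prove feasibility we argue by contradiction. Among all $z$ satisfying the job and load constraints, take one minimizing $\sum_{a<b}(Q_{ab}-D_{ab}(z))_+^2$ and suppose some shortfall $s_{ab}>0$. The shortfalls serve as weights $\lambda_{ab}=s_{ab}$, which reduces the problem to a single weighted target: we must find $y$ satisfying the job and load constraints with $\sum\lambda_{ab}D_{ab}(y)>\sum\lambda_{ab}D_{ab}(z)$, because the directional derivative toward $y$ then shows that the squared shortfall decreases. To build $y$, start from $\pi$ and for each job merge its tuple onto at most $t$ machines. Choose the merging randomly: for each job, independently keep a random subset of at most $t$ machine classes, for example by hashing schedule indices into $t$ buckets. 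A pair that was separated is then merged with probability about $1/t\le\varepsilon/12$, so the weighted diversity is at least $(1-\varepsilon/12)\sum\lambda_{ab}\cdot(\text{target})$. This beats the current value, since the pairs with positive weight are exactly those strictly below target, and the $(1-\varepsilon/2)$ slack leaves room to spare. The charged load of a machine is at most its load summed over the schedules whose indices land in the chosen bucket, and the expected value of this is about $kT/t$ times the number of buckets hitting the machine. Averaging then gives a deterministic $y$ with charged load $\le 3tT$ and weighted diversity above $\sum\lambda_{ab}D_{ab}(z)$. Convexity of the feasible set lets us mix $y$ with $z$ and obtain the contradiction. Making the load averaging and the diversity averaging hold simultaneously is the delicate point. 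I would handle it by splitting the loss budget, using Markov-type bounds on the $\lambda$-weighted expectation and adding $tT$ per machine to absorb one overflow.

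\textbf{Rounding.} We round by iterative relaxation as in Lemma~\ref{lem:multiplicity-rounding}. At an extreme point we fix integral variables. When none are integral, a counting argument over the $m$ load constraints and the $\binom k2$ diversity constraints shows that either some load constraint has fractional support with total processing at most $O(t)\cdot p_{\max}$, and we drop it, or there are only $O(k^2)$ diversity constraints left, which we drop. Dropping a load constraint adds at most $t$ extra jobs per tuple-machine, hence $O(T)$ extra load; tuning the constants gives total load at most $3tT+T$. Each dropped diversity constraint loses at most $O(1)$ light jobs of weight $\le Q^*/h$ per pair, which costs at most $\varepsilon Q^*/4$ once $h$ is chosen large enough. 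Reading off the $k$ schedules from the integral tuples, each schedule has load at most its charged load, giving makespan $(3t+1)T$. The diversity is at least $(1-\varepsilon/2)Q-\varepsilon Q^*/4\ge(1-\varepsilon)Q^*$.
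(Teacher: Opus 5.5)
Your reduction of the simultaneous targets to a single $\lambda$-weighted target via the squared-shortfall potential is the paper's, and it is sound. Even your weaker requirement $\sum\lambda_{ab}D_{ab}(y)>\sum\lambda_{ab}D_{ab}(z)$ suffices.

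The gap is in how you build $y$: merging each job's reference tuple separately cannot bound the charged load. You say $\pi$ violates the capacity ``because a job may use up to $k$ machines'', but that is not the real problem. Under the reference tuples the charged load of machine $i$ is $\sum_j p_{ij}\mathbf 1[\exists a:\pi_a(j)=i]$, and this can be $kT$ even when every job uses only two machines. Concretely, for each $a\in[k]$ let $J_a$ be jobs with $\Gamma(j)=\{i,i'\}$, $p_{i'j}=0$ and $\sum_{j\in J_a}p_{ij}=T$. Let $\pi_a$ put $J_a$ on $i$, and let every $\pi_b$ with $b\ne a$ put $J_a$ on $i'$. No tuple needs merging, and hashing the two classes into $t$ buckets merges them only with probability $1/t$. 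So the expected charged load of $i$ stays about $kT\gg 3tT$. Derandomization and Markov bounds do not help, because the expectation of your random tuples is itself an LP point, and that expectation is simply too large. In fact, suppose $y$ keeps each reference separation of each job with probability $1-1/t$. The only tuple avoiding $i$ is $(i',\dots,i')$, so $y$ must keep $i$ in every job's tuple with that probability. This forces charged load at least $(1-1/t)kT$. Hence, when $k\gg t$, no scheme that approximately preserves the reference separation pattern can work.

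The missing idea has two parts, and the paper supplies both.
\begin{itemize}
\item \textbf{Global load control.} The paper greedily selects one common list $\rho_1=\pi_1,\rho_2,\dots,\rho_t$ of reference schedules. Each light job may use only machines in $S_j=\{\rho_s(j)\}$. Then every used job--machine pair occurs in a selected schedule, so every machine has charged load at most $tT$ deterministically.
\item \textbf{Recoloring instead of preserving.} Each light job receives a maximum $\lambda$-weighted cut $g_{\ell_j}$ of the complete graph on $[k]$, using its $\ell_j=|S_j|$ available machines as colors. This can far exceed its reference cut $D_j$. In the example, a selected $J_a$ job gets $g_2\ge W_\lambda/2$ rather than $\deg(a)$.
\end{itemize}
The analysis uses two recoloring bounds, $g_{\ell+1}-g_\ell\ge (W_\lambda-g_\ell)/(2\ell)$ and $g_{\ell_j}\ge D_j-2W_\lambda u_j/\ell_j$. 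Averaging over the next schedule with weights $\deg(a)/(2W_\lambda)$ and applying Cauchy--Schwarz gives $F_{s+1}-F_s\ge (D-F_s)^2/(4D)$. Hence $F_t\ge(1-4/(t+3))D$.

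This also fixes your constants. The LP capacity can be $tT$. Dropping a load constraint once its fractional slack is at most $2tT$ then yields $3tT$ for unguessed jobs, plus $T$ for guessed ones. If you put $3tT$ in the LP instead, the counting argument only guarantees a droppable constraint with slack of order $tT$, not $O(T)$, so you do not land at $(3t+1)T$.

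Finally, $w_j\le Q^*/h$ for unguessed separating jobs holds only for pairs still active after guessing. A pair with $d(\pi_a,\pi_b)\gg Q^*$ can have large unguessed contributors, so the coefficient bound must be restricted to active pairs.
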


We start from the natural labeled $k$-tuple LP. 
For each job~$j$ and tuple $\mathbf{i} = (i_1,\ldots,i_k)\in\Gamma(j)^k$, we introduce a variable~$z^j_{\mathbf{i}}$ that indicates whether job~$j$ is placed on machine~$i_a$ in schedule~$a$ for all~$a\in [k]$.
Together, the tuples assigned to the jobs encode~$k$ labeled schedules.
For a pair $1\le a<b\le k$, let~$q^{a,b}_{\mathbf{i} ,j} = w_j \mathbf{1}[i_a\ne i_b]$; this is the contribution of job~$j$ to the diversity of pair~$(a,b)$ when the tuple~$\mathbf{i}$ places~$j$ on different machines in coordinates~$a$ and~$b$.
The diversity of pair $(a,b)$ is therefore the linear expression $\sum_{j\in J} \sum_{\mathbf{i}\in \Gamma(j)^k} q_{\mathbf{i},j}^{a,b} z_{\mathbf{i}}^j$. Thus, requiring minimum pairwise diversity at least $Q$ can be expressed by~$\binom{k}{2}$ linear covering constraints.

Set $\varepsilon_0=\varepsilon/3$, $r=\binom{k}{2}$, $h=\left\lceil\frac{2r}{\varepsilon_0}\right\rceil$, and $t=\left\lceil\frac4{\varepsilon_0}\right\rceil$.
We first guess the largest contributors to each pairwise diversity, so that the contributions remaining in the LP are small enough to control the rounding loss.
For a guessed diversity target $Q>0$, our algorithm guesses, for every pair of schedules, the $h$ jobs that are the largest positive diversity contributors in an optimal solution, or all positive contributors if there are fewer than $h$, together with their assigned tuples.
We discard guesses that are inconsistent or whose fixed jobs already violate a makespan constraint.

For a fixed remaining guess, let $G$ be the set of guessed jobs and let $\tau(j)=(\tau_1(j),\ldots,\tau_k(j))$ be the guessed tuple for each $j\in G$.
Define 
\begin{equation*}
Q_{a,b}^{\mathrm{fix}}
=\sum_{j\in G}w_j\mathbf1[\tau_a(j)\ne\tau_b(j)]
\qquad\text{and}\qquad
Q_{a,b}^{\mathrm{res}}=(Q-Q_{a,b}^{\mathrm{fix}})^+ \enspace .
\end{equation*}
A pair is \emph{active} when $Q_{a,b}^{\mathrm{res}}>0$, and \emph{inactive} otherwise.
The guessed tuples are fixed; they add at most $T$ to the load of each machine in each schedule.

The residual LP assigns tuples only to jobs in $J\setminus G$.
For each such job $j$, we keep only eligible tuples whose contribution to every active pair is at most~$Q/h$.
To obtain a makespan bound independent of $k$, we further restrict these tuples to use at most $t$ distinct machines.
Then, let $A_j$ denote the set of eligible tuples, so
\begin{equation*}
  A_j=
    \left\{
    \mathbf i\in\Gamma(j)^k \quad\middle|\quad
    |S(\mathbf i)|\le t,\quad
    q_{\mathbf i,j}^{a,b}\le Q/h
    \text{ for every active pair }(a,b)
    \right\} \enspace .
\end{equation*}
For tuple $\mathbf{i}=(i_1,\ldots,i_k)$ denote with $S(\mathbf{i})$ the set of different machines that appear in $\mathbf{i}$, so $S(\mathbf{i}) = \{i_1,\ldots,i_k\}$.
Note that $|S(\mathbf{i})|$ can be less than~$k$.
We define the following residual \(k\)-tuple LP:
\begin{equation}
\label{lp:res}\tag{Res-Tuple-LP}
\begin{aligned}
 \sum_{\mathbf{i}\in A_j}z_{\mathbf{i}}^j&=1
     &&\forall j\in J\setminus G,\\
 \sum_{j\in J\setminus G}\ \sum_{\mathbf{i}\in A_j:\,i\in S(\mathbf{i})}
                  p_{ij}z_{\mathbf{i}}^j&\le tT
     &&\forall i\in M,\\
 \sum_{j\in J\setminus G}\ \sum_{\mathbf{i}\in A_j}
                  q_{\mathbf{i},j}^{a,b}z_{\mathbf{i}}^j&\ge(1-\varepsilon_0)Q_{a,b}^{\mathrm{res}}
     &&\forall\text{ active }(a,b),\\
 z_{\mathbf{i}}^j&\ge0
     &&\forall j\in J\setminus G,\ \mathbf{i} \in A_j \enspace .
\end{aligned}
\end{equation}
The second constraint counts \(p_{ij}\) once whenever machine \(i\) appears in the tuple, regardless of how many coordinates use it. This charged load bounds the load on \(i\) in every schedule.
The guessed jobs are handled separately and add at most \(T\) to the load on each machine in each schedule.
The main technical step is showing that \eqref{lp:res} is feasible.
\begin{restatable}[Feasibility after guessing]{lemma}{feas}\label{lem:feas}
  Suppose that schedules $\pi_1,\ldots,\pi_k$ have makespan at most~$T$ and pairwise diversity at least~$Q$.
  For the correct heavy-contributor guess from these schedules, \eqref{lp:res} is feasible.
\end{restatable}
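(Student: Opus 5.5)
The plan follows the route sketched in the introduction: a contradiction argument over a shortfall-minimizing point. First I would verify that the guessed data are consistent with $\pi_1,\dots,\pi_k$, and that every job $j\in J\setminus G$ has small contributions. If a pair $(a,b)$ has fewer than $h$ positive contributors, then all of them lie in $G$, so $Q^{\mathrm{fix}}_{a,b}\ge Q$ and the pair is inactive. Hence for an active pair the $h$ largest contributors are guessed. Since $Q^{\mathrm{fix}}_{a,b}<Q$ and $Q^{\mathrm{fix}}_{a,b}$ is at least the sum of these $h$ contributions, the smallest of them is below $Q/h$. Every remaining job therefore contributes at most $Q/h$ to $(a,b)$. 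Moreover, $\sum_{j\notin G} w_j\mathbf 1[\pi_a(j)\neq\pi_b(j)]\ge Q-Q^{\mathrm{fix}}_{a,b}=Q^{\mathrm{res}}_{a,b}$ for every active pair.

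The true tuples $\pi(j)=(\pi_1(j),\dots,\pi_k(j))$ satisfy the contribution bound but may use more than $t$ machines. Their charged load is also only bounded by $kT$. So the point $\pi$ itself is not feasible, and this is why I argue by contradiction. Let $P$ be the polytope given by the assignment and load constraints of \eqref{lp:res}. It is nonempty by the construction below (take all $\lambda$ equal). Let $x\in P$ minimize $F(x)=\sum_{(a,b)\text{ active}}\bigl(((1-\varepsilon_0)Q^{\mathrm{res}}_{a,b}-D_{a,b}(x))^+\bigr)^2$, where $D_{a,b}(x)$ is the left-hand side of the covering constraint. If $F(x)>0$, set $\lambda_{a,b}=((1-\varepsilon_0)Q^{\mathrm{res}}_{a,b}-D_{a,b}(x))^+\ge0$, not all zero. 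Since $F$ is convex and differentiable with gradient $-2\lambda$ in the $D$-coordinates, it suffices to exhibit $y\in P$ with
\begin{equation*}
\sum_{(a,b)}\lambda_{a,b}D_{a,b}(y)\;\ge\;\sum_{(a,b)}\lambda_{a,b}(1-\varepsilon_0)Q^{\mathrm{res}}_{a,b}\;>\;\sum_{(a,b)}\lambda_{a,b}D_{a,b}(x).
\end{equation*}
Then a small step from $x$ toward $y$ stays in $P$ by convexity and strictly decreases $F$, which is a contradiction. The key point is that $y$ may depend on $\lambda$ and only needs to satisfy one aggregated covering constraint.

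To build $y$, I would take a convex combination over a random compression of the tuples $\pi(j)$. Such a point is automatically fractional, so no derandomization is needed. First attempt: draw a uniformly random set $R\subseteq[k]$ of $t$ coordinates and a map $f\colon[k]\to R$ (chosen with knowledge of $\lambda$). Job $j$ then receives the tuple $(\pi_{f(1)}(j),\dots,\pi_{f(k)}(j))$. This tuple uses at most $t$ machines, and its contributions are still at most $Q/h$, since each coordinate pair is some pair of $\pi$. For the load, machine $i$ is charged for $j$ only if $\pi_c(j)=i$ for some $c\in R$. This happens with probability at most $\sum_{c:\pi_c(j)=i}t/k$, which gives expected charged load at most $(t/k)\sum_c L_i(\pi_c)\le tT$. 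For diversity, a pair $(a,b)$ keeps its $\pi$-contribution whenever $f(a)=a$ and $f(b)=b$, and possibly in other cases as well.

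The main obstacle is the diversity step: showing that for suitable $f$, depending on $\lambda$, the expected $\lambda$-weighted diversity is at least $(1-\varepsilon_0)\sum\lambda_{a,b}Q^{\mathrm{res}}_{a,b}$. I would first try a non-uniform choice of $R$, biased toward coordinates in pairs with large $\lambda_{a,b}$, while keeping each marginal $\Pr[c\in R]\le t/k$. Where that fails, I would take the $t\approx 4/\varepsilon_0$ slack as the room for the triangle-inequality losses incurred when coordinates are merged.
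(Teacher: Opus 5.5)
\textbf{Verdict: genuine gap.} Your outer framework matches the paper's. You minimize the sum of squared shortfalls over the polytope $\mathcal P$ of job and load constraints, take $\lambda_{a,b}$ to be the shortfalls, and get a contradiction from any $y\in\mathcal P$ that meets the single $\lambda$-weighted target. Your first paragraph is also fine: unguessed jobs contribute at most $Q/h$ to active pairs of $\pi$ and together supply $Q^{\mathrm{res}}_{a,b}$.

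The problem is that you never construct $y$, and that construction is the whole substance of the lemma. The route you sketch also has concrete defects.
\begin{itemize}
\item \emph{Membership in $A_j$.} Your claim that compressed tuples lie in $A_j$ ``since each coordinate pair is some pair of $\pi$'' fails for unguessed jobs with $w_j>Q/h$. The guess only prevents such a job from separating \emph{active} pairs of $\pi$. The pair $(f(a),f(b))$ may be inactive and separated by that job, so the active pair $(a,b)$ receives coefficient $w_j>Q/h$. The fix is to give these jobs a constant tuple. This also shows $\mathcal P\neq\emptyset$ without relying on your unfinished construction.
\item \emph{The marginal condition is unnecessary.} Suppose $R$ is any fixed set with $|R|\le t$ and each tuple uses only machines in $\{\pi_c(j): c\in R\}$. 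Then the charged load on machine $i$ is at most $\sum_{c\in R}L_i(\pi_c)\le tT$ deterministically. So you need no bound on $\Pr[c\in R]$, and $R$ may be chosen freely.
\item \emph{The triangle-inequality fallback cannot work.} Let $L=\{j\notin G: w_j\le Q/h\}$ and $\delta_{a,b}=\sum_{j\in L}w_j\mathbf 1[\pi_a(j)\ne\pi_b(j)]$. The triangle inequality only gives $\delta_{f(a),f(b)}\ge\delta_{a,b}-\delta_{a,f(a)}-\delta_{b,f(b)}$. If every schedule has private light jobs, the $\delta$'s are nearly uniform, the subtracted terms are as large as $\delta_{a,b}$ itself, and the bound is vacuous once $k\gg t$.
\item \emph{Retention is negligible.} With a uniformly random $R$, the only retention event you identify, $a,b\in R$, has probability about $(t/k)^2$.
\end{itemize}
Losses have to be controlled by a cut argument instead: a good $t$-coloring of the $\lambda$-weighted graph cuts all but roughly a $1/t$ fraction of the weight.

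The missing idea is to decouple each job's tuple from $\pi$, and from any single map $f$ shared by all jobs.
\begin{itemize}
\item \emph{Per-job colorings.} Fix a common list of selected reference schedules. Let $S_j$ be the machines they use for a light job $j$, and $\ell_j=|S_j|$. Any coloring of the complete graph on $[k]$ with edge weights $\lambda_{a,b}$, using colors from $S_j$, is an admissible tuple: it uses at most $t$ machines, its coefficients are at most $w_j\le Q/h$, and its load is at most $tT$ as above. So job $j$ can contribute $w_j g_{\ell_j}$, where $g_\ell$ is the maximum $\lambda$-weight cut by an $\ell$-coloring, and the coloring may differ from job to job.
\item \emph{Greedy selection.} The paper selects schedules greedily, starting from $\pi_1$, to maximize $F=\sum_{j\in L}w_jg_{\ell_j}$.
\item \emph{Recoloring bounds.} It proves $g_{\ell+1}-g_\ell\ge(W_\lambda-g_\ell)/(2\ell)$ and $g_{\ell_j}\ge D_j-2W_\lambda u_j/\ell_j$. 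Here $D_j$ is the $\lambda$-cut of job $j$'s reference coloring, and $u_j$ is the degree-weighted fraction of vertices whose reference machine is not yet in $S_j$.
\item \emph{Gain per step.} Averaging the greedy gain with weights $\deg(a)/(2W_\lambda)$ and applying Cauchy--Schwarz gives $F_{s+1}-F_s\ge(D-F_s)^2/(4D)$, where $D=\sum_{j\in L}w_jD_j\ge\sum_{a<b}\lambda_{a,b}Q^{\mathrm{res}}_{a,b}$.
\item \emph{Conclusion.} After $t$ selections the relative deficit is at most $4/(t+3)\le\varepsilon_0$. This is exactly where $t=\lceil 4/\varepsilon_0\rceil$ is used.
\end{itemize}
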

\begin{proof}[Proof sketch]
  We refer to $\pi_1, \dots, \pi_k$ as the \emph{reference family}.
  If $t\ge k$, then the tuples of the reference family directly establish feasibility.
  Otherwise, we reduce the simultaneous diversity requirements to a single weighted objective.
  Let $\mathcal P$ be the nonempty polytope of nonnegative points satisfying the job and load (but not necessarily the diversity) constraints of \eqref{lp:res}.
  Write $b_{a,b}=(1-\varepsilon_0)Q_{a,b}^{\mathrm{res}}$ for the diversity target of pair $(a,b)$ and $d_{a,b}(z)$ for its residual diversity at $z$.
  Choose $z\in\mathcal P$ minimizing
  \begin{equation*}
	\Phi(z) = \sum_{a<b}\bigl((b_{a,b}-d_{a,b}(z))^+\bigr)^2 \enspace .
  \end{equation*}
  If $\Phi(z)=0$, the lemma follows.
  Otherwise, give each pair weight $\lambda_{a,b}=(b_{a,b}-d_{a,b}(z))^+$.
  We will construct $y\in\mathcal P$ such that
  \begin{equation}
  \label{eq:weighted-witness-2}
	\sum_{a<b}\lambda_{a,b}d_{a,b}(y)
	\ge
	\sum_{a<b}\lambda_{a,b}b_{a,b} \enspace .
  \end{equation}
	Although $y$ does not need to meet every target individually, this inequality ensures that a sufficiently small move from $z$ toward $y$ decreases $\Phi$, contradicting the minimality of~$z$.
	
	The construction obtains this weighted diversity while controlling load. 
    Let the unguessed \emph{light} jobs be $L=\{j\in J\setminus G \mid w_j\le Q/h\}$. 
    For the correct guess, these jobs retain enough diversity in the reference schedules to meet every active residual
    target. 
    Define
    \begin{equation*}
    D = \sum_{a<b}\lambda_{a,b} \sum_{j\in L}w_j\mathbf{1}[\pi_a(j)\ne\pi_b(j)] \enspace .
    \end{equation*}
    Thus, $D$ is the reference family's pairwise diversity on the
    light jobs, aggregated with pair weights $\lambda_{a,b}$.
    We refer to this as \emph{weighted diversity}.
	We greedily select a common collection of~$t$ schedules from $\{\pi_1,\ldots,\pi_k\}$, starting with $\pi_1$, to determine the available machines for each light job.
	Consider the complete graph on the $k$ schedule labels with edge weights~$\lambda_{a,b}$.
	For each light job, a coloring using its available machines as colors determines its tuple: endpoints with different colors correspond to schedules that differ on that job.
	An optimal coloring (that is, with maximum total weight on dichromatic edges) therefore maximizes the job's weighted diversity.
	Each greedy step selects a reference schedule maximizing the total increase in this quantity over all light jobs.
	Every other unguessed job $j$ receives its constant-$\pi_1(j)$ tuple.
	
	Every resulting tuple belongs to $A_j$.
	Moreover, every used job--machine pair occurs in at least one of the $t$
	reference schedules, so the charged load of each machine is at most~$tT$.
	After~$t$ greedy selections, the resulting point $y$ has weighted diversity at least $(1-\varepsilon_0)D$.
	Since $D\ge\sum_{a<b}\lambda_{a,b}Q_{a,b}^{\mathrm{res}}$, this proves~\eqref{eq:weighted-witness-2} and yields the required contradiction.
\end{proof}

Having now established that the correct heavy-contributor guess makes \eqref{lp:res} feasible, we show next that we can round such a feasible LP to assign the unguessed jobs.

\begin{restatable}[Rounding after guessing]{lemma}{roundingrevised}\label{lem:rounding-revised}
    A deterministic polynomial-time algorithm exists that iteratively rounds feasible solutions of~\eqref{lp:res} such that exactly one tuple is assigned to each unguessed job, every machine has unguessed-job load at most~$3tT$ in every resulting schedule, and the unguessed jobs contribute diversity at least~$(1-\varepsilon_0)Q_{a,b}^{\mathrm{res}}-\varepsilon_0 Q$ to every active pair~$(a,b)$.
\end{restatable}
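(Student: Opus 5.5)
We round \eqref{lp:res} by iterative relaxation in the style of Korupolu et al.~\cite{KorupoluMRT2015}, with a separate token-counting argument for each of the two kinds of non-assignment constraints. We maintain a residual LP over the still-unassigned jobs. It keeps the job constraints, the charged-load constraints on some set $M'$ of machines, and the covering constraints on some set $P'$ of active pairs. The right-hand sides are reduced as tuples get fixed.

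In each iteration we compute an extreme point $z$ and apply the following rules in order.
\begin{itemize}
\item[(i)] Delete every variable with $z^j_{\mathbf i}=0$.
\item[(ii)] If some $z^j_{\mathbf i}=1$, fix tuple $\mathbf i$ for $j$. We then subtract its charged load from every machine in $S(\mathbf i)$ and its contribution from every pair.
\item[(iii)] Otherwise, drop either a load constraint of a machine $i$ that has at most three fractional jobs charging to it, or a covering constraint for some pair.
\end{itemize}

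The core step is the counting argument showing that rule (iii) always applies. At a fractional extreme point with no integral variable, every remaining job has at least two positive tuples. The number of positive variables equals the number of tight linearly independent constraints, which is at most $|J'|+|M'|+|P'|$. Hence the number of variables is at least $2|J'|$, and so $|J'|\le |M'|+|P'|$.

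We then count tokens. Each job $j$ has positive tuples, and each tuple charges at most $t$ machines. So summed over machines, the number of (machine, fractional job) incidences is at most about $kt$ times the number of variables, which is too crude. Instead we use the standard trick: give each job two tokens and distribute them to the tight load constraints. A load constraint on a machine with at least $4$ fractional incident jobs receives fractional mass from the row $\sum p_{ij}z$.

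We expect this to be the main obstacle. Because a tuple charges up to $t$ machines, the naive Korupolu bound gives an additive $(t+1)$-type loss per machine rather than a constant. I would try the support-graph argument in which each variable is charged to the job row and to the machine rows it touches, and compare $\sum_i \#\{\text{jobs on }i\}$ with the rank. If some machine has at most $3t$ units of fractional charged load, or at most $3$ jobs with at most $p_{\max}\le T$ each, dropping its constraint costs at most an additive $2tT$ beyond the $tT$ budget. That yields $3tT$.

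Whenever the machine count fails, there are at most $r=\binom{k}{2}$ covering constraints, so the count forces a point where a covering constraint can be dropped. Since $r$ is fixed, we drop at most $r$ of them in total.

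Finally we bound the losses. The charged load on each machine is never exceeded except after its constraint is dropped. After that, the extra load comes only from the fractional jobs alive at that moment, each contributing at most $p_{ij}\le T$, with the charged load upper-bounding the per-schedule load. This gives at most $3tT$ in every schedule.

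For diversity, suppose a covering constraint for pair $(a,b)$ is dropped. The jobs still fractional then can each lose contribution at most $Q/h$, because every tuple in $A_j$ contributes at most $Q/h$ to active pairs. We must bound how many such jobs there are. From the counting, the number of fractional jobs at the drop time is at most $|M'_{\text{heavy}}|+|P'|$. I would instead use the cleaner argument that at most $2r$ fractional jobs remain when only covering constraints stay binding, since machines with few jobs were already dropped. That makes the loss at most $2r\cdot Q/h\le\varepsilon_0 Q$ by $h=\lceil 2r/\varepsilon_0\rceil$. The remaining pairs keep their constraint until termination, which gives at least $(1-\varepsilon_0)Q^{\mathrm{res}}_{a,b}-\varepsilon_0Q$.

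Polynomial running time follows since each iteration removes a variable or a constraint and $|A_j|\le m^k$ with $k$ fixed.
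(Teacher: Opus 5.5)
\textbf{There is a genuine gap: the key counting step is left unproved.} Your overall shape matches what is needed: take an extreme point, fix integral variables, otherwise drop one load or covering constraint and pay for it at drop time. But the step that carries the proof is never established. That step is: at every all-fractional extreme point, some constraint can be dropped at a cost compatible with $3tT$ and $\varepsilon_0 Q$. You flag this yourself as the main obstacle and leave it open.

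\emph{Load side.} Rule (iii) with ``at most three fractional jobs'' need not apply. A tuple charges up to $t$ machines, so the machine incidences can total about $t|E|$, while the rank bound only gives $|M'|\ge |E|/2-r$. Every remaining machine may therefore see roughly $2t$ fractional variables. Your fallback, ``at most $3t$ units of fractional charged load,'' is vacuous, since the charged load $\sum p_{ij}z^j_{\mathbf i}$ is at most $tT$ anyway. The rounding error is governed by $\sum p_{ij}(1-z^j_{\mathbf i})$, and no counting shows a small value of that exists.

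\emph{Diversity side.} Covering constraints are dropped with no criterion. The loss bound rests on the assertion that at most $2r$ fractional jobs remain ``since machines with few jobs were already dropped.'' In your scheme, however, a covering constraint is dropped precisely while load constraints are still present. Nothing you prove bounds $|J'|$ by $O(r)$ at that moment, so the dropped pair could lose up to $|J'|\cdot Q/h$.

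\textbf{The missing idea is the right measure, not a constant loss per machine.} The budget $3tT$ already permits an additive $2tT$, so an $O(t)$ loss per machine is fine. Measure a load constraint by its distance to integrality,
$\Delta_i(z)=\sum_{(j,\mathbf{i}):\, i\in S(\mathbf{i})} p_{ij}(1-z^j_{\mathbf{i}})$.
Since $p_{ij}\le T$ and $|S(\mathbf{i})|\le t$, this gives $\sum_i\Delta_i(z)\le tT\sum_{(j,\mathbf{i})}(1-z^j_{\mathbf{i}})=tT(|E|-|J'|)$.

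The paper pairs this with scaled covering masses $\beta V_{a,b}(z)$, where $V_{a,b}(z)=\sum q^{a,b}_{\mathbf{i},j}z^j_{\mathbf{i}}$ and $\beta=2tT/(\varepsilon_0 Q)$.
\begin{itemize}
\item Because $q^{a,b}_{\mathbf{i},j}\le Q/h$ and $h\ge 2r/\varepsilon_0$, these satisfy $\beta\sum V_{a,b}(z)\le tT|J'|$.
\item Adding the two bounds and using $|E|\le |J'|+|R'|$ and $|E|\ge 2|J'|$, the total over the $|R'|$ remaining constraints is at most $tT|E|\le 2tT|R'|$.
\item By pigeonhole, either some load constraint has $\Delta_i\le 2tT$, so the final load is at most $tT+2tT$.
\item Or some covering constraint has $V_{a,b}\le\varepsilon_0 Q$, so the loss is at most $\varepsilon_0 Q$.
\end{itemize}

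Your two-phase plan can also be repaired with this measure. Suppose every remaining load constraint has $\Delta_i>2tT$. Then $2tT|M'|<tT(|E|-|J'|)$ together with the rank bound gives $|E|-|J'|<2r$. Hence $|J'|<2r$, and every covering constraint has remaining mass at most $(Q/h)|J'|\le\varepsilon_0 Q$. This is exactly the counting your ``at most $2r$ fractional jobs'' claim needs; you just have to actually carry it out.
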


We use iterative rounding, ensuring that every iteration deletes at least one variable or drops a constraint.
In particular, in an extreme-point solution with strictly fractional variables, we drop the load constraint of machine~$i$ if $\sum_{j, \mathbf{i} \colon i \in S(\mathbf{i})} p_{i j} (1 - z_\mathbf{i}^j) \leq 2 t T$, or we drop the diversity constraint of an active pair~$(a, b)$ if $\sum_{j, \mathbf{i}} q_{\mathbf{i}, j}^{a, b} z_{\mathbf{i}}^j \leq \varepsilon_0 Q$.

\subsection{A \texorpdfstring{$(2+\varepsilon)$}{(2 + epsilon)}-makespan approximation for max-min diversity}
\label{subsec:2eps-apx-max-min-div}
In this section, we construct~$k$ schedules whose makespan is close to the target~$T$ while having bounded minimum pairwise diversity.
In particular, for constant~$k$ and~$\varepsilon$, our algorithm computes~$k$ schedules of makespan at most~$(2+\varepsilon)T$ and minimum pairwise diversity at least $\tfrac{1}{2}(1-\varepsilon) \OPT_k^{\min}(T)$.
The makespan factor matches, up to~$\varepsilon$, the best-known factor~$2$ for makespan minimization on unrelated machines~\cite{LenstraST1990}, which has resisted improvement for over three decades.

We construct the schedules sequentially.
Suppose that schedules~$\sigma_1,\dots,\sigma_r$ have already been constructed, and consider adding a new schedule~$\sigma$.
As before, the crucial observation is that the diversity of~$\sigma$ from each previous schedule is a linear function of the assignment of~$\sigma$.
Indeed, define coefficients~$q_{ij}^a = w_j \cdot \mathbf{1}[i \neq \sigma_a(j)]$ representing the contribution of job~$j$ to the diversity with respect to~$\sigma_a$, when schedule~$\sigma$ places~$j$ on machine~$i$ while schedule~$\sigma_a$ places it on a machine different from~$i$.
Summing over the jobs recovers exactly the pairwise diversity:~$d(\sigma,\sigma_a) = \sum_{j\in J} q_{\sigma(j),\,j}^a$.
Consequently, for a target diversity~$Q$, the new schedule~$\sigma$ has diversity at least~$Q$ from every previously constructed schedule if and only if for all~$a\in [r]$ we have~$\sum_{j\in J} q_{\sigma(j),\,j}^a \geq Q$.

Requiring~$\sigma$ to be far from all~$r$ previous schedules is thus captured by~$r$ covering constraints, one per previous schedule.
Together with the makespan bound, finding such a~$\sigma$ is an unrelated-machine scheduling problem with makespan and covering constraints.
We isolate this assignment problem and solve it in full generality, treating the covering contributions as arbitrary coefficients~$q_{ij}^a \in \mathbb{Q}_{\geq0}$ rather than specific diversity coefficients.
As before, let~$\Gamma(j) = \{i\in M\mid p_{ij}\leq T_i\}$ be the set of eligible machines for job~$j\in J$.
In particular, a machine~$i$ with~$p_{ij}>T_i$ cannot run~$j$ without violating its makespan target, so we restrict the assignment variables to~$x_{ij}$ with~$i\in\Gamma(j)$.

Assigning each job to a single eligible machine while meeting all makespan and covering targets is captured by the following LP, whose integral solutions are exactly the feasible assignments:
\begin{alignat}{3}
    \sum_{i\in \Gamma(j)} x_{ij} \;&=\; 1
    \qquad&& \forall j\in J, \tag{Cov-LP}\label{eq:max-min-div-LP}\\[-0.5mm]
    \sum_{j:\, i\in \Gamma(j)} p_{ij}x_{ij} \;&\le\; T_i \qquad&& \forall i\in M,\notag\\[-1mm]
    \sum_{j\in J}\sum_{i\in \Gamma(j)} q^a_{ij}x_{ij} \;&\ge\; Q_a \qquad&& \forall a\in [r],\notag\\[-0.5mm] x_{ij} \;&\ge\; 0
    \qquad&& \forall j\in J,\ i\in\Gamma(j) \enspace .\notag
\end{alignat}
The makespan constraints upper bound the load of each machine, while the~$r$ covering constraints lower bound the total contribution collected.
The main tool for establishing our bicriteria approximation for max-min diversity is the following proposition which we prove in~\Cref{app:covering-constraints} by combining a guessing argument with iterative rounding.
\begin{restatable}{proposition}{roundingcoveringconstraints}
\label{prop:rounding-covering-constraints}
  Fix $r \in \mathbb{N}_0$ and $\varepsilon\in (0,1)$.
  If~$\eqref{eq:max-min-div-LP}$ admits an integral feasible solution, then an integral assignment~$X$ can be computed in time~$(nm)^{\mathcal{O}(r^2/\varepsilon^2)}\cdot |I|^{O(1)}$, where $|I|$ is the input encoding length, satisfying~$\sum_{i\in \Gamma(j)} X_{ij} = 1$ for all~$j\in J$, $\sum_{j\in J:i\in \Gamma(j)} p_{ij} X_{ij} \leq (2+\varepsilon) T_i$ for all~$i\in M$, and~$\sum_{j\in J} \sum_{i\in \Gamma(j)} q_{ij}^a X_{ij} \geq (1-\varepsilon) Q_a$ for all~$a\in [r]$.
\end{restatable}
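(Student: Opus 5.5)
Fix an integral feasible solution $X^*$ of~\eqref{eq:max-min-div-LP}. The plan has two stages: a guessing stage that removes the few jobs whose covering contributions are too large to be lost in rounding, followed by an iterative-relaxation rounding in the style of Lenstra, Shmoys, and Tardos and of Korupolu et al.~\cite{KorupoluMRT2015}. The rounding may drop only a bounded number of covering constraints, and each dropped constraint costs only a small amount.

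\textbf{Guessing.} Set $h=\lceil r/\varepsilon\rceil$ (the final choice of $h$ will be a polynomial in $r/\varepsilon$). For each $a\in[r]$, guess the $h$ jobs with the largest contribution $q^a_{X^*(j),j}$ in $X^*$, together with their machines. This gives at most $(nm)^{O(rh)}$ guesses.

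Fix these jobs and subtract their loads from the $T_i$ and their contributions from the $Q_a$, giving residual targets $T_i'$ and $Q_a'$. Each of these jobs is assigned in $X^*$, so the residual instance is feasible and respects $T_i'$. For every remaining job $j$, delete the pairs $(i,j)$ with $q^a_{ij}>\theta_a$ for some $a$, where $\theta_a$ is the $h$-th largest contribution to constraint $a$ in $X^*$. Also delete the pairs with $p_{ij}>T_i'$. The remainder of $X^*$ survives both deletions, since all of its residual contributions are at most $\theta_a$. Moreover, $h\theta_a\le Q_a$ whenever $h$ guessed contributions reach $\theta_a$. If fewer than $h$ jobs contribute positively, all contributors are guessed and the constraint is already met.

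\textbf{Iterative relaxation.} Solve the residual LP and take an extreme point $x$. Repeat the following steps.
\begin{enumerate}
\item Delete every variable with $x_{ij}=0$.
\item Fix every variable with $x_{ij}=1$, updating the residual load and coverage targets.
\item If a machine $i$ has at most two fractional jobs left, drop its load constraint. Its load then grows by at most $2\max p_{ij}\le 2T_i$.
\item Otherwise, if some covering constraint has only a small number of fractional variables in its support, drop it.
\end{enumerate}

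The standard counting argument should guarantee progress. Every fractional job has at least two fractional variables in a basic solution, and there are at most $r$ covering constraints. So if no step above applies, the number of variables strictly exceeds the number of tight constraints, which contradicts extremality.

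\textbf{Main obstacle.} The delicate part is step 3 combined with the covering constraints. With up to $r$ tight covering constraints present, a machine may host more fractional jobs than in the pure LST setting. The tentative fix is a refined load-drop rule based on fractional slack rather than on job count: drop machine $i$ when $\sum_j p_{ij}(1-x_{ij})\le (1+\varepsilon)T_i$. Each integrated job then adds at most $T_i$, and the total load stays within $(2+\varepsilon)T_i$.

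This is where the guessing has to pay off. I would also guess the few largest-$p_{ij}$ jobs per machine that could break this accounting, or use an alternative charging that tolerates the $r$ extra tight rows by allowing $O(r/\varepsilon)$ exceptional machines. I expect that showing progress under this rule needs a token argument in which each covering row absorbs only $O(1/\varepsilon)$ tokens.

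For a covering constraint dropped when its support has $O(r)$ fractional variables, the eventual loss is at most $O(r)\theta_a\le\varepsilon Q_a$ once $h=\Theta(r/\varepsilon^2)$. This choice of $h$ gives the claimed running time $(nm)^{O(r^2/\varepsilon^2)}$.
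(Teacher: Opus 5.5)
Your guessing stage, the safe deletion of high-contribution pairs, the loss accounting for dropped covering rows, and the running time all match the paper. The paper thresholds at $Q_a/h$ instead of $\theta_a$, but the two agree for active constraints. Note that $h\theta_a\le Q_a$ holds only for constraints not already covered by the guessed jobs, which is all you need.

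\textbf{The gap is in the rounding, where the proposal is not yet a proof.} Your rule~3 (drop a machine with at most two fractional jobs) only gives $3T_i$, as you note yourself. For the refined slack rule $\sum_j p_{ij}(1-x_{ij})\le(1+\varepsilon)T_i$ you never show that the iteration makes progress. Instead you speculate about a token argument, extra guessing of large jobs, or exceptional machines. Your ``standard counting argument'' paragraph does not fix this, because it has no explicit thresholds. Without a progress argument, nothing guarantees that when no machine rule fires, some covering row has small support. So neither termination nor the $\varepsilon Q_a$ loss is established.

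\textbf{The missing step is a short count, and no extra guessing is needed.} Let $d(i)$ be the number of remaining variables on machine $i$, $M'$ the machines whose load constraint is still present, and $J'$ the unassigned jobs.
\begin{itemize}
\item If your slack rule fails at $i\in M'$, then $p_{ij}\le T_i$ on every eligible pair gives $d(i)-\sum_j x_{ij}>1+\varepsilon$.
\item Sum over all machines, using $d(i)\ge\sum_j x_{ij}$ on dropped machines and $\sum_i x_{ij}=1$ per job. This yields $|E|>(1+\varepsilon)|M'|+|J'|$.
\item Combine this with the rank bound $|E|\le|J'|+|M'|+r$ and with $|E|\ge 2|J'|$ (every fractional job is split). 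The result is $|E|\le 2(1+\varepsilon)r/\varepsilon$; this is exactly \Cref{lem:support-bound}.
\end{itemize}
So once no machine rule applies, every remaining covering row has support $O(r/\varepsilon)$, not $O(r)$ as you wrote. This is precisely why $h$ must be of order $r/\varepsilon^2$. Dropping all these rows loses at most $|E|\cdot Q_a/h\le\varepsilon Q_a$ when $h=\lceil 2(1+\varepsilon)r/\varepsilon^2\rceil$. The remaining pure assignment LP is then finished by Shmoys--Tardos rounding, or by continuing with the same machine rule, which makes progress once $r=0$. This gives $2T_i$ on machines whose constraint survives and $(2+\varepsilon)T_i$ on dropped ones.
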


We now return to the $(2+\varepsilon,\tfrac12 (1-\varepsilon))$-bicriteria approximation for max-min diversity.
We construct~$k$ schedules sequentially by applying~\Cref{prop:rounding-covering-constraints} exactly $k$ times. For a diversity threshold~$Q\le \OPT_k^{\min}(T)$, using a metric argument we show that at each step there exists a schedule of makespan at most~$T$ and diversity at least~$\tfrac12 Q$ from each previously constructed schedule.
\begin{restatable}[Low-makespan max-min diversity]{theorem}{maxmintwomakespan}
\label{thm:max-min-2-makespan}
  Fix $k \in \mathbb{N}_{\ge 2}$ and $\varepsilon\in (0,1)$. There is an algorithm that, for every feasible makespan target~$T$, computes schedules~$\sigma_1,\dots,\sigma_k$ such that $\Cmax(\sigma_a) \leq (2+\varepsilon) T$ for every~$a\in [k]$ and~$d_{\min}(\sigma_1,\dots,\sigma_k) \geq \frac12 (1 - \varepsilon) \OPT_k^{\min}(T)$. The algorithm runs in time~$(nm)^{\mathcal{O}(k^2/\varepsilon^2)}\cdot |I|^{O(1)}$, where $|I|$ is the input encoding length. 
\end{restatable}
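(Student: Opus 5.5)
We construct the schedules sequentially and invoke \Cref{prop:rounding-covering-constraints} once per schedule, with $T_i=T$ for all machines, $r=a-1$ covering constraints at step $a$, coefficients $q^b_{ij}=w_j\mathbf 1[i\ne\sigma_b(j)]$ for $b<a$, and uniform covering targets $Q_b=\tfrac12 Q$. Here $Q$ is a guess for $\OPT_k^{\min}(T)$. Since $\OPT_k^{\min}(T)$ is a sum of a subset of the weights $w_j$, at most $2^n$ values are possible, so we avoid enumerating them. Instead we binary search over the target: for a candidate $Q$ we run the whole procedure, and we accept $Q$ if every one of the $k$ calls returns an assignment. The procedure never fails for $Q\le\OPT_k^{\min}(T)$, and any accepted run is a valid output for its $Q$. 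Binary search over the sorted set of candidate values is therefore not directly available, but we can sidestep this. For example, we can search over the rationals with polynomially many bits, which suffices because the optimum is a sum of input weights. Alternatively, we can take the largest $Q$ among a geometric grid with ratio $(1+\varepsilon')$ between $\min_j w_j$ and $\sum_j w_j$ for positive weights, losing only a factor $1+\varepsilon'$ that is absorbed by rescaling $\varepsilon$. The case $\OPT_k^{\min}(T)=0$ is trivial. The first schedule $\sigma_1$ has no covering constraint; we get it from the proposition with $r=0$, which gives makespan $(2+\varepsilon)T$ since $T$ is feasible.

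The key step is the metric argument that guarantees integral feasibility of \eqref{eq:max-min-div-LP} at every step. Fix an optimal family $\pi_1,\dots,\pi_k$ of makespan at most $T$ with $d(\pi_c,\pi_{c'})\ge Q$ for all $c\ne c'$. Suppose $\sigma_1,\dots,\sigma_{a-1}$ have been built with $a\le k$. For each $\sigma_b$ the ball $\{\pi:d(\pi,\sigma_b)<Q/2\}$ contains at most one $\pi_c$. Indeed, if it contained two of them, the triangle inequality (valid for this pseudometric, as noted in \Cref{sec:preliminaries}) would give $d(\pi_c,\pi_{c'})<Q$. Hence at most $a-1<k$ of the $\pi_c$ lie within distance $Q/2$ of some previous schedule, and some $\pi_c$ satisfies $d(\pi_c,\sigma_b)\ge Q/2$ for every $b<a$. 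This $\pi_c$ is an integral feasible solution of \eqref{eq:max-min-div-LP} with targets $Q_b=Q/2$, because its loads are at most $T$ and $p_{ij}\le T$ on its used pairs, so its variables lie in the eligible set $\Gamma(j)$.

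Applying \Cref{prop:rounding-covering-constraints} with error $\varepsilon$ then yields $\sigma_a$ with every machine load at most $(2+\varepsilon)T$ and $d(\sigma_a,\sigma_b)\ge(1-\varepsilon)Q/2$ for all $b<a$. After $k$ steps, every pair is covered at the later step, so $d_{\min}\ge\tfrac12(1-\varepsilon)Q$. The running time is $k$ calls, each taking $(nm)^{O(k^2/\varepsilon^2)}|I|^{O(1)}$ since $r<k$, times the polynomial overhead of the search over $Q$, which gives the stated bound.

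The main obstacle I expect is the handling of the guess $Q$, that is, making sure the algorithm does not need to know $\OPT_k^{\min}(T)$ exactly. The cleanest fix is the geometric grid with a rescaled $\varepsilon$: we replace $\varepsilon$ by $\varepsilon/3$ in the proposition and use a grid ratio of $1+\varepsilon/3$, so that $(1-\varepsilon/3)^2\ge 1-\varepsilon$. Monotonicity is not needed. We simply try all grid values and output the result of the largest value for which all $k$ calls succeed; the call at the largest grid value below $\OPT_k^{\min}(T)$ is guaranteed to succeed by the argument above. Since the proposition promises an output only when an integral solution exists, we must detect failure when it does not. For this we verify the returned assignment's constraints directly and reject the guess if they are violated or if no assignment is returned.
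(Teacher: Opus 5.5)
Your proposal is correct and follows the paper's proof essentially step for step: you build the schedules sequentially via \Cref{prop:rounding-covering-constraints} with covering targets $\tfrac12 Q$, you use the ball/pigeonhole triangle-inequality argument to certify an integral feasible solution at each step, and you take a geometric grid over $Q$ with $\varepsilon_0=\varepsilon/3$, directly verifying each run and discarding the ones that fail. The binary-search digression is unnecessary (bisection would in fact work, since every $Q\le\OPT_k^{\min}(T)$ is guaranteed to succeed), but the grid version you settle on, together with an always-included $Q=0$ baseline run for the degenerate case, is exactly what the paper does.
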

\begin{proof}[Proof sketch]
We first prove a threshold statement. Set $\varepsilon_0=\varepsilon/3$, fix a diversity threshold~$Q\le \OPT_k^{\min}(T)$, and suppose for now that such a value~$Q$ is given. 
We compute~$k$ schedules of makespan at most~$(2+\varepsilon_0)T$ and max-min diversity at least~$\frac12(1-\varepsilon_0)Q$.
We construct the schedules sequentially by applying~\Cref{prop:rounding-covering-constraints} exactly~$k$ times. In the first application, there are no covering constraints. Suppose that schedules~$\sigma_1,\dots,\sigma_{r-1}$ have already been constructed. Assume that each of them has makespan at most~$(2+\varepsilon_0)T$ and that every pair among them has diversity at least~$\frac12(1-\varepsilon_0)Q$.

Let~$\pi_1,\dots,\pi_k$ be an optimal solution, so $\Cmax(\pi_a)\le T$ for every~$a\in[k]$ and $d(\pi_a,\pi_{a'})\ge \OPT_k^{\min}(T)\ge Q$ for every distinct~$a,a'\in[k]$.
We claim that one of these schedules is at distance at least~$Q/2$ from every previously constructed schedule.
Suppose not. Then for every~$a\in[k]$, there exists some~$b(a)\in[r-1]$ such that $d(\pi_a,\sigma_{b(a)}) < \frac12 Q$. Since $r-1<k$, the pigeonhole principle implies that two distinct optimum schedules~$\pi_a$ and~$\pi_{a'}$ are mapped to the same constructed schedule~$\sigma_b$. By the triangle inequality, we get $d(\pi_a,\pi_{a'}) \le d(\pi_a,\sigma_b)+d(\sigma_b,\pi_{a'}) < \frac12 Q+\frac12 Q = Q$.
This contradicts $d(\pi_a,\pi_{a'})\ge \OPT_k^{\min}(T)\ge Q$. Therefore, there exists~$a\in[k]$ such that $d(\pi_a,\sigma_b)\ge \frac12 Q$ for every $b\in[r-1]$.

Fix such a schedule~$\pi_a$.
The schedule~$\pi_a$ has makespan at most~$T$ and diversity at least~$\tfrac12 Q$ from each of~$\sigma_1,\dots,\sigma_{r-1}$.
It is therefore an integral feasible solution of~\eqref{eq:max-min-div-LP} with makespan targets~$T_i=T$ for all machines~$i$ and covering targets~$\tfrac12 Q$, one covering constraint for each previously constructed schedule.
Applying~\Cref{prop:rounding-covering-constraints} with parameter~$\varepsilon_0$ gives a schedule~$\sigma_r$ with~$\Cmax(\sigma_r)\leq(2+\varepsilon_0)T$ and, for all~$b\in[r-1]$, we get $d(\sigma_b,\sigma_r)\geq (1-\varepsilon_0)\cdot \frac12 Q = \frac12(1-\varepsilon_0)Q$.
Thus~$\sigma_1,\dots,\sigma_r$ satisfy the same invariant.
After~$k$ steps, this proves the threshold statement for every~$Q\le \OPT_k^{\min}(T)$. In \Cref{app:low-makespan-max-min}, we complete the proof by guessing~$Q$ geometrically and absorbing the loss through the choice $\varepsilon_0=\varepsilon/3$. This gives the stated $\frac12(1-\varepsilon)\OPT_k^{\min}(T)$ guarantee and run time bound.
\end{proof}

\section{Future work}

In this work, we studied bicriteria approximation algorithms for scheduling on unrelated machines under the max-sum and max-min diversity objectives. The main open question that remains is whether max-min diversity admits a polynomial-time algorithm with constant approximation guarantees for both makespan and diversity when $k$ is part of the input. 
Further, future work can go beyond the standard notions of max-sum and max-min diversity.
For example, schedules could instead be compared using \emph{coverage diversity}, \emph{absolute-difference diversity}, or \emph{Venn diversity}~\cite{deBergLS2025,DrabikM2026}.

Although unrelated machines generalize models such as restricted assignment, related machines, and identical machines, from a practical perspective, other notions of diversity may be of interest especially for identical machines.
In this special case, our notion of diversity can consider two schedules maximally diverse even when only the labels of machines differ but the two schedules partition the jobs in the same way.
Thus, it is of interest to study notions of diversity that are invariant under machine relabeling.

\paragraph*{AI acknowledgment.}
Generative AI models (GPT-6 Astra and GPT-6.1 Sol) were used to search the literature, check proofs, and draft figures. GPT-6 Astra developed the algorithm and proof of \Cref{thm:near-optimal-revised} for near-optimal max-min diversity with a makespan approximation factor independent of~$k$, using as input the authors' existing result in~\Cref{thm:max-min-ktuple}, which guarantees makespan at most~$(2k+1+\varepsilon)T$. The authors take full responsibility for all content of the paper.

\bibliographystyle{abbrvnat}
\bibliography{literature-arxiv}

\clearpage

\appendix

\section{Hardness at a fixed makespan target}
\label{app:fixed-target-hardness}

\fixedtargethardness*
\begin{proof}
    We reduce from \textsc{Partition}. 
    Let $a_1, \dots, a_n$ be positive integers with total sum~$2B$, and also assume without loss of generality that $a_j \le B$ for every $j$.
    We create one job~$j$ for each~$a_j$, so the set of all jobs is $J = [n]$. 
    There is one machine~$m^\star$ and, for every job~$j$, one private dummy machine~$d_j$. 
    We set $p_j=a_j$ and define $p_{m^\star j}=p_{d_j j}=p_j$ and $p_{ij}=B+1$ for every other machine~$i$. Thus, at makespan target~$B$, job~$j$ is eligible exactly on~$m^\star$ and~$d_j$.
    All diversity weights are equal to~$1$. 
    We set the makespan target to~$T=B$ and the target diversity to~$Q=n$.
    
    If the \textsc{Partition} instance has a set~$S \subseteq J$ with~$\sum_{j\in S}a_j=B$, then we define two schedules as follows. 
    In the first schedule, assign the jobs of~$S$ to~$m^\star$ and all other jobs to their private dummy machines. 
    In the second schedule, we assign the jobs of~$J\setminus S$ to~$m^\star$ and the jobs of~$S$ to their private dummy machines. 
    Both schedules have makespan~$B$ and every job is assigned to different machines in the two schedules. 
    Hence $d(\sigma_1,\sigma_2)=n$.
    
    Conversely, suppose that there are two schedules~$\sigma_1, \sigma_2$ of makespan at most~$B$ and pairwise diversity~$d(\sigma_1,\sigma_2) \ge n$. 
    Since all weights are $1$, the maximum possible diversity is~$n$, so every job must be assigned to different machines in the two schedules. 
    Each job has only two eligible machines,~$m^\star$ and~$d_j$, and therefore exactly one of the two schedules assigns~$j$ to~$m^\star$. 
    Let~$S=\{j\in J \mid \sigma_1(j)=m^\star\}$.
    From the two makespan bounds we get
    \begin{align*}
        \sum_{j\in S}a_j \;\le\; B
        \qquad\text{and}\qquad
        \sum_{j\notin S}a_j \;\le\; B \enspace  .
    \end{align*}
    Since the total size is~$2B$, both inequalities are tight and~$S$ is a feasible partition.
    
    Finally, finding two distinct schedules of makespan at most $T$ is trivial in the reduction. 
    Choose any job $j_0$ (which has $p_{j_0} = a_{j_0} \le B = T$ by assumption) and create two distinct schedules in the following way:
    Assign all jobs to their private dummy machines in the first schedule, and assign only $j_0$ to $m^\star$ in the second schedule while assigning all other jobs to their private dummy machines. 
    Both schedules have makespan at most $T$.
\end{proof}

Note that the reduction shows hardness even in the graph-balancing machine model~\cite{EbenlendrKS2014} (where each eligible set is represented by an edge between two machines), on stars in particular.

\section{Max-sum diversity proofs}
\label{app:max-sum}

\subsection{Minimization \texorpdfstring{$k$}{k}-sided placement}
\label{subsec:min-k-sided-placement}
For completeness, we recall the minimization $k$-sided placement problem (\minkSP)~\cite{KorupoluMRT2015}. There are $k$ sides $S_1, \dots, S_k$, each consisting of capacitated nodes. Each job must be assigned to one $k$-tuple $e = (u_1, \dots, u_k)$, where $u_a \in S_a$ for every side~$a \in [k]$. For each job~$j$ and tuple~$e$, the input specifies a cost~$f_j(e)$ and a demand vector~$d_j(e)\in\mathbb{R}_{\ge 0}^k$. If job~$j$ is assigned to $e= (u_1, \dots, u_k)$, then it incurs cost~$f_j(e)$ and consumes the $a^{\textnormal{th}}$ coordinate of~$d_j(e)$ from the capacity of node~$u_a$, for every~$a \in [k]$. The goal is to assign every job to one tuple so as to minimize the total cost, while respecting all node capacities. 

In our application, side~$a$ corresponds to schedule~$a$, the nodes on each side are copies of the machines, and job~$j$ may only use tuples $(i_1,\dots,i_k)\in\Gamma(j)^k$. Assigning job~$j$ to tuple~$\mathbf{i}=(i_1,\dots,i_k)$ consumes demand~$p_{i_a j}$ on machine~$i_a$ of side~$a$. The cost of the tuple is 
\begin{align*}
    f_j(\mathbf{i})  =  w_j\sum_{i\in M}\binom{\mu_i^j(\mathbf{i})}{2}  ,
\end{align*}
where $\mu_i^j(\mathbf{i})=|\{a\in[k]\mid i_a=i\}|$.

\ktupleresult*
\begin{proof}
    Let $Q_0=\binom{k}{2}\sum_{j\in J} w_j$. 
    Note that any integral tuple assignment has total diversity exactly~$Q_0$ minus its total penalty~$\sum_{j\in J} \phi^j_{\mathbf{i}(j)}$. 
    Any~$k$ schedules of makespan at most~$T$ define a feasible $k$-sided placement of capacity~$T$, and conversely.
    In particular, an optimal solution attaining~$\OPT_k^\Sigma(T)$ has penalty~$Q_0 - \OPT_k^\Sigma(T)$, so the minimum penalty at capacity~$T$ is at most~$Q_0 - \OPT_k^\Sigma(T)$.
   
    Applying the $(k+1)$-capacity algorithm for \minkSP\ yields an integral tuple assignment whose penalty is at most this minimum penalty and whose capacities are at most~$(k+1)T$. 
    Translating the tuple assignment back into~$k$ schedules gives makespan at most~$(k+1)T$  and total diversity at least~$\OPT_k^\Sigma(T)$.
\end{proof}

\paragraph*{The multiplicity linear program.}

\multiplicityrelaxation*

\begin{proof}
    Let $\pi_1,\ldots,\pi_k$ be any~$k$ schedules of makespan at most~$T$ and set $\mu_{ij}=|\{a\in[k]\mid \pi_a(j)=i\}|$. 
    Define~$y_{ijs}=1$ for~$1 \le s \le \mu_{ij}$ and~$y_{ijs}=0$ otherwise. 
    We check that~$y$ is feasible for~\eqref{lp:multiplicity}.
    For every job~$j\in J$,  $\sum_{i,s} y_{ijs} = \sum_{i\in \Gamma(j)} \mu_{ij}=k$.
    Furthermore, for every machine~$i\in M$,
    \begin{align*}
        \sum_{j\in J :\, i\in \Gamma(j)} \sum_{s\in [k]} p_{ij} y_{ijs}
        \;=\;
        \sum_{j \in J :\, i\in \Gamma(j)} p_{ij}\mu_{ij}
        \;=\;
        \sum_{a\in [k]}\sum_{j \in J:\, \pi_a(j)=i}p_{ij}
        \;\le\; kT \enspace  .
    \end{align*}
    The last inequality holds because each schedule has load at most~$T$ on machine~$i$.
    Since~$y$ places copies in the lowest ranks, its objective value is
    \begin{align*}
        \sum_{j\in J} w_j \sum_{i\in \Gamma(j)} \sum_{s=1}^{\mu_{ij}}(s-1)
        \;=\; \sum_{j\in J} w_j \sum_{i\in \Gamma(j)} \binom{\mu_{ij}}{2}
        \;=\; \Phi(\boldsymbol \mu) \enspace  .
    \end{align*}
    Taking $\pi_1,\dots,\pi_k$ to be an optimal family for max-sum diversity with makespan at most $T$ implies $Z_{\mathrm{mult}}\le \Phi(\boldsymbol \mu)=Q_0-\OPT_k^\Sigma(T)$.
\end{proof}

\paragraph*{Rounding the multiplicity linear program.}

\multiplicityrounding*

\begin{proof}
  We prove this using iterative relaxation. 
  The cost of variable $y_{ijs}$ is $(s-1) w_j$.
  Moreover, the processing requirement of variable~$y_{ijs}$ is~$p_{ij} \leq p_{\max}$.

  During the algorithm, each variable is either already fixed to~$0$ or~$1$, or is still present.
  A job~$j$ is \emph{active} if it still has present variables.
    Furthermore, a machine~$i$ is \emph{active} if its load constraint has not been dropped.
    Denote the active jobs by~$J'$ and the active machines by~$M'$.
    For an active job~$j\in J'$, let~$b_j$ be the number of remaining ranked copies that must still be selected for job $j$. 
    Initially, $b_j=k$.
    For an active machine~$i\in M'$, let~$T_i^{\mathrm{res}}$ be its residual capacity.
    Initially, $T_i^{\mathrm{res}} = k T$.
    We write~$E$ for the set of tuples~$(i,j,s)$ such that the variable~$y_{ijs}$ is still present.
    Moreover, we denote by~$\delta(j)$ the set of tuples with present variables containing job~$j$ and by~$\delta(i)$ those containing machine~$i$.

    At every iteration, we compute an optimal extreme-point solution~$y$ of the following residual LP:
    \begin{alignat*}{3}
        \text{minimize} \qquad \sum_{(i,j,s)\in E} (s-1) &w_j y_{ijs} \\[-0.5mm]
        \text{subject to} \qquad
            \sum_{(i,j,s)\in\delta(j)} y_{ijs} \;&=\; b_j
            \qquad &&\forall j\in J'  ,\\[-0.5mm]
            \sum_{(i,j,s)\in\delta(i)} p_{ij} y_{ijs} \;&\le\; T_i^\mathrm{res}
            \qquad &&\forall i \in M'  ,\\[-0.5mm]
            0 \;\le\; y_{ijs} \;&\le\; 1 &&\forall (i,j,s) \in E  \enspace . 
    \end{alignat*}
    We apply the following rules to iteratively relax the LP.
    If~$y_{ijs} = 0$, then we remove the variable.
    If~$y_{ijs} = 1$, then we fix this variable to~$1$ (removing $y_{ijs}$ from the residual LP), decrease~$b_j$ by one, and if machine~$i$ is active, then decrease~$T_i^{\textrm{res}}$ by~$p_{ij}$.
    Once~$b_j = 0$, all remaining variables of job~$j$ are fixed to~$0$, and job~$j$ becomes inactive.
    It remains to specify what to do when every remaining variable is strictly fractional.

    Suppose that every present variable satisfies~$0<y_{ijs}<1$.
    Since no variable bound is tight, the extreme point is determined by $|E|$ linearly independent tight constraints among the active job equalities and the tight active machine constraints.
    In particular, this implies~$|E| \leq |J'| + |M'|$.
    Each active job~$j$ satisfies~$\sum_{(i,j,s) \in \delta(j)} y_{ijs} = b_j \geq 1$, hence
    \begin{align*}
        \sum_{(i,j,s)\in E} y_{ijs} 
        \;=\; \sum_{j\in J'} b_j 
        \;\ge\; |J'|  \enspace .
    \end{align*}
    Combining these two inequalities gives $\sum_{(i,j,s)\in E} (1-y_{ijs}) \le |M'|$.
    For an active machine~$i$, we define~$\Delta_i(y) = |\delta(i)| - \sum_{(i,j,s)\in \delta(i)} y_{ijs} = \sum_{(i,j,s)\in \delta(i)}(1-y_{ijs})$.
    Summing over active machine constraints implies
    \begin{align*}
        \sum_{i\in M'}\Delta_i(y)
        \;\le\; \sum_{(i,j,s)\in E} (1-y_{ijs})
        \;\le\; |M'|  \enspace .
    \end{align*}
    Moreover, $M'$ is nonempty.
    Indeed, if $M'=\emptyset$, then $|E| \le |J'|$. 
    Since all variables are strictly fractional, we have $\sum_{(i,j,s)\in E} y_{ijs} < |E|$, while $ \sum_{(i,j,s)\in E} y_{ijs} = \sum_{j\in J'} b_j \ge |J'|$, a contradiction. 
    Thus some active machine $i\in M'$ satisfies $\Delta_i(y)\le 1$.
    We drop this machine constraint.
    The algorithm terminates because every iteration either fixes at least one variable or drops one machine constraint.

    We now prove the load guarantee.
    Consider a machine~$i\in M$.
    If its constraint is never dropped, then its final aggregate load is at most its initial capacity~$kT$.
    Otherwise, consider the moment its constraint is dropped and let~$S = \delta(i)$ be its present variables then.
    Let~$T_i^\mathrm{fix}$ be the load already fixed on machine~$i$, so that~$T_i^\mathrm{fix} + T_i^\mathrm{res} = kT$, and recall that~$\sum_{(i,j,s)\in S} p_{ij}y_{ijs} \leq T_i^\mathrm{res}$ and~$\Delta_i(y) = \sum_{(i,j,s) \in S} (1-y_{ijs}) \le 1$.
    Let~$Y_{ijs} \in \{0,1\}$ be the final value of each variable.
    Then 
    \begin{align*}
        \sum_{(i,j,s)\in S} p_{ij} Y_{ijs} 
        \;&\le\; \sum_{(i,j,s)\in S}p_{ij} y_{ijs} + \sum_{(i,j,s)\in S} p_{ij} (Y_{ijs} - y_{ijs})^+ \\
        \;&\le\; T_i^\mathrm{res} + p_{\max}\sum_{(i,j,s)\in S} (Y_{ijs} - y_{ijs})^+ 
        \;\le\; T_i^\mathrm{res} + p_{\max}\sum_{(i,j,s)\in S} (1 - y_{ijs}) \\
        \;&=\; T_i^\mathrm{res} + p_{\max} \Delta_i(y)
        \;\le\; T_i^\mathrm{res} + p_{\max} \enspace .
    \end{align*}
    Hence the final load on machine~$i$ is at most~$T_i^\mathrm{fix} + T_i^\mathrm{res} + p_{\max} = kT + p_{\max}$.

    Finally, we maintain the invariant that the fixed cost plus the residual LP optimum is at most~$Z_{\mathrm{mult}}$.
    Fixing a variable that is integral in the current optimal extreme point preserves this invariant, and dropping a machine constraint can only decrease the optimum value of the residual LP. 
    Therefore, the final integral solution~$Y$ has cost at most~$Z_{\mathrm{mult}}$.
\end{proof}

\section{Max-min diversity proofs}
\label{app:max-min}

\subsection{Near-optimal max-min diversity}
\label{app:near-optimal-max-min}

\paragraph*{Feasibility after guessing.}

\feas*
\begin{proof}
The argument has two parts: construct a feasible assignment that performs
well for the weights given by the current diversity shortfalls, then
use it to decrease their sum of squares unless every target is met.

\emph{Light jobs suffice.}
Let \(L=\{j\in J\setminus G \mid w_j\le Q/h\}\).
An unguessed job outside \(L\) cannot contribute to an active pair
in the reference family.
If it did, the \(h\) largest contributors
to that pair would each have weight greater than \(Q/h\), making
\(Q_{a,b}^{\mathrm{fix}}>Q\), contrary to the pair being active.
If there were fewer than \(h\) contributors, all would have been guessed.
Subtracting the guessed contribution from the reference diversity gives
\begin{equation*}
 \sum_{j\in L}w_j\mathbf{1}[\pi_a(j)\ne\pi_b(j)]
 \ge Q-Q_{a,b}^{\mathrm{fix}}=Q_{a,b}^{\mathrm{res}}
 \qquad\text{for every active pair }(a,b) \enspace .
\end{equation*}

\emph{Measure the unmet targets.}
Let \(\mathcal P\) be the set of nonnegative points satisfying only
the job and load constraints of \eqref{lp:res}.
It is nonempty: for each unguessed job $j$ choose the constant tuple
\((\pi_1(j),\ldots,\pi_1(j))\), setting its variable to one and the
others to zero.
This tuple is in \(A_j\): its machine is eligible
because \(p_{\pi_1(j),j}\le T\), it uses one machine, and every diversity
coefficient is zero.
The job equalities hold, and the charged load on each machine is at most its load in \(\pi_1\), hence, at most \(T\le tT\).

For every pair $(a, b)$, denote its target and supplied residual diversity by
\begin{equation*}
 b_{a,b}=(1-\varepsilon_0)Q_{a,b}^{\mathrm{res}},\qquad
 d_{a,b}(z)=\sum_{j\notin G}\sum_{\mathbf{i}\in A_j}
                  q_{\mathbf{i},j}^{a,b}z_{\mathbf{i}}^j \enspace .
\end{equation*}
Choose \(z\in\mathcal P\) minimizing the sum of squared shortfalls
\begin{equation*}
 \Phi(z)=\sum_{a<b}\bigl((b_{a,b}-d_{a,b}(z))^+\bigr)^2 \enspace .
\end{equation*}
The minimum exists: linear equalities and non-strict inequalities are
preserved under limits, so~$\mathcal P$ is closed; nonnegativity
and the job equalities imply that every variable lies in \([0,1]\),
so it is bounded.
In finite dimensions these properties imply compactness.
Since $\mathcal P$ is nonempty and~$\Phi$ is continuous (a finite sum of squares of positive parts of linear functions), the
extreme value theorem guarantees an attained minimum.

If \(\Phi(z)=0\), the lemma follows. Otherwise set
\begin{equation*}
 \lambda_{a,b}=(b_{a,b}-d_{a,b}(z))^+,\qquad
 W_\lambda=\sum_{a<b}\lambda_{a,b}>0 \enspace .
\end{equation*}
These weights measure shortfalls of the whole point \(z\), independently
of any job, and remain fixed below.
Inactive pairs have
\(b_{a,b}=0\) and, hence, \(\lambda_{a,b}=0\); active pairs whose targets
are met also have weight zero.
Weighting by \(\lambda\) emphasizes
the unmet pairs in proportion to how strongly an increase in their
diversity reduces \(\Phi\).
We construct \(y\in\mathcal P\) satisfying
\begin{equation}\label{eq:weighted-witness}
 \sum_{a<b}\lambda_{a,b}d_{a,b}(y)
 \ge\sum_{a<b}\lambda_{a,b}b_{a,b} \enspace .
\end{equation}
The point \(y\) need not meet every target individually; this summed
weighted guarantee suffices to decrease \(\Phi\), as shown at the end.

\emph{Construct an assignment for these weights.}
We seek an assignment $y$ with large weighted diversity,
giving greater importance to pairs with larger shortfalls
$\lambda_{a,b}$. Exchanging finite sums gives
\begin{align*}
	\sum_{a<b}\lambda_{a,b}d_{a,b}(y)
	=\sum_{a<b}\sum_{j\notin G}\sum_{\mathbf{i}\in A_j}
	\lambda_{a,b}y_{\mathbf{i}}^j w_j\mathbf{1}[i_a\ne i_b]
=\sum_{j\notin G}\sum_{\mathbf{i}\in A_j}y_{\mathbf{i}}^j w_j
	\sum_{a<b}\lambda_{a,b}\mathbf{1}[i_a\ne i_b] \enspace .
\end{align*}
For an integral point $y$, exactly one tuple per job has variable value one, and all others have value zero.
Assigning job $j$ to different machines in schedules $a$ and $b$ therefore contributes $w_j\lambda_{a,b}$ to the weighted objective. 
With only a limited number of distinct machines available for each job, we seek tuples separating pairs of large total weight.

Consider the complete graph on schedule indices $[k]$, with edge weight $\lambda_{a,b}$ on pair $(a,b)$.
For a light job $j$ with $\ell$ distinct eligible machines available, let each color correspond to one of these machines.
Coloring vertex $a$ with the color of machine $i$ means assigning job $j$ to machine $i$ in schedule $a$.
Thus the job's weighted contribution is $w_j$ times the total weight of edges whose endpoints have different colors.
For $\ell\ge1$, define the maximum such \emph{cut weight}, achievable by $\ell$ colors, as
\begin{equation*}
g_\ell=\max_{c:[k]\to[\ell]}
\sum_{a<b}\lambda_{a,b}\mathbf{1}[c(a)\ne c(b)] \enspace .
\end{equation*}
A coloring attaining $g_\ell$ gives weighted contribution $w_jg_\ell$.
The edge weights are the same for every job,
and
\begin{equation*}
g_1=0,\qquad g_{\ell+1}\ge g_\ell,\qquad
g_\ell=W_\lambda\quad\text{for }\ell\ge k \enspace .
\end{equation*}

We choose the available machines by selecting for all jobs one common collection of~$t$ reference schedules among $\pi_1,\ldots, \pi_k$.
Restricting each job to machines used by these schedules will allow us to bound each machine's LP load by $tT$.
At each stage, define
\begin{equation*}
S_j=\{\pi_a(j) \mid \pi_a\text{ has been selected}\},
\qquad \ell_j=|S_j|,
\qquad F=\sum_{j\in L}w_jg_{\ell_j} \enspace .
\end{equation*}
Here $S_j$ is the set of available machines for job $j$, and $F$ is the total weighted diversity attainable by choosing a coloring separately for each light job using its machines in $S_j$.

Initially select $\pi_1$, so that $\ell_j=1$ and $F=0$.
Adding $\pi_a$ increases $\ell_j$ by one exactly when $\pi_a(j)\notin S_j$.
Its gain in $F$ is therefore
\begin{equation*}
\Delta_a
=\sum_{j\in L:\,\pi_a(j)\notin S_j}
w_j\bigl(g_{\ell_j+1}-g_{\ell_j}\bigr) \enspace .
\end{equation*}
Greedily add $t-1$ reference schedules, each maximizing $\Delta_a$, and update the sets $S_j$ after each addition.
Thus each step chooses one schedule based on its combined gain over all light jobs.
Repetitions are allowed; selecting an already chosen schedule gives zero gain.
We will show that the resulting $F$ is large enough to satisfy \eqref{eq:weighted-witness}.

For each $j\in L$, choose a coloring attaining $g_{\ell_j}$ and map its distinct colors to distinct machines in $S_j$.
The machine corresponding to the color of vertex $a$ becomes the $a^{\textnormal{th}}$ entry of the tuple for job~$j$.
For $j\notin G\cup L$, use the constant tuple
$(\pi_1(j),\ldots,\pi_1(j))$.
Denote the chosen tuple by $\mathbf{i}_j$, and set its variable in $y$ to one and all other variables for that job to zero.

Each tuple uses at most $t$ eligible machines. Among unguessed jobs, its diversity coefficients are zero outside $L$, and at most $w_j\le Q/h$ inside $L$.
Thus every chosen tuple belongs to $A_j$, and the job constraints hold.

To verify the load constraints, list the selected schedules as $\rho_1,\ldots,\rho_t$, including repetitions, with $\rho_1=\pi_1$. Every used job--machine pair occurs in
at least one selected schedule: for light jobs this follows from the definition of $S_j$, and for the other unguessed jobs from their constant-$\pi_1(j)$ tuples.
Hence
\begin{equation*}
\mathbf{1}[i\in S(\mathbf{i}_j)]
\le\sum_{s=1}^t\mathbf{1}[\rho_s(j)=i] \enspace .
\end{equation*}
The left side counts machine $i$ once if the tuple uses it; the right side counts the selected schedules assigning job $j$ to $i$. 
Multiplying by $p_{ij}\ge0$ and summing over unguessed jobs verifies the load constraint of
\eqref{lp:res}:
\begin{align*}
	\sum_{j\notin G}\sum_{\mathbf{i}\in A_j:\,i\in S(\mathbf{i})}
	p_{ij}y_{\mathbf{i}}^j
	&=\sum_{j\notin G}p_{ij}\mathbf{1}[i\in S(\mathbf{i}_j)]\\
	&\le\sum_{j\notin G}p_{ij}
	\left(\sum_{s=1}^t\mathbf{1}[\rho_s(j)=i]\right)\\
	&=\sum_{s=1}^t\sum_{j\notin G:\,\rho_s(j)=i}p_{ij}\\
	&\le\sum_{s=1}^t T=tT \enspace .
\end{align*}
The first equality uses that $y$ selects exactly $\mathbf{i}_j$ for each job $j$.
The final inequality follows because each inner sum is at most the load of all jobs on machine $i$ in $\rho_s$, hence, at most $T$.
Consequently, $y\in\mathcal P$.
The light jobs contribute $w_jg_{\ell_j}$ each, while the other unguessed jobs
contribute zero, so
\begin{equation*}
\sum_{a<b}\lambda_{a,b}d_{a,b}(y)
=\sum_{j\in L}w_jg_{\ell_j}=F \enspace .
\end{equation*}

\emph{Bound the greedy gain.}
Let $F_s$ denote the value of $F$ after selecting $s$ reference
schedules.
We compare $F_s$ with the reference family's weighted diversity on the light jobs. For $j\in L$, define
\begin{equation*}
D_j=\sum_{a<b}\lambda_{a,b}\mathbf{1}[\pi_a(j)\ne\pi_b(j)],
\qquad D=\sum_{j\in L}w_jD_j \enspace .
\end{equation*}
Here $D_j$ is the cut weight of the reference coloring for job $j$, and $w_jD_j$ is its contribution to weighted diversity.
In contrast, $g_{\ell_j}$ is the maximum cut weight attainable with at most $\ell_j$ colors.
The light-job bound gives $D\ge\sum_{a<b}\lambda_{a,b}Q_{a,b}^{\mathrm{res}}>0$, since every positive-weight pair is active.
Our goal is to show that, after $t$ selections,
$F_t\ge(1-\varepsilon_0)D$.
This implies
\begin{equation*}
\sum_{a<b}\lambda_{a,b}d_{a,b}(y)
=F_t\ge(1-\varepsilon_0)D
\ge\sum_{a<b}\lambda_{a,b}(1-\varepsilon_0)Q_{a,b}^{\mathrm{res}}
=\sum_{a<b}\lambda_{a,b}b_{a,b},
\end{equation*}
as required by \eqref{eq:weighted-witness}.

Fix a step $1\le s<t$. In the following argument, $S_j$, $\ell_j$, $\Delta_a$, and the quantities $u_j$ and $e_j$ defined below refer to their values after $s$ selections.
In particular, $F_s=\sum_{j\in L}w_jg_{\ell_j}$.
To bound the gain from adding another reference schedule, we measure the weighted degree at vertices whose reference machines are not yet in $S_j$.
Let
\begin{equation*}
\deg(a)=\sum_{b<a}\lambda_{b,a}+\sum_{b>a}\lambda_{a,b},
\qquad \sum_a\deg(a)=2W_\lambda,
\qquad
u_j=\frac{\sum_{a:\pi_a(j)\notin S_j}\deg(a)}{2W_\lambda} \enspace .
\end{equation*}
Thus $u_j$ is the fraction of total weighted degree at vertices whose reference colors must change when restricting the coloring to machines in $S_j$.

To upper-bound the remaining deficit $D-F_s$, we lower-bound the increase $F_{s+1}-F_s$ at each greedy step.
Since the coefficients $\deg(a)/(2W_\lambda)$ are nonnegative and sum to one, the largest gain is at least their weighted average:
\begin{align*}
	\max_a\Delta_a
	&\ge \sum_a\frac{\deg(a)}{2W_\lambda}\Delta_a\\
	&=\sum_a\frac{\deg(a)}{2W_\lambda}
	\sum_{j\in L:\,\pi_a(j)\notin S_j}
	w_j(g_{\ell_j+1}-g_{\ell_j})\\
	&=\sum_{j\in L}w_j(g_{\ell_j+1}-g_{\ell_j})
	\left(\sum_{a:\,\pi_a(j)\notin S_j}
	\frac{\deg(a)}{2W_\lambda}\right)\\
	&=\sum_{j\in L}w_ju_j(g_{\ell_j+1}-g_{\ell_j}) \enspace .
\end{align*}

Let $e_j=(D_j-g_{\ell_j})^+$ measure the positive deficit relative to the reference coloring for job $j$.
We will show that, for $D_j>0$,
\begin{equation*}
u_j(g_{\ell_j+1}-g_{\ell_j})\ge\frac{e_j^2}{4D_j} \enspace .
\end{equation*}
For this, we establish two recoloring bounds:
\begin{equation}
\label{eq:cuts}
	g_{\ell+1}-g_\ell\ge\frac{W_\lambda-g_\ell}{2\ell},
	\qquad
	g_{\ell_j}\ge D_j-\frac{2W_\lambda u_j}{\ell_j} \enspace .
\end{equation}
For the first, an optimal $\ell$-coloring has total internal edge weight $W_\lambda-g_\ell$.
One color class has internal weight at least $(W_\lambda-g_\ell)/\ell$, allowing empty classes.
Split that class into two parts maximizing the weight between them.
Each vertex has at least as much incident weight across the split as within its part, since otherwise moving it improves the split.
Summing over vertices counts each edge twice, so at least half the class's internal weight crosses the split.
Giving one part a new color cuts this weight without losing any previously cut edge, proving the first bound.

For the second bound, start with the reference colors $\pi_a(j)$.
Keep vertices whose colors belong to $S_j$ fixed, and recolor the others using colors in $S_j$ to maximize the cut weight.
For each recolored vertex $a$, the incident weights to the $\ell_j$ color classes sum to $\deg(a)$, so some class has
incident weight at most $\deg(a)/\ell_j$. If its current color class had greater incident weight, moving $a$ to a least-weight class would increase the cut weight, contradicting optimality.
Thus each recolored vertex has at most $\deg(a)/\ell_j$ incident weight to vertices of its own color.
Every originally cut edge that is lost must touch a recolored vertex, since two fixed endpoints retain their different colors.
At that recolored vertex, the lost edge now joins vertices of the same color. 
Summing the preceding bounds therefore gives
\begin{equation*}
D_j-g_{\ell_j}
\le \frac{1}{\ell_j}
\sum_{a:\pi_a(j)\notin S_j}\deg(a)
= \frac{2W_\lambda u_j}{\ell_j} \enspace .
\end{equation*}
The reference cut weight is $D_j$, and newly cut edges have nonnegative weight.
As the resulting coloring uses at most $\ell_j$ colors, it proves the second bound.
The second bound implies~$u_j\ge\ell_j e_j/(2W_\lambda)$.
Combining the bounds in \eqref{eq:cuts} gives, for $D_j>0$,
\begin{equation*}
u_j(g_{\ell_j+1}-g_{\ell_j})
\ge\frac{e_j(W_\lambda-g_{\ell_j})}{4W_\lambda}
\ge\frac{e_j^2}{4D_j} \enspace .
\end{equation*}
For the last inequality, when $e_j>0$, use $D_j\le W_\lambda$ to get
$(W_\lambda-g_{\ell_j})/W_\lambda
\ge(D_j-g_{\ell_j})/D_j=e_j/D_j$.
When $e_j=0$, the bound is immediate; $D_j=0$ also implies $e_j=0$.

Consequently, whenever $F_s<D$,
\begin{align}
\label{eq:gain}
	F_{s+1}-F_s
	&=\max_a\Delta_a
	\ge\sum_{j\in L}w_ju_j(g_{\ell_j+1}-g_{\ell_j})
	\ge\frac14\sum_{j\in L:\,D_j>0}\frac{w_je_j^2}{D_j}
	\notag\\
	&=\frac1{4D}
	\left(\sum_{j\in L:\,D_j>0}w_jD_j\right)
	\left(\sum_{j\in L:\,D_j>0}\frac{w_je_j^2}{D_j}\right)
	\notag\\
	&\ge\frac1{4D}
	\left(\sum_{j\in L:\,D_j>0}
	\sqrt{w_jD_j}\,
	\sqrt{\frac{w_je_j^2}{D_j}}\right)^2
	=\frac{(\sum_{j\in L}w_je_j)^2}{4D}
	\ge\frac{(D-F_s)^2}{4D} \enspace .
\end{align}
The third inequality follows from the Cauchy--Schwarz inequality
$\left(\sum_j a_jb_j\right)^2
\le \left(\sum_j a_j^2\right)\left(\sum_j b_j^2\right)$.
The last inequality uses
$\sum_jw_je_j\ge\sum_jw_j(D_j-g_{\ell_j})=D-F_s>0$.

Initially $F_1=0$, and $F_s$ never decreases.
While $\delta_s=(D-F_s)/D>0$, \eqref{eq:gain} gives
$\delta_{s+1}\le\delta_s-\delta_s^2/4$.
If the new deficit remains positive, then
\begin{equation*}
\frac1{\delta_{s+1}}
\ge\frac1{\delta_s(1-\delta_s/4)}
=\frac1{\delta_s}+\frac1{4(1-\delta_s/4)}
\ge\frac1{\delta_s}+\frac14 \enspace .
\end{equation*}
Since initially $\delta_1=1$, after $t-1$ additions, 
we have $F_t\ge D$ or $\frac1{\delta_t}\ge1+\frac{t-1}{4}=\frac{t+3}{4}.$
In the latter case, $\frac{D-F_t}{D} = \delta_t\le\frac4{t+3} \le \varepsilon_0 .$
Thus, in either case, $F_t\ge(1 -\varepsilon_0) D\ge\sum_{a<b}\lambda_{a,b}b_{a,b},$ which proves~\eqref{eq:weighted-witness}.

\emph{Average to reduce the shortfall.}
Set \(v_{a,b}=d_{a,b}(y)-d_{a,b}(z)\).
By
\eqref{eq:weighted-witness},
\begin{equation*}
 \sum_{a<b}\lambda_{a,b}v_{a,b}
 \ge\sum_{a<b}\lambda_{a,b}(b_{a,b}-d_{a,b}(z))
 =\sum_{a<b}\lambda_{a,b}^2=\Phi(z)>0 \enspace .
\end{equation*}
For \(0<\theta\le1\), form the convex combination
\(z_\theta=(1-\theta)z+\theta y\), meaning
\((z_\theta)_{\mathbf{i}}^j=(1-\theta)z_{\mathbf{i}}^j+\theta y_{\mathbf{i}}^j\)
coordinatewise.
It lies in \(\mathcal P\) because linear constraints
and nonnegativity are preserved by convex combinations, and
\(d_{a,b}(z_\theta)=d_{a,b}(z)+\theta v_{a,b}\).
The inequality \(((x-s)^+)^2\le(x^+)^2-2sx^++s^2\), applied with \(x=b_{a,b}-d_{a,b}(z)\) and \(s=\theta v_{a,b}\), gives
\begin{equation*}
 \Phi(z_\theta)
 \le\Phi(z)-2\theta\sum_{a<b}\lambda_{a,b}v_{a,b}
              +\theta^2\sum_{a<b}v_{a,b}^2
 \le\Phi(z)-2\theta\Phi(z)+\theta^2 \sum_{a<b}v_{a,b}^2,
\end{equation*}
using \eqref{eq:weighted-witness} in the last step.
Since $\sum_{a<b}\lambda_{a,b}v_{a,b}>0$, we have that $\sum_{a<b}v_{a,b}^2>0$.
By taking $\theta=\min\{1,\Phi(z)/\sum_{a<b}v_{a,b}^2\}>0$, we get $\theta \sum_{a<b}v_{a,b}^2 \le\Phi(z)$.
Then $\Phi(z_\theta)\le(1-\theta)\Phi(z)<\Phi(z)$, contradicting minimality.
Consequently $\Phi(z)=0$, so every diversity target is met, and $z$ satisfies \eqref{lp:res}.
\end{proof}

\paragraph*{Rounding the residual LP.}

\roundingrevised*
\begin{proof}
Assume that \eqref{lp:res} is feasible.
We use iterative rounding as long as there are unassigned unguessed jobs, maintaining feasibility throughout.
Each iteration has the following steps:
First, we remove every diversity constraint whose current residual right-hand side is non-positive.
Second, we compute an extreme-point solution of the resulting LP.
Third, we fix variables that were assigned integrally in the extreme-point solution:
If $z_\mathbf{i}^j = 0$, then we delete the variable from the LP.
If $z_\mathbf{i}^j = 1$, then we commit to assigning unguessed job~$j$ to tuple~$\mathbf{i}$ by deleting~$j$ with all its variables and updating the load constraints and the diversity constraints accordingly.
Specifically, for every machine~$i \in S(\mathbf{i})$, we subtract $p_{i j}$ from the remaining load budget of the machine, and we subtract $q_{\mathbf{i}, j}^{a, b}$ from the remaining diversity target of each active pair~$(a, b)$.
We end the iteration if at least one variable is deleted.
Finally, if no variable was fixed, we drop either a load constraint or a diversity constraint:
To this end, define $\Delta_i(z) = \sum_{j, \mathbf{i} \colon i \in S(\mathbf{i})} p_{i j} (1 - z_\mathbf{i}^j)$ and $V_{a, b}(z) = \sum_{j, \mathbf{i}} q_{\mathbf{i}, j}^{a, b} z_\mathbf{i}^j$.
We drop the load constraint of some machine~$i$ if $\Delta_i(z) \leq 2 t T$.
Otherwise, we drop the diversity constraint of an active pair~$(a, b)$ if $V_{a, b}(z) \leq \varepsilon_0 Q$.
This completes one iteration.

We show that, in every iteration where no variable is fixed, one constraint is dropped necessarily.
Consider an iteration where the remaining variables $E \neq \emptyset$ are all assigned fractional values strictly between~$0$ and~$1$ in the extreme-point solution.
Consequently, none of the variable constraints $z_\mathbf{i}^j \geq 0$ are tight.
We use $J'$ to denote the set of remaining jobs, and we use $R'$ to denote the set of remaining load and diversity constraints.
Then, by the rank lemma, the number~$|E|$ of (non-zero) variables is equal to the maximum number of linearly independent tight constraints, which is at most~$|J'| + |R'|$.
Further, for each job~$j \in J'$, there are at least two variables, since $z_\mathbf{i}^j < 1$ for every tuple~$\mathbf{i}$.
Hence, $2 |J'| \leq |E|$.
Combined with $|E| \leq |J'| + |R'|$, this gives $|E| \leq 2 |R'|$.
As $E \neq \emptyset$, we then get that $R' \neq \emptyset$.
We now argue that one constraint from $R'$ will be dropped.
If $T=0$, every eligible processing time is zero, so any remaining load constraint has $\Delta_i(z)=0$ and can be dropped. If no load constraint remains, then $|J'|\leq |R'|\leq r$, and every remaining diversity constraint satisfies
\begin{equation*}
V_{a,b}(z)\leq \frac{Q}{h}\sum_{(j,\mathbf{i})\in E}z_{\mathbf{i}}^j =\frac{Q}{h}|J'|\leq\frac{Qr}{h}\leq\frac{\varepsilon_0 Q}{2} \enspace.
\end{equation*}
Since $R'\neq\emptyset$, a diversity constraint can then be dropped.
Henceforth assume $T>0$, and define $\beta = 2 t T / (\varepsilon_0 Q)$.
It suffices to show that $\Delta_i(z) \leq 2 t T$ for some machine~$i$ with remaining load constraint or that $\beta V_{a, b} (z) \leq 2 t T$ for some active pair~$(a, b)$ with remaining diversity constraint.
We show this using two bounds.
In the following sums, machines and pairs range only over constraints in~$R'$.

First, we need the bound
\begin{equation}
\label{eq:sum-drop-machine}
\sum_i \Delta_i(z) \leq T \sum_i \sum_{j, \mathbf{i} \colon i \in S(\mathbf{i})} (1 - z_\mathbf{i}^j) \leq t T \sum_{j, \mathbf{i}} (1 - z_\mathbf{i}^j) = t T (|E| - |J'|),
\end{equation}
where we used that $p_{i j} \leq T$, that $|S(\mathbf{i})| \leq t$, and that $\sum_{j, \mathbf{i}} z_\mathbf{i}^j = |J'|$.

Second, we need the bound
\begin{equation}
\label{eq:sum-drop-pair}
\beta \sum_{(a, b)} V_{a, b}(z) \leq \beta (Q / h) \sum_{(a, b)} \sum_{j, \mathbf{i}} z_\mathbf{i}^j \leq \beta (Q / h) r \sum_{j, \mathbf{i}} z_\mathbf{i}^j \leq t T \sum_{j, \mathbf{i}} z_\mathbf{i}^j = t T |J'|,
\end{equation}
where we used that $q_{\mathbf{i}, j}^{a, b} \leq Q / h$, that there are at most $r$ active pairs, that $\beta (Q / h) r \leq t T$, and that $\sum_{j, \mathbf{i}} z_\mathbf{i}^j = |J'|$.

Combining \eqref{eq:sum-drop-machine} and \eqref{eq:sum-drop-pair}, we obtain
\begin{equation}
\label{eq:sum-drop-machine-and-pair}
\sum_i \Delta_i(z) + \beta \sum_{(a, b)} V_{a, b}(z) \leq t T (|E| - |J'|) + t T |J'| = t T |E| \leq 2 t T |R'| \enspace .
\end{equation}
Since the two sums in \eqref{eq:sum-drop-machine-and-pair} have $|R'|$ terms in total, at least one of these terms must then lead to a constraint being dropped.

It remains to bound the machine loads and the diversity after rounding.
Consider a machine~$i$ when its load constraint is dropped.
Let $B_i$ be its charged load from previously fixed unguessed-job tuples, and let $E_i=\{(j,\mathbf{i})\in E \mid i\in S(\mathbf{i})\}$ at that moment.
The load constraint gives $B_i+\sum_{(j,\mathbf{i})\in E_i}p_{ij}z_{\mathbf{i}}^j\leq tT$.
Writing $\hat z_{\mathbf{i}}^j$ for each remaining variable's final integral value, we obtain
\begin{equation*}
B_i+\sum_{(j,\mathbf{i})\in E_i}p_{ij}\hat z_{\mathbf{i}}^j
\leq B_i+\sum_{(j,\mathbf{i})\in E_i}p_{ij}z_{\mathbf{i}}^j+\Delta_i(z)
\leq tT+2tT=3tT,
\end{equation*}
because $\hat z_{\mathbf{i}}^j-z_{\mathbf{i}}^j\leq1-z_{\mathbf{i}}^j$.
This charged load bounds the load in every schedule.
A machine whose load constraint is never dropped has charged load at most~$tT$.

To bound the diversity after rounding, first consider a constraint removed because its current residual right-hand side is non-positive.
The fixed unguessed jobs already meet its original target $(1-\varepsilon_0)Q_{a,b}^{\mathrm{res}}$.
Otherwise, when a diversity constraint is dropped, its remaining fractional diversity is at most~$\varepsilon_0 Q$.
The unguessed jobs fixed before that moment therefore contribute at least $(1-\varepsilon_0)Q_{a,b}^{\mathrm{res}}-\varepsilon_0 Q$.
A diversity constraint retained until all jobs are assigned is satisfied, and later assignments cannot decrease diversity.

Since every iteration deletes a variable or drops a load/diversity constraint, the number of iterations is at most~$n m^k + m + r$.
Thus, the overall run time is polynomial in the size of the LP.
\end{proof}

\paragraph*{Guessing the diversity target and the heavy contributors.}

\nearoptimalrevised*
\begin{proof}
We start by guessing the diversity target to use.
To this end, define $Q^*=\OPT_k^{\min}(T)$ and $\varepsilon_0=\varepsilon/3$.
First, compute a baseline family of schedules using \Cref{thm:max-sum-two}; every schedule has makespan at most~$2T$. Let $W=\sum_j w_j$. If $W=0$, return this baseline family. Otherwise, let $w_{\min}=\min\{w_j \mid w_j>0\}$.
Enumerate diversity targets $Q=W/(1+\varepsilon_0)^\ell$ for $\ell=0,1,\ldots$, including and stopping at the first value with $Q\leq w_{\min}$.
The baseline family covers the case where $Q^*=0$. For the analysis of $Q^*>0$, we have $w_{\min}\leq Q^*\leq W$, so some enumerated value~$\widehat{Q}$ satisfies $\widehat{Q}\leq Q^*<(1+\varepsilon_0)\widehat{Q}$.

For each enumerated value of~$Q$, we guess the heavy contributors.
Specifically, for every pair~$(a, b)$ with $1 \leq a < b \leq k$, we guess the $h = \lceil 2 r / \varepsilon_0 \rceil$ jobs with the largest positive contribution to the diversity of schedules $a$ and $b$, and we guess the $k$-tuple of machines to which each of these jobs is assigned in the $k$ schedules.
Should there be fewer than $h$ jobs that positively contribute to the diversity of $a$ and $b$, then we guess all of them, again together with their $k$-tuples.
We discard inconsistent guesses and guesses whose fixed jobs violate a makespan constraint, as described in \Cref{subsec:near-optimal-max-min-diversity}.
For each remaining guess, we set up \eqref{lp:res} and discard the guess if this LP is infeasible. Otherwise, we apply \Cref{lem:rounding-revised} and combine the rounded assignment with the guessed tuples.
We return a family of maximum actual max-min diversity among these candidates and the baseline family.

Among our guesses will be $\widehat{Q}$ and its correct heavy-contributor guess.
Then, by \Cref{lem:feas}, we know that \eqref{lp:res} is feasible.
From \Cref{lem:rounding-revised}, we thus obtain an assignment of the unguessed jobs with unguessed-job load at most $3 t T$ per machine, where $t = \lceil 4 / \varepsilon_0 \rceil$, and diversity at least $(1 - \varepsilon_0) Q_{a, b}^\mathrm{res} - \varepsilon_0 \widehat{Q}$ from unguessed jobs, for every active pair~$(a, b)$.
Adding these unguessed jobs to the heavy-contributor guess then gives makespan at most $3 t T + T = (3 t + 1) T = (3 \lceil 12 / \varepsilon \rceil + 1) T$.
For each active pair, the diversity is at least
\begin{equation*}
  (1 - \varepsilon_0) Q_{a, b}^\mathrm{res} - \varepsilon_0 \widehat{Q} + Q_{a, b}^\mathrm{fix} \geq (1 - \varepsilon_0) (\widehat{Q} - Q_{a, b}^\mathrm{fix}) - \varepsilon_0 \widehat{Q} + Q_{a, b}^\mathrm{fix} \geq (1 - 2 \varepsilon_0) \widehat{Q} \enspace .
\end{equation*}
Inactive pairs already receive at least~$\widehat{Q}$ from guessed jobs.
Since $(1 + \varepsilon_0) \widehat{Q} > Q^*$, this gives diversity at least $\frac{1 - 2 \varepsilon_0}{1 + \varepsilon_0} Q^* \geq (1 - 3 \varepsilon_0) Q^* = (1 - \varepsilon) Q^*$.

The run time is dominated by guessing the heavy contributors and their $k$-tuples:
As $h = \mathcal{O}(k^2 / \epsilon)$, the number of these guesses is in $(n m^k)^{\mathcal{O}(\binom{k}{2} h)} = (n m^k)^{\mathcal{O}(k^4 / \varepsilon)}$.
This gives the promised run time of $(n m^k)^{\mathcal{O}(k^4 / \varepsilon)} \cdot |I|^{O(1)}$.
\end{proof}

\subsection{Near-optimal max-min diversity with a \texorpdfstring{$2k+1+\varepsilon$}{2k + 1 + epsilon} makespan factor}
\label{app:near-optimal-max-min-small-k}

In this subsection, we give an alternative algorithm that achieves near-optimal max-min diversity with makespan at most~$(2k+1+\varepsilon)T$. For small fixed~$k$, this improves the dependence on~$\varepsilon$ compared with the~$\mathcal{O}(1/\varepsilon)$ makespan guarantee in~\Cref{subsec:near-optimal-max-min-diversity}.
The algorithm guesses a diversity target and the largest contributors to each pairwise distance, together with their assignments in all~$k$ schedules.
After fixing these assignments and removing large residual contributors, it rounds the residual labeled $k$-tuple LP by iterative rounding, using scaled diversity coefficients to control both the load increase and the diversity loss.

\begin{restatable}[Near-optimal max-min diversity]{theorem}{maxminktuple}
\label{thm:max-min-ktuple}
  Fix $k \in \mathbb{N}_{\ge 2}$ and $\varepsilon\in (0,1)$.
  There is an algorithm that, for every feasible makespan target $T$, computes schedules $\sigma_1,\dots,\sigma_k$ such that $\Cmax(\sigma_a)\le (2k+1+\varepsilon)T$, for every $a \in [k]$, and $d_{\min}(\sigma_1,\dots,\sigma_k) \ge  (1-\varepsilon)\OPT_k^{\min}(T)$. 
  The algorithm runs in time $(nm^k)^{\mathcal{O}(k^5/\varepsilon^2)}\cdot |I|^{O(1)}$, where $|I|$ is the input encoding length.
\end{restatable}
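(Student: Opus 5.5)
Our plan mirrors the proof of \Cref{thm:near-optimal-revised}, but it keeps the unrestricted labeled $k$-tuple LP. Set $\varepsilon_0=\varepsilon/3$ and $r=\binom{k}{2}$. We first compute the baseline family from \Cref{thm:max-sum-two}, which covers $\OPT_k^{\min}(T)=0$. We then enumerate diversity targets $Q=W/(1+\varepsilon_0)^\ell$ as before, so that some $\widehat Q$ satisfies $\widehat Q\le \OPT_k^{\min}(T)<(1+\varepsilon_0)\widehat Q$. For each pair $(a,b)$ we guess the $h$ largest positive contributors in an optimal family, with $h=\Theta(k^3/\varepsilon)$, together with their full $k$-tuples. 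We discard inconsistent guesses and guesses whose fixed jobs exceed load $T$ anywhere. Each remaining unguessed job may only use tuples $\mathbf i\in\Gamma(j)^k$ whose coefficient $q^{a,b}_{\mathbf i,j}$ is at most $Q/h$ for every active pair. With this restriction, the residual LP has per-schedule load constraints $\sum_{j}\sum_{\mathbf i:\,i_a=i}p_{ij}z^j_{\mathbf i}\le T$ for all $(i,a)$, and covering constraints $\ge Q^{\mathrm{res}}_{a,b}$ for active pairs. Feasibility is immediate for the correct guess: the reference tuples themselves are admissible, because an unguessed job with $w_j>Q/h$ cannot contribute to an active pair, by the argument in \Cref{lem:feas}. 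No squared-shortfall argument is needed.

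The rounding is iterative relaxation with scaled diversity coefficients, in the spirit of \Cref{lem:rounding-revised} and \cite{KorupoluMRT2015}. At an extreme point with all variables fractional, the rank lemma together with ``each job has at least two variables'' gives $|E|\le 2(\#\text{tight load constraints}+\#\text{diversity constraints})$. A variable $z^j_{\mathbf i}$ appears in exactly $k$ load constraints, one per coordinate, with coefficient at most $T$. We define $\Delta_{i,a}(z)=\sum_{\mathbf i:\,i_a=i}p_{ij}(1-z^j_{\mathbf i})$ and scale the diversity slack by $\beta=\Theta(kT/(\varepsilon_0 Q))$. Summing gives $\sum_{i,a}\Delta_{i,a}+\beta\sum V_{a,b}\le kT(|E|-|J'|)+kT|J'|$ once $\beta rQ/h\le kT$, which fixes $h=\Theta(r/\varepsilon_0)$. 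The total is therefore at most $2kT\cdot|R'|$ in the standard way, so some load constraint has slack at most $2kT$, or some diversity constraint has remaining value at most $\varepsilon_0Q$, and we drop it. A dropped load constraint of machine $i$ in schedule $a$ ends with unguessed load at most $T+2kT$. Adding the guessed jobs' load of at most $T$ would give $(2k+2)T$. To reach $(2k+1+\varepsilon)T$ instead, we strengthen the guessing: we also guess, per (machine, schedule), the jobs with processing time above $\varepsilon_0 T/k$ assigned there. There are at most $k/\varepsilon_0$ of them per machine, but we guess them only through the tuple structure. We then restrict residual tuples to small processing times, so the slack bound becomes $2k\cdot(\varepsilon_0T/k)$-scaled. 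This step is the one to tune. The diversity accounting is identical to \Cref{thm:near-optimal-revised}: each active pair receives at least $Q^{\mathrm{res}}_{a,b}-\varepsilon_0\widehat Q+Q^{\mathrm{fix}}_{a,b}\ge(1-\varepsilon_0)\widehat Q$, hence at least $(1-\varepsilon)\OPT_k^{\min}(T)$.

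\textbf{Main obstacle.} The delicate point is obtaining the additive $+1+\varepsilon$ rather than $+2$. My first attempt is to avoid guessing large jobs at all. Instead, I would charge the fixed heavy contributors against the residual capacity, using residual capacity $T-(\text{guessed load})$ per $(i,a)$ in the LP, exactly as in \Cref{lem:multiplicity-rounding}. The fixed load then already lies inside the budget $T$. The dropped-constraint bound becomes $T+2kT\cdot\max p/T$, and a factor-$(1+\varepsilon)$ loss enters only through the geometric guessing of $Q$ and the choice of $\beta$. With $\beta$ chosen so that $\beta rQ/h\le\varepsilon T$, the per-variable charge becomes $kT+\varepsilon T$ and the dropped slack becomes $(2k+\varepsilon)T$. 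This gives makespan $(2k+1+\varepsilon)T$ and requires $h=\Theta(k^3/\varepsilon^2)$, which matches the stated running time $(nm^k)^{O(rh)}=(nm^k)^{O(k^5/\varepsilon^2)}$.
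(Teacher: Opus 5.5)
Your final plan in the \emph{Main obstacle} paragraph is essentially the paper's proof: absorb the guessed load into residual capacities $T-T^{\mathrm{fix}}_{a,i}$, scale the diversity coefficients so that each tuple carries total scaled mass at most $(k+\varepsilon)T$, drop a constraint of mass at most $2(k+\varepsilon)T$, and take $h=\Theta(k^3/\varepsilon^2)$ (the paper uses $\Lambda=(1+\gamma)kT$ with $\gamma=\varepsilon_0/(2k)$). Discard the per-(machine, schedule) large-job guessing, which needs a number of guesses exponential in $m$ and is unnecessary, and treat $T=0$ separately (the paper rescales processing times) because your $\beta$ vanishes there; incidentally, your $(1-z)$-weighted slack count from the second paragraph, combined with residual capacities, already gives makespan $(2k+1)T$ with only $h=O(k^2/\varepsilon)$.
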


We start from the natural labeled $k$-tuple LP. 
For each job~$j$ and tuple $\mathbf{i} = (i_1,\ldots,i_k)\in\Gamma(j)^k$, we introduce a variable~$z^j_{\mathbf{i}}$ that indicates whether job~$j$ is placed on machine~$i_a$ in schedule~$a$, for all~$a\in [k]$.
Together, the tuples assigned to the jobs encode~$k$ labeled schedules.
For a pair $1\le a<b\le k$, let~$q^{a,b}_{\mathbf{i} ,j} = w_j \mathbf{1}[i_a\ne i_b]$; this is the contribution of job~$j$ to the diversity of pair~$(a,b)$ when the tuple~$\mathbf{i}$ places~$j$ on different machines in coordinates~$a$ and~$b$.
The key insight is that the diversity of pair~$(a,b)$ is the linear function~$\sum_{j\in J} \sum_{\mathbf{i}\in \Gamma(j)^k} q_{\mathbf{i},j}^{a,b} z_{\mathbf{i}}^j$.
Thus, requiring the minimum diversity of the~$k$ schedules to be at least~$Q$ can be expressed by~$\binom{k}{2}$ linear covering constraints.
Hence, we obtain the following labeled~$k$-tuple LP:
\begin{alignat}{3}
    \sum_{\mathbf i\in\Gamma(j)^k} z^j_{\mathbf i}
    \;&=\; 1
    \qquad&& \forall j\in J, \tag{Tuple-LP}\label{lp:initial-ktuple-maxmin}\\[-0.5mm]
    \sum_{j\in J} \sum_{\substack{\mathbf i\in\Gamma(j)^k:\, i_a=i}} p_{ij}z^j_{\mathbf i}
    \;&\le\; T
    \qquad&& \forall a\in[k],\ i\in M,\notag\\[-0.5mm]
    \sum_{j\in J} \sum_{\mathbf i\in\Gamma(j)^k} q^{a,b}_{\mathbf i,j}z^j_{\mathbf i}
    \;&\ge\; Q
    \qquad&& \forall 1\le a<b\le k,\notag\\[-0.5mm]
    z^j_{\mathbf i}
    \;&\ge\; 0
    \qquad&& \forall j\in J,\ \mathbf i\in\Gamma(j)^k \enspace .\notag
\end{alignat}

Fix $Q>0$ and schedules $\pi_1,\ldots,\pi_k$ of makespan at most $T$ such that $d(\pi_a,\pi_b)\ge Q$ for every $1\le a<b\le k$.
Let $\varepsilon_0\in(0,1)$, and set
\begin{align*}
    \gamma \;=\; \frac{\varepsilon_0}{2k} \qquad \text{and} \qquad
    h \;=\; \left\lceil \frac{2(1+\gamma)\binom{k}{2}}{\gamma\varepsilon_0} \right\rceil \enspace .
\end{align*}

\paragraph*{Guessing the heavy contributors.}

For every pair $(a,b)$, the correct guess selects the $h$ largest positive contributors to $d(\pi_a,\pi_b)$, or all positive contributors if there are fewer than $h$, together with their full $k$-tuples in $\pi_1,\ldots,\pi_k$.
Let $G$ be the set of guessed jobs.
For each $j\in G$, let $\tau(j)=(\tau_1(j),\ldots,\tau_k(j))$ be the guessed $k$-tuple, where $\tau_a(j)$ is the guessed machine of $j$ in schedule $a$.
For the correct guess, $\tau_a(j)=\pi_a(j)$ for every $j\in G$ and $a\in[k]$.
We define
\begin{align*}
    T^{\mathrm{fix}}_{a,i} 
    \;=\; \sum_{\substack{j\in G \\ \tau_a(j)=i}} p_{ij},
    \qquad T^{\mathrm{res}}_{a,i}
    \;=\; T-T^{\mathrm{fix}}_{a,i} ,
\end{align*}
and
\begin{align*}
    Q^{\mathrm{fix}}_{a,b} 
    \;=\; \sum_{j\in G} w_j\mathbf 1[\tau_a(j)\ne\tau_b(j)],
    \qquad Q^{\mathrm{res}}_{a,b}
    \;=\;\left(Q-Q^{\mathrm{fix}}_{a,b}\right)^+  \enspace .
\end{align*}
We say that pair~$(a,b)$ is \emph{active} when $Q^{\mathrm{res}}_{a,b}>0$. 
For every unguessed job~$j$, we start with all tuples in~$\Gamma(j)^k$ and delete every tuple~$\mathbf{i}$ with~$q^{a,b}_{\mathbf{i},j} > Q/h$ for some active pair~$(a,b)$. 
Let~$\mathcal{A}_j$ be the remaining tuples. 
The residual $k$-tuple LP is obtained from \eqref{lp:initial-ktuple-maxmin} by restricting to jobs~$J\setminus G$, replacing~$\Gamma(j)^k$ by $\mathcal A_j$, replacing the load targets by~$T^{\mathrm{res}}_{a,i}$, and imposing diversity targets~$Q^{\mathrm{res}}_{a,b}$ for all active pairs.

\begin{lemma}[Removing high-contribution pairs is safe]
    \label{lem:correct-guess-bounded}
    No tuple used by $\pi_1,\ldots,\pi_k$ is deleted. 
    Therefore, the residual $k$-tuple LP is feasible, and every remaining tuple satisfies $q^{a,b}_{\mathbf i,j}\le Q/h$ for every active pair~$(a,b)$.
\end{lemma}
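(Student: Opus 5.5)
The plan is to prove the first claim by contradiction, using the ordering of the guessed jobs. The second and third claims then follow almost directly: we exhibit the reference family's assignment of the unguessed jobs as a feasible residual LP point, and we read off the coefficient bound from the definition of~$\mathcal A_j$.

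\emph{No reference tuple is deleted.} Fix an unguessed job~$j\in J\setminus G$ and its reference tuple $\mathbf{i}=(\pi_1(j),\dots,\pi_k(j))$. Suppose this tuple is deleted. Then there is an active pair~$(a,b)$ with $q^{a,b}_{\mathbf i,j}>Q/h$, so $\pi_a(j)\ne\pi_b(j)$ and $w_j>Q/h$. In particular, $j$ is a positive contributor to $d(\pi_a,\pi_b)$.

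We split into two cases according to how many positive contributors this pair has.
\begin{itemize}
\item If the pair has fewer than~$h$ positive contributors, all of them were guessed, so $j\in G$, which is a contradiction.
\item Otherwise, $G$ contains the $h$ largest positive contributors to this pair. Since $j$ is a positive contributor that was not guessed, each of these $h$ jobs has weight at least $w_j>Q/h$. With ties broken consistently in the guess, this gives
\[
Q^{\mathrm{fix}}_{a,b}\;\ge\; h\cdot w_j\;>\;Q .
\]
Hence $Q^{\mathrm{res}}_{a,b}=0$, and the pair is inactive, again a contradiction.
\end{itemize}

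\emph{Feasibility of the residual LP.} We set $z^j_{\mathbf i}=1$ exactly for the reference tuple of each $j\in J\setminus G$; by the first part this tuple lies in~$\mathcal A_j$. The job equalities hold trivially. For the load constraint of schedule~$a$ and machine~$i$, the load of the unguessed jobs placed on~$i$ in~$\pi_a$ is at most $T-T^{\mathrm{fix}}_{a,i}=T^{\mathrm{res}}_{a,i}$. This uses that, for the correct guess, $\tau_a(j)=\pi_a(j)$ for every $j\in G$, and that $L_i(\pi_a)\le T$.

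For an active pair~$(a,b)$, the residual diversity is
\[
d(\pi_a,\pi_b)-Q^{\mathrm{fix}}_{a,b}\;\ge\; Q-Q^{\mathrm{fix}}_{a,b}\;=\;Q^{\mathrm{res}}_{a,b},
\]
where the equality holds because the pair is active.

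The final claim, that every remaining tuple has $q^{a,b}_{\mathbf i,j}\le Q/h$ for all active pairs, holds by the definition of~$\mathcal A_j$. The only delicate point in the whole argument is handling ties among equal-weight contributors in the heavy-contributor guess. Since each guessed job has weight at least~$w_j$, the inequality $Q^{\mathrm{fix}}_{a,b}\ge h\,w_j$ holds regardless of how ties are broken, so this does not cause a problem.
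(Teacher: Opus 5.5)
Your proof is correct and follows the paper's argument essentially verbatim. It uses the same two-case contradiction: either there are fewer than $h$ positive contributors, or the $h$ guessed contributors each have weight at least $w_j>Q/h$, which forces $Q^{\mathrm{fix}}_{a,b}>Q$. You also spell out the feasibility check for the restricted reference schedules in more detail than the paper does. The one step left implicit, in your proof and in the paper's, is that the reference tuple lies in $\Gamma(j)^k$ to begin with; this holds because $p_{\pi_a(j)j}\le L_{\pi_a(j)}(\pi_a)\le T$.
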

\begin{proof}
    Suppose that the tuple of an unguessed job~$j$ used by $\pi_1,\ldots,\pi_k$ is deleted for an active pair~$(a,b)$. 
    Then~$q^{a,b}_{\mathbf i,j}>Q/h$. 
    If fewer than $h$ jobs contribute positively to~$d(\pi_a,\pi_b)$, all of them were guessed, including~$j$, a contradiction.
    Otherwise, the~$h$ guessed contributors for this pair each have contribution at least~$q^{a,b}_{\mathbf i,j}>Q/h$, so their total contribution is larger than~$Q$. 
    This gives~$Q^{\mathrm{fix}}_{a,b}>Q$, contradicting that~$(a,b)$ is active. 
    Thus no tuple used by $\pi_1,\ldots,\pi_k$ is deleted, and restricting these schedules to the unguessed jobs makes the LP feasible. 
    The bounded-coefficient property holds by construction.
\end{proof}

\paragraph*{Rounding the residual LP.}

For the following rounding analysis, we assume $T>0$.
We now round the feasible residual LP produced by the correct guess.
It has tuple sets~$\mathcal A_j$, load targets~$T^{\mathrm{res}}_{a,i}$, and active diversity targets~$Q^{\mathrm{res}}_{a,b}$. 
Furthermore, we define constants
\begin{align*}
    \beta=\frac{\gamma kT h}{\binom{k}{2}Q},\qquad \qquad \qquad
    \Lambda=(1+\gamma)kT  \enspace .
\end{align*}
The iterative relaxation keeps the true, unscaled load and diversity constraints. 
We only use scaling to measure the mass of diversity constraints. 
The scaled coefficient of~$z^j_{\mathbf{i}}$ in diversity constraint~$(a,b)$ is~$\beta q^{a,b}_{\mathbf{i},j}$.
The scaled mass of a constraint is the sum of its remaining coefficients, with diversity coefficients multiplied by~$\beta$ and load coefficients unchanged.

Since every tuple is eligible, a tuple variable has total load mass at most~$kT$. 
Since $q^{a,b}_{\mathbf{i},j}\le Q/h$ for every active pair, its total scaled diversity mass is at most~$\binom{k}{2}\beta\frac{Q}{h} = \gamma kT$.
Thus every variable has total scaled mass over all remaining load constraints and active diversity constraints at most~$\Lambda$.

\begin{lemma}[Small-mass constraint]
\label{lem:small-mass-constraint}
    If the current residual LP has at least one variable and an extreme point with no integral variable, then some remaining load constraint or active diversity constraint has remaining scaled coefficient mass at most $2\Lambda$.
\end{lemma}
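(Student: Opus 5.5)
The plan is a counting argument in the style of the max-sum rounding lemma (\Cref{lem:multiplicity-rounding}) and the proof of \Cref{lem:rounding-revised}. Let $E$ be the remaining variables, $J'$ the remaining unguessed jobs, and $R'$ the remaining load constraints together with the remaining active diversity constraints. Suppose the extreme point $z$ has $0<z^j_{\mathbf i}<1$ for all $(j,\mathbf i)\in E$. Then no nonnegativity constraint is tight. By the rank lemma, $|E|$ is at most the number of linearly independent tight constraints among the job equalities and $R'$, so $|E|\le |J'|+|R'|$. Every remaining job has at least two fractional variables, because its variables sum to one and each is below one; hence $2|J'|\le |E|$. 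Combining the two bounds gives $|E|\le 2|R'|$. Since $E\neq\emptyset$, this also shows $R'\neq\emptyset$, so the averaging below is over a nonempty set.

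Next I would double count the total scaled mass $\sum_{c\in R'}\mathrm{mass}(c)$. Here the mass of a load constraint $(a,i)$ is $\sum_{(j,\mathbf i)\in E:\,i_a=i}p_{ij}$, and the mass of a diversity constraint $(a,b)$ is $\beta\sum_{(j,\mathbf i)\in E}q^{a,b}_{\mathbf i,j}$. Exchanging the order of summation, the total equals $\sum_{(j,\mathbf i)\in E}$ of the variable's mass over all constraints in $R'$.

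To bound the per-variable mass, note the following.
\begin{itemize}
\item A variable appears in exactly $k$ load constraints (one per coordinate $a$), each with coefficient $p_{i_aj}\le T$, because tuples use only eligible machines. Dropped constraints only reduce this, so the load part is at most $kT$.
\item The variable has at most $\binom{k}{2}$ active diversity coefficients, each at most $Q/h$ by \Cref{lem:correct-guess-bounded}. With the definition of $\beta$, the scaled diversity part is at most $\binom{k}{2}\beta Q/h=\gamma kT$.
\end{itemize}
So each variable contributes at most $\Lambda=(1+\gamma)kT$. This is exactly the remark preceding the lemma, which I would cite.

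Combining, the total mass is at most $\Lambda|E|\le 2\Lambda|R'|$. Averaging over the $|R'|\ge 1$ constraints yields some constraint in $R'$ with mass at most $2\Lambda$, which proves the lemma.

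The main obstacle is the careful application of the rank lemma. First, the linearly independent tight constraints must include only job equalities and constraints of $R'$. This holds because dropped constraints and constraints with nonpositive residual right-hand side have already been removed from the LP, and no variable bound is tight. Second, one must check that restricting the sum to remaining variables and remaining constraints is consistent with the per-variable bound $\Lambda$. That holds because deleting constraints only decreases each variable's mass, so the bound survives. One edge case needs separate handling: jobs with no remaining variables are not in $J'$, and fixed jobs are already removed, so every job in $J'$ has at least two variables in $E$.
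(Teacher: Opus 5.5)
Your proof is correct and follows the paper's argument essentially step for step. The rank lemma gives $|E|\le|J'|+|R'|$, fractionality gives $|J'|\le|E|/2$, and the per-variable scaled-mass bound $\Lambda$ then forces some constraint in $R'$ to have mass at most $2\Lambda$ by averaging. Your extra remarks (exactly $k$ load coefficients per variable, each at most $T$, and dropped constraints only lowering per-variable mass) just spell out the bound the paper states right before the lemma.
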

\begin{proof}
  Let~$E$ be the set of remaining tuple variables, let~$J'$ be the set of active jobs, and~$R'$ be the set of remaining load constraints and active diversity constraints. 
  Since no variable bound is tight, the extreme point is determined by the active job equalities and the tight remaining load or active diversity constraints, so~$|E|\le |J'|+|R'|$. 
  Every active job has right-hand side~$1$, and all its remaining variables are non-integral, so it has at least two remaining tuple variables. 
  Hence~$|J'|\le |E|/2$, and thus~$|R'|\ge |E|/2$. 
  Since the total scaled mass over all constraints in~$R'$ is at most~$\Lambda|E|$, the pigeonhole principle implies that some constraint in~$R'$ has mass at most~$2\Lambda$.
\end{proof}

\begin{restatable}[Rounding after guessing]{lemma}{roundingafterguessing}
\label{lem:rounding-after-guessing}
  For the correct heavy-contributor guess, the residual $k$-tuple LP is feasible and can be rounded in polynomial time for fixed~$k$ and~$\varepsilon_0$ to schedules $\sigma_1,\ldots,\sigma_k$ such that $\Cmax(\sigma_a)\le (2k+1+\varepsilon_0)T$ for every $a\in[k]$, and $d(\sigma_a,\sigma_b)\ge (1-\varepsilon_0)Q$ for every $1\le a<b \le k$.
\end{restatable}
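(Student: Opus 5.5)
Feasibility follows directly from \Cref{lem:correct-guess-bounded}. The rounding is iterative relaxation. In each iteration we compute an extreme-point solution of the current residual LP. Variables with value $0$ are deleted. A variable $z^j_{\mathbf i}=1$ is committed: job $j$ is fixed to tuple $\mathbf i$, $p_{i_a j}$ is subtracted from the residual load target of $(a,i_a)$ for every $a$, and $q^{a,b}_{\mathbf i,j}$ is subtracted from every active diversity target. A diversity constraint whose residual right-hand side becomes nonpositive is removed, since it is already satisfied by committed jobs. If no variable is integral, \Cref{lem:small-mass-constraint} gives a remaining load or active diversity constraint with scaled mass at most $2\Lambda$, and we drop it. Each iteration deletes a variable or drops a constraint, so there are at most $nm^k+km+\binom{k}{2}$ iterations, which is polynomial for fixed $k$. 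At termination every unguessed job has exactly one tuple. Combined with the guessed tuples $\tau(j)$, this gives the schedules $\sigma_1,\dots,\sigma_k$.

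\emph{Load bound.} Fix a schedule $a$ and a machine $i$. If the constraint $(a,i)$ is never dropped, the final load is at most $T^{\mathrm{fix}}_{a,i}+T^{\mathrm{res}}_{a,i}=T$. Otherwise, consider the moment it is dropped. Its remaining mass $\sum p_{ij}$ over the remaining variables is at most $2\Lambda=2(1+\gamma)kT$. Committed load up to then plus the current fractional load is at most $T$, since the constraint held at each extreme point and commitments only use integral values. Everything assigned afterwards comes from the remaining variables, so the extra load is at most their total coefficient mass, $2\Lambda$. Hence the final load is at most $T+2(1+\gamma)kT=(2k+1+2\gamma k)T=(2k+1+\varepsilon_0)T$, using $\gamma=\varepsilon_0/(2k)$. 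The guessed jobs are already included through $T^{\mathrm{fix}}_{a,i}$.

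\emph{Diversity bound.} Consider an active pair $(a,b)$ whose constraint is dropped. Its scaled remaining mass is at most $2\Lambda$, so its unscaled remaining coefficient mass is at most
\[
\frac{2\Lambda}{\beta}=\frac{2(1+\gamma)kT\binom{k}{2}Q}{\gamma kTh}=\frac{2(1+\gamma)\binom{k}{2}Q}{\gamma h}\le\varepsilon_0 Q
\]
by the choice of $h$. At that moment, committed diversity plus fractional diversity on the remaining variables is at least the residual target. The fractional part is at most the remaining unscaled mass, so committed unguessed diversity is already at least $Q^{\mathrm{res}}_{a,b}-\varepsilon_0Q$. Later commitments only add diversity. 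Constraints kept to the end, or removed as already satisfied, end with at least $Q^{\mathrm{res}}_{a,b}$. Adding $Q^{\mathrm{fix}}_{a,b}$ gives $d(\sigma_a,\sigma_b)\ge Q-\varepsilon_0Q$. Inactive pairs already have $Q^{\mathrm{fix}}_{a,b}\ge Q$.

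\emph{Main obstacle.} The delicate step is justifying the ``at most the coefficient mass'' bound when a dropped constraint can still lose ground afterwards. For load, later integral choices may select variables whose fractional value was small. The fix is to measure the final violation by the full remaining coefficient mass, not by $\sum p(1-z)$; this is exactly what the $2\Lambda$ bound controls. For diversity, only variables present at the drop time can contribute afterwards, so bounding by their total mass suffices. I would also check that \Cref{lem:small-mass-constraint} still applies after removing satisfied diversity constraints, which it does since removing constraints only shrinks $R'$ consistently with the rank argument.
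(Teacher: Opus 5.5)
Your proposal is correct and follows essentially the same route as the paper. Like the paper, you run iterative relaxation and use \Cref{lem:small-mass-constraint} to drop a constraint of scaled mass at most $2\Lambda$. You bound the load overshoot of a dropped constraint $(a,i)$ by its remaining coefficient mass, which gives $T+2(1+\gamma)kT=(2k+1+\varepsilon_0)T$, and you bound the diversity loss of a dropped pair by $2\Lambda/\beta\le\varepsilon_0 Q$.

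One side remark in your ``main obstacle'' paragraph is inaccurate, though it does not affect the proof: the overshoot bound $\sum p_{ij}(1-z^j_{\mathbf i})$ would also have been valid, because each final integral value exceeds its fractional value by at most $1-z^j_{\mathbf i}$. It simply is not the quantity that \Cref{lem:small-mass-constraint} controls.
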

\begin{proof}
  By \Cref{lem:correct-guess-bounded}, the residual $k$-tuple LP is feasible and all remaining diversity coefficients are at most~$Q/h$ for active pairs.
    
    At each iteration, we compute an extreme-point solution of the current residual LP. 
    If a variable has value~$0$, we remove it. 
    If a variable has value~$1$, we fix the corresponding tuple, update all residual right-hand sides, set the other variables of that job to~$0$, and remove the job. 
    If an active diversity constraint has non-positive residual right-hand side, we remove it. 
    If all remaining variables are fractional, we apply \Cref{lem:small-mass-constraint} and drop a remaining load constraint or active diversity constraint of scaled mass at most~$2\Lambda$. 
    Since job equalities are never dropped, the final solution selects exactly one tuple per unguessed job. 
    Together with the guessed tuples, this gives complete schedules.
    
    Consider a load constraint~$(a,i)$. 
    If it is never dropped, the final integral solution satisfies its residual target. 
    If it is dropped, consider the moment when it is dropped. 
    The current fractional solution satisfies the residual load bound, and the total remaining load mass in this constraint is at most~$2\Lambda$. 
    Hence, after the constraint is dropped, the additional load that can still be assigned to~$(a,i)$ is at most~$2\Lambda$. 
    Adding the guessed load, we get that the final load is at most
    \begin{align*}
        T^{\mathrm{fix}}_{a,i} + T^{\mathrm{res}}_{a,i}+ 2\Lambda
        \;=\; T + 2(1+\gamma)kT
        \;=\; (2k+1+\varepsilon_0) T  \enspace .
    \end{align*}
    
    Now consider a pair~$(a,b)$. 
    If it is inactive, then~$Q^{\mathrm{fix}}_{a,b}\ge Q$, so the final diversity is already at least~$Q$. 
    If it is active and its constraint is never dropped, or if it is removed because its residual target is non-positive, then the residual target is met. 
    If it is dropped with positive residual target, the remaining scaled mass is at most~$2\Lambda$. 
    Hence the unscaled diversity that we can lose after the constraint is dropped is at most
    \begin{align*}
        \frac{2\Lambda}{\beta} 
        \;=\; \frac{2 (1+\gamma) kT}{\gamma kT h / (\binom{k}{2}Q)}
        \;=\; \frac{2 (1+\gamma) \binom{k}{2} Q}{\gamma h} 
        \;\le\; \varepsilon_0 Q \enspace .
    \end{align*}
    Thus every active pair has final diversity at least $Q^{\mathrm{fix}}_{a,b} + Q^{\mathrm{res}}_{a,b} - \varepsilon_0 Q \ge (1-\varepsilon_0)Q$.
    
    The run time of the rounding step is polynomial for fixed~$k$ and~$\varepsilon_0$, because the LP has~$\mathcal{O}(nm^k)$ variables and each iteration fixes a variable, removes a job, or drops a constraint.
\end{proof}

\paragraph*{Guessing the diversity target and proof of \Cref{thm:max-min-ktuple}.}

\begin{proof}[Proof of \Cref{thm:max-min-ktuple}]
    If $T=0$, we replace every positive processing time by~$2$
    and use makespan target~$1$. This preserves the eligible pairs and the feasible schedules, so the diversity optimum is unchanged.
    The rounding procedure assigns jobs only to eligible machines; hence every output schedule has makespan~$0$ in the original
    instance. We may therefore assume $T>0$.
    
    Let~$Q^*=\OPT_k^{\min}(T)$ and set~$\varepsilon_0=\varepsilon/3$. 
    We always try the value $Q=0$ and retain the resulting $k$ schedules as an initial candidate family, called the baseline family.
    For this run, we ignore the diversity constraints and apply the same iterative rounding with only load constraints.
    Since each tuple has total load mass at most~$kT$, the same counting argument gives schedules of makespan at most~$(2k+1)T$.
    
    Let $W=\sum_j w_j$.
    If $W=0$, return the baseline family.
    Otherwise, let \mbox{$w_{\min}=\min\{w_j \mid w_j>0\}$}.
    Enumerate $Q=W/(1+\varepsilon_0)^\ell$ for $\ell=0,1,\ldots$, including and stopping at the first value with $Q\leq w_{\min}$.
    The baseline covers $Q^*=0$. If $Q^*>0$, then $w_{\min}\leq Q^*\leq W$, so some value~$\widehat Q$ satisfies
    \begin{align}\label{eq:geometric}
        \widehat{Q} \;\le\; Q^* \;<\; (1+\varepsilon_0)\widehat{Q} \enspace .
    \end{align}
    
    For a fixed guessed value~$Q>0$, we enumerate for every pair~$(a,b)$ up to~$h$ jobs together with their full $k$-tuples. We discard guesses that are inconsistent or whose fixed jobs already violate a makespan constraint.
    For each remaining candidate, we form the residual LP, discard the candidate if this LP is infeasible, and otherwise apply the same iterative-rounding procedure. 
    The number of candidates is~$(nm^k)^{\mathcal{O}(\binom{k}{2}h)}$.
    
    For the guessed value~$Q = \widehat{Q}$, consider the candidate that agrees with a set of optimal schedules with max-min diversity~$Q^*$. 
    Since~$\widehat{Q} \le Q^*$, these schedules have pairwise diversity at least~$\widehat{Q}$, and \Cref{lem:rounding-after-guessing} applies. 
    Thus we obtain schedules of makespan at most~$(2k+1+\varepsilon_0) T$ and pairwise diversity at least $(1-\varepsilon_0) \widehat{Q}$.
    
    Using~$Q=\widehat{Q}$ and \eqref{eq:geometric}, we have that
    \begin{align*}
        d_{\min}(\sigma_1,\ldots,\sigma_k)
        \;\ge\; (1-\varepsilon_0)\widehat{Q}
        \;\ge\; \frac{1-\varepsilon_0}{1+\varepsilon_0} Q^* 
        \;\ge\; (1-\varepsilon) Q^*  .
    \end{align*}
    Moreover, the makespan is at most~$(2k+1+\varepsilon_0)T \le (2k+1+\varepsilon) T$.
    
    Since~$h=\mathcal{O}\left(\frac{k^3}{\varepsilon_0^2}\right) =\mathcal{O}\left(\frac{k^3}{\varepsilon^2}\right)$, the enumeration takes time 
    \begin{align*}
       \bigl(nm^k\bigr)^{\mathcal{O}\bigl(\binom{k}{2} h \bigr)}
        \;=\; \bigl(nm^k\bigr)^{\mathcal{O}\bigl(k^5/\varepsilon^2\bigr)} \enspace .
    \end{align*}
    Each rounding run is polynomial in the size of the $k$-tuple LP, and the geometric search contributes only a polynomial factor in the input encoding length~$|I|$. 
    Therefore the total run time is~$(nm^k)^{\mathcal{O}(k^5/\varepsilon^2)}\cdot |I|^{O(1)}$.
    Returning a solution of maximum actual max-min diversity over all guesses and the baseline family yields the theorem.
\end{proof}

\subsection{Unrelated-machine scheduling with covering constraints}
\label{app:covering-constraints}

\paragraph*{Guessing the heavy jobs.}
\label{subsubsec:guess-heavy-contributions}
Let~$\tau$ denote an (unknown) integral feasible solution of~\eqref{eq:max-min-div-LP}, which exists by assumption, and fix~$0<\varepsilon<1$.
We say a job~$j$ is \emph{heavy} for constraint~$a$ if it is among the~$h$ jobs of largest positive contribution~$q^a_{\tau(j),\,j}$ under~$\tau$, where
\begin{align*}
    h \;=\; \left\lceil \frac{2(1+\varepsilon)\, r}{\varepsilon^2} \right\rceil  .
\end{align*}

The assignment~$\tau$ is unknown, hence the algorithm cannot identify the heavy jobs directly.
Instead, we enumerate all possible choices.
For each constraint~$a\in[r]$ we guess up to~$h$ jobs together with their assigned machines.
We call such a guessed partial assignment a \emph{candidate}.
We immediately discard candidates that assign one job to multiple different machines or whose guessed jobs already violate a makespan target.
For each constraint, the number of possible guesses is at most
$\sum_{\ell=0}^{\min\{h,n\}} \binom n\ell m^\ell \le (n+1)(nm)^h$.
Thus the total number of candidates is at most
\begin{align*}
  \bigl((n+1)(nm)^h\bigr)^r =
  (nm)^{\mathcal{O}(rh)} =
  (nm)^{\mathcal{O}(r^2/\varepsilon^2)}  .
\end{align*}
Among these candidates is the one that, for every constraint~$a\in[r]$, assigns the heavy contributors of~$a$ to their machines under~$\tau$, or all positively contributing jobs if there are fewer than~$h$ of them.
We fix this candidate for the remainder of the analysis.
This candidate does not need to reconstruct~$\tau$ on the other jobs. It only fixes the heaviest contributors of~$\tau$, which is precisely the property that bounds the rounding loss.

For this fixed candidate, let~$G$ be the set of guessed jobs, where~$j$ is placed on guessed machine~$\tau(j)$.
We initialize the \emph{residual makespan targets} and \emph{residual covering targets} as
\begin{align*}
    T_i^{\mathrm{res}} \;=\; T_i - \sum_{j\in G:\,\tau(j)=i} p_{ij}
    \qquad\text{and}\qquad
    Q_a^{\mathrm{res}} \;=\; \Bigl(Q_a - \sum_{j\in G} q^a_{\tau(j),\,j}\Bigr)^+  .
\end{align*} 
By the discard rule, $T_i^{\mathrm{res}}\ge 0$ for every machine~$i$. The residual makespan and covering targets are updated as the iterative rounding assigns more jobs.
A constraint with~$Q_a^{\mathrm{res}} = 0$ is already covered by~$G$ and is discarded from the residual LP.
Furthermore, let~$\mathcal{A} = \{a\in [r]\mid Q_a^\mathrm{res}>0\}$ be the remaining \emph{active} covering constraints.
The \emph{residual LP} has a variable~$x_{ij}$ for each unguessed job~$j$ and machine~$i\in\Gamma(j)$, assignment constraints~$\sum_{i\in\Gamma(j)} x_{ij} = 1$, capacities~$\sum_{j} p_{ij} x_{ij} \leq T_i^{\mathrm{res}}$, and active targets~$\sum_{j}\sum_{i} q^a_{ij} x_{ij} \geq Q_a^{\mathrm{res}}$ for~$a\in\mathcal{A}$.
Restricting~$\tau$ to the unguessed jobs gives a feasible integral point, so the residual LP is feasible.

We now show that in the residual LP we can safely remove every variable~$x_{ij}$ whose contribution~$q^a_{ij}$ is large relative to the target~$Q_a$ of some active constraint~$a\in\mathcal{A}$, namely $q^a_{ij} > Q_a/h$.
We show that after removing these high-contribution variables, the restriction of~$\tau$ to the unguessed jobs remains feasible for the residual LP.
\begin{lemma}[Removing high-contribution pairs is safe]
    \label{lem:high-contribution-pair-removal}
    Delete from the residual LP every variable~$x_{ij}$ with~$q^a_{ij} > Q_a/h$ for some active~$a\in\mathcal{A}$.
    Then no assignment pair used by~$\tau$ is deleted.
    Consequently the residual LP remains feasible, witnessed by~$\tau$ restricted to the unguessed jobs, and every surviving variable satisfies~$q^a_{ij}\leq Q_a/h$ for all~$a\in\mathcal{A}$.
\end{lemma}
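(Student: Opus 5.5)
The plan is a contradiction argument that mirrors \Cref{lem:correct-guess-bounded}. Suppose some unguessed job~$j\notin G$ has its $\tau$-pair $x_{\tau(j),j}$ deleted. By the deletion rule, there is an active constraint~$a\in\mathcal{A}$ with $q^a_{\tau(j),j} > Q_a/h$. In particular, $q^a_{\tau(j),j}>0$, so $j$ is a positive contributor to constraint~$a$ under~$\tau$.

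We then split into two cases according to the number of positive contributors of~$a$ under~$\tau$.

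If constraint~$a$ has fewer than~$h$ positive contributors, the fixed candidate guessed all of them. Hence $j\in G$, contradicting that $j$ is unguessed.

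Otherwise, the candidate guessed the $h$ jobs of largest positive contribution to~$a$. Since $j$ was not among them, each guessed job $j'$ satisfies $q^a_{\tau(j'),j'}\ge q^a_{\tau(j),j}>Q_a/h$. Using nonnegativity of all coefficients, we get
\begin{equation*}
\sum_{j'\in G} q^a_{\tau(j'),j'} \;>\; h\cdot Q_a/h \;=\; Q_a.
\end{equation*}
Therefore $Q_a^{\mathrm{res}}=0$, so $a\notin\mathcal{A}$, which is a contradiction. One subtle point is ties in the ranking. Ties are harmless, because "$j$ not among the top $h$" still gives each top-$h$ contribution at least $q^a_{\tau(j),j}$ under any fixed tie-breaking rule.

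Hence no pair $(\tau(j),j)$ with $j\notin G$ is deleted. The restriction of~$\tau$ to $J\setminus G$ is therefore still a valid point of the pruned residual LP. It satisfies the three families of constraints as follows:
\begin{itemize}
\item The assignment equalities hold trivially.
\item The capacities hold, since the load of machine~$i$ from unguessed jobs equals $L_i(\tau)-\sum_{j\in G:\tau(j)=i}p_{ij}\le T_i^{\mathrm{res}}$.
\item The covering constraints hold, since the unguessed contribution to~$a$ is at least $Q_a-\sum_{j\in G}q^a_{\tau(j),j}$, and for active~$a$ this equals $Q_a^{\mathrm{res}}$.
\end{itemize}
The bounded-coefficient property of surviving variables holds directly by construction of the deletion rule.

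No step is a real obstacle. The only care needed is the tie-breaking and the observation that $q\ge 0$, so that the guessed jobs' total strictly exceeds~$Q_a$.
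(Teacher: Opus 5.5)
Your proof is correct and follows the paper's argument: the same two-case contradiction, where fewer than $h$ positive contributors forces $j\in G$, and otherwise the $h$ guessed contributors for $a$ sum to more than $Q_a$, so $Q_a^{\mathrm{res}}=0$ and $a$ would not be active. Your explicit check that $\tau$ restricted to $J\setminus G$ meets the residual capacities and covering targets, together with your remarks on ties and on nonnegativity of the other guessed jobs' contributions, spells out steps the paper leaves implicit.
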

\begin{proof}
    Suppose that there is an active covering constraint~$a\in \mathcal{A}$ and an unguessed job~$j\not\in G$ such that~$\tau(j) = i^\star$ and~$q^a_{i^\star j} > Q_a/h$.
    We show that this is impossible by deriving a contradiction.
    If~$\tau$ uses fewer than~$h$ jobs with positive contribution to covering constraint~$a$, then the candidate guesses all of them, including~$j$, a contradiction. Hence,~$\tau$ uses at least~$h$ jobs with positive contribution to covering constraint~$a$. The candidate guesses the~$h$ largest positive contributors for constraint~$a$ under~$\tau$. Since the unguessed assignment pair~$(i^\star,j)$ has contribution~$q^a_{i^\star j}>Q_a/h$, each guessed contributor has contribution at least~$q^a_{i^\star j}$. Together they therefore contribute more than~$h (Q_a/h)=Q_a$ to constraint~$a$.
    It follows that~$\sum_{j'\in G} q^a_{\tau(j'),\,j'} > Q_a$, hence~$Q_a^{\mathrm{res}} = 0$, contradicting that~$a$ is active.
    Therefore no pair of~$\tau$ is deleted, so~$\tau$ restricted to the unguessed jobs remains feasible, and the surviving variables are exactly those with~$q^a_{ij}\leq Q_a/h$ for every active~$a\in\mathcal{A}$.
\end{proof}

For an arbitrary candidate, if the resulting residual LP is infeasible, we discard the candidate. Otherwise, we run the iterative-rounding procedure described below.

\paragraph*{Iterative rounding.}
\label{subsubsec:iterative-rounding}

After removing the variables corresponding to high-contribution pairs, the algorithm rounds the residual LP by iterative relaxation.
We maintain a fractional extreme point~$x$ of the current residual LP and repeatedly apply one of the relaxation rules below.
We recompute an extreme point of the updated LP after each relaxation.
Throughout, we write~$E$ for the set of pairs~$(i,j)$ such that the variable~$x_{ij}$ is part of the residual LP.
Moreover, for a machine~$i$, we write~$d(i)$ for the number of unassigned jobs that still have~$i$ as an eligible machine, i.e., the number of remaining variables~$x_{ij}$.
A job is \emph{assigned} once its variable is fixed to~$1$, at which point the residual capacity~$T_i^{\mathrm{res}}$ is decreased by the load of this job, and each residual covering target~$Q_a^{\mathrm{res}}$ becomes~$(Q_a^{\mathrm{res}}-q^a_{ij})^+$.

The rules are as follows.
If a variable~$x_{ij}$ is equal to~$0$, it is removed from the LP.
If a variable~$x_{ij}$ is equal to~$1$, then job~$j$ is assigned to machine~$i$, and we update~$T_i^{\mathrm{res}}$ and the~$Q_a^{\mathrm{res}}$ accordingly and remove the variables corresponding to~$j$.
If an active constraint~$a$ is covered by the assigned jobs, that is,~$Q_a^{\mathrm{res}} = 0$, then we remove it from~$\mathcal{A}$ and the LP.
Finally, if some machine~$i$ has~$d(i) \leq 1 + \varepsilon + \sum_{j} x_{ij}$, then we remove the makespan constraint of machine~$i$ from the LP.
This last rule is inspired by the iterative rounding procedure of Korupolu, Meyerson, Rajaraman, and Tagiku~\cite{KorupoluMRT2015}. 

Each rule strictly decreases the number of variables or constraints, so the process terminates.
Next we show that once none of these rules applies, the remaining number of variables is small, and since all jobs have low contribution to the covering constraints, completing the remaining assignment can lose at most an~$\varepsilon$ fraction of each covering target.

\begin{restatable}{lemma}{supportboundlemma}
    \label{lem:support-bound}
    Let~$x$ be an extreme point of the residual LP at which none of the rules above apply.
    Then $|E| \;\leq\;  \frac{2(1+\varepsilon)}{\varepsilon}\cdot r$.
\end{restatable}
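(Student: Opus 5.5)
The plan is the standard rank-lemma counting argument, with the machine-dropping rule doing the work of the slack bound. Let $J'$ be the unassigned (unguessed) jobs with remaining variables, $M'$ the machines whose makespan constraint is still present, and $\mathcal{A}$ the active covering constraints, so $|\mathcal{A}|\le r$. Since none of the rules applies, every remaining variable satisfies $0<x_{ij}<1$. Hence no nonnegativity bound is tight, and by the rank lemma
\begin{equation*}
  |E| \;\le\; |J'| + |M'| + |\mathcal{A}| \;\le\; |J'| + |M'| + r .
\end{equation*}
Here we may simply overcount by including all remaining machine and covering constraints, not only the tight ones. Also, each job $j\in J'$ has $\sum_{i}x_{ij}=1$ with all its values strictly below $1$. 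So $j$ has at least two remaining variables, which gives $|J'|\le |E|/2$.

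Next I would bound $|M'|$ using the fact that the last rule does not fire. For every $i\in M'$ we have $d(i) > 1+\varepsilon+\sum_j x_{ij}$, which is equivalent to
\begin{equation*}
  \sum_{j:(i,j)\in E}(1-x_{ij}) \;>\; 1+\varepsilon .
\end{equation*}
In particular, a machine with no remaining variables would have triggered the rule, so it is not in $M'$. Each variable $x_{ij}$ belongs to exactly one machine. Summing over $M'$ therefore gives
\begin{equation*}
  (1+\varepsilon)|M'| \;<\; \sum_{(i,j)\in E}(1-x_{ij}) \;=\; |E|-|J'| ,
\end{equation*}
where the equality uses $\sum_{(i,j)\in E}x_{ij}=|J'|$ from the assignment equalities.

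Substituting into the rank bound yields $|E|-|J'| \le \frac{|E|-|J'|}{1+\varepsilon} + r$. Rearranging gives $\frac{\varepsilon}{1+\varepsilon}(|E|-|J'|)\le r$, that is, $|E|-|J'|\le \frac{(1+\varepsilon)r}{\varepsilon}$. Finally, $|J'|\le |E|/2$ implies $|E|/2\le |E|-|J'|$, and hence $|E|\le \frac{2(1+\varepsilon)}{\varepsilon}\, r$.

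The only delicate step is making sure the extreme-point counting is legitimate. The tight constraints are the job equalities, the present machine constraints and the active covering constraints, and variable upper bounds never appear since $x\le 1$ is implied by the job equalities. Beyond that, the argument is pure bookkeeping.
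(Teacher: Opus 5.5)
Your proposal is correct and follows essentially the same argument as the paper. Both use the rank bound $|E|\le|J'|+|M'|+r$, the fact that $|J'|\le|E|/2$, and the failed drop condition summed over $M'$ to get $(1+\varepsilon)|M'|\le|E|-|J'|$. Because you use only the non-strict form of that last inequality, you also cover the case $M'=\emptyset$ (and the trivial case $E=\emptyset$) in one pass, whereas the paper handles $M'=\emptyset$ as a separate case.
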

\begin{proof}
    If $E=\varnothing$, the claim is immediate. We therefore assume $E\ne\varnothing$. Let~$M'$ be the set of machines still having a makespan constraint and let~$J'$ be the set of unassigned jobs.
    We denote by~$M'' = M\setminus M'$ the machines whose makespan constraint has been dropped.
    Every remaining variable satisfies~$0<x_{ij}<1$, because the first two rules do not apply.
    Moreover, since the last rule does not apply, we have~$d(i) > 1+\varepsilon+\sum_{j} x_{ij}$ for every machine~$i\in M'$, where the sum is over the unassigned jobs with a variable on~$i$.

    \pagebreak

    Each remaining variable is counted by exactly one machine, so~$\sum_{i\in M} d(i) = |E|$.
    Using the degree bound for~$M'$ and~$\sum_{i\in M} x_{ij} = 1$ yields
    \begin{align}
        |E|
        \;&=\; \sum_{i\in M'} d(i) + \sum_{i \in M''} d(i) 
        \;>\; (1+\varepsilon) |M'| + \sum_{i\in M'}\sum_{j\in J' : \, (i,j) \in E} x_{ij} + \sum_{i\in M''} \sum_{j\in J' :\, (i,j)\in E} x_{ij} \notag \\
        \;&=\; (1+\varepsilon) |M'| + \sum_{j\in J'} \sum_{i\in M :\, (i,j) \in E} x_{ij} 
        \;=\; (1+\varepsilon) |M'| + |J'|  \enspace . \label{eq:support-edge-lb}
    \end{align}

    All remaining variables are strictly fractional, so no variable-bound constraint is tight.
    Thus, the extreme point is determined by linearly independent tight constraints among the assignment constraints, the remaining makespan constraints, and the active covering constraints.
    There are at most~$|J'|$ assignment constraints, at most~$|M'|$ remaining makespan constraints, and at most~$r$ active covering constraints, hence
    \begin{align}
        |E| \;\leq\; |J'| + |M'| + r  \enspace .\label{eq:support-rank}
    \end{align}
    Moreover, each unassigned job is split over at least two machines, because it has~$0<x_{ij}<1$ and~$\sum_i x_{ij}=1$, so it accounts for at least two variables, hence~$|E|\geq 2|J'|$.
    If~$M'=\emptyset$, then \eqref{eq:support-rank} and~$|J'|\leq |E|/2$ imply~$|E|\leq |E|/2+r$, and hence~$|E|\leq 2r\leq \frac{2(1+\varepsilon)}{\varepsilon}r$. 
    Thus, we may assume~$M'\ne\emptyset$.
    We combine~\eqref{eq:support-edge-lb}, \eqref{eq:support-rank} and~$|J'|\leq \tfrac12 |E|$ to obtain
    \begin{align*}
        |E| \;\leq\; |J'| + |M'| + r
        &\;<\; |J'| + \frac{|E|-|J'|}{1+\varepsilon} + r
        \;=\; \frac{\varepsilon}{1+\varepsilon} \cdot |J'| + \frac{1}{1+\varepsilon} \cdot |E| + r \\
        &\;\leq\; \frac{\varepsilon}{2(1+\varepsilon)} \cdot |E| + \frac{1}{1+\varepsilon}\cdot |E| + r  \enspace .
    \end{align*}
    Rearranging yields~$\tfrac{\varepsilon}{2(1+\varepsilon)}\cdot |E| \leq r$, that is,~$|E|\leq \tfrac{2(1+\varepsilon)}{\varepsilon}\cdot r$.
\end{proof}

Thus, when none of the rounding rules apply, the process ends with at most~$|E| \leq \tfrac{2(1+\varepsilon)}{\varepsilon}r$ variables still fractional.
Crucially, the covering constraints no longer matter, because each surviving variable contributes at most~$Q_a/h$ to any active constraint~$a\in \mathcal{A}$, and hence by the choice of~$h$, the total fractional contribution is at most~$\varepsilon Q_a$.
This means that any assignment of the remaining fractional variables loses at most a fraction~$\varepsilon$ of each covering target.
We discard the remaining covering constraints.
The residual assignment LP now only has assignment and makespan constraints, and the surviving fractional solution is a feasible point.
Rounding this LP the same way as Shmoys and Tardos~\cite{ShmoysT1993} did assigns each remaining job to an eligible machine and increases the load on every machine whose makespan constraint remains by at most the largest processing time of a remaining job eligible on that machine, which is at most its target. 
We show in~\Cref{lem:makespan-guarantee} that the resulting load stays within~$(2+\varepsilon)T_i$ on each machine.
Before turning to the makespan, we first prove that the covering guarantee holds for any assignment of the remaining variables.

\begin{restatable}[Covering guarantee]{lemma}{coveringguaranteelemma}
\label{lem:covering-guarantee}
  When the iterative rounding terminates, dropping all remaining active covering constraints and completing the assignment yields for every~$a\in [r]$ that
    \begin{align*}
        \sum_{j\in J}\sum_{i\in\Gamma(j)} q^a_{ij} X_{ij} \;\geq\; (1-\varepsilon) Q_a  \enspace .
    \end{align*}
\end{restatable}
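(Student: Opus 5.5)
Fix a covering constraint $a\in[r]$ and split the final contribution $\sum_{j}\sum_i q^a_{ij}X_{ij}$ into three parts: the guessed jobs $G$, the jobs fixed to $1$ during the iterative rounding, and the jobs still fractional at termination. The plan is to show that the first two parts lose nothing relative to the LP, and the third loses at most $\varepsilon Q_a$.

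\emph{Inactive case.} If $a\notin\mathcal A$ from the start, then $\sum_{j\in G} q^a_{\tau(j),j}\ge Q_a$. Since all $q^a_{ij}\ge 0$ and the guessed jobs keep their machines, the bound follows immediately.

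\emph{Removed during rounding.} Suppose $a$ was removed because $Q_a^{\mathrm{res}}$ reached $0$. Track $Q_a^{\mathrm{res}}$, which starts at $(Q_a-\sum_{G}q^a)^+$ and is updated to $(Q_a^{\mathrm{res}}-q^a_{ij})^+$ whenever $x_{ij}$ is fixed to $1$. It reaches $0$ exactly when the guessed and fixed jobs together already contribute at least $Q_a$. Later assignments only add nonnegative terms, so again we get at least $Q_a$.

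\emph{Still active at termination.} This is the main case. Consider the last extreme point $x$ at which no rule applies. Feasibility of the active covering constraint gives
\[
\sum_{\text{fixed}} q^a + \sum_{(i,j)\in E} q^a_{ij}x_{ij} \;\ge\; Q_a - \sum_{G} q^a .
\]
Each surviving variable has $q^a_{ij}\le Q_a/h$, because high-contribution pairs were deleted by \Cref{lem:high-contribution-pair-removal} (deletion only happens at the start, and later steps only remove variables). Together with $x_{ij}\le 1$ and \Cref{lem:support-bound}, this yields
\[
\sum_{(i,j)\in E} q^a_{ij}x_{ij} \;\le\; |E|\,\frac{Q_a}{h} \;\le\; \frac{2(1+\varepsilon)r}{\varepsilon}\cdot\frac{\varepsilon^2 Q_a}{2(1+\varepsilon)r} \;=\; \varepsilon Q_a .
\]
Hence the guessed and fixed jobs alone contribute at least $(1-\varepsilon)Q_a$. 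The completion step (Shmoys--Tardos rounding of the remaining jobs) adds only nonnegative contributions, so any completion keeps the bound.

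The step that needs care is consistency of the residual bookkeeping. The residual target uses $(\cdot)^+$ truncation, and the LP constraint must read $\sum q^a x \ge Q_a^{\mathrm{res}}$ with the current residual; I need to confirm that the identity "fixed contribution $+$ residual target $\ge Q_a$" holds at every step, which follows by induction on the updates. I also need to check that the bound $|E|\le \frac{2(1+\varepsilon)}{\varepsilon}r$ is applied to the variable set present at termination, and that the choice of $h$ gives exactly $|E|/h\le\varepsilon$ as computed above.
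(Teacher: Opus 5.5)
Your proposal is correct and follows essentially the same route as the paper. The paper makes the same case split (constraint already covered by guessed and assigned jobs versus still active at termination), derives the same bound $\sum_{(i,j)\in E} q^a_{ij}x_{ij}\le |E|\,Q_a/h\le\varepsilon Q_a$ from \Cref{lem:high-contribution-pair-removal} and \Cref{lem:support-bound}, and notes, as you do, that completing the assignment only adds nonnegative contribution. Your remark that no $(\cdot)^+$ truncation has occurred while $a$ is active just makes explicit what the paper uses implicitly: in that case the guessed plus fixed contribution equals $Q_a-Q_a^{\mathrm{res}}$ exactly.
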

\begin{proof}
    Fix a constraint~$a\in[r]$.
    If~$a$ was discarded during the rounding or inactive from the start, then~$Q_a^{\mathrm{res}}=0$, meaning the guessed and assigned jobs already contribute the full target~$Q_a$, hence the bound holds.
    So suppose~$a$ is still active when the rounding terminates.

    By \Cref{lem:high-contribution-pair-removal} every surviving variable satisfies~$q^a_{ij}\leq Q_a/h$.
    Hence, using the bound of \Cref{lem:support-bound}, the contribution to~$a$ carried by the remaining fractional variables is
    \begin{align*}
        \sum_{(i,j)\in E} q^a_{ij}x_{ij}
        \;\leq\; \frac{Q_a}{h}\cdot |E|
        \;\leq\; \frac{Q_a}{h}\cdot\frac{2(1+\varepsilon)}{\varepsilon}\cdot r
        \;\leq\; Q_a\cdot\frac{\varepsilon^2}{2(1+\varepsilon)\,r}\cdot\frac{2(1+\varepsilon)\,r}{\varepsilon}
        \;=\; \varepsilon\cdot Q_a \enspace .
    \end{align*}
    Therefore the fractional variables carry at most~$\varepsilon Q_a$ of constraint~$a$, and dropping them decreases the covering value by at most this much.
    The guessed and assigned jobs supply the remaining target~$Q_a - Q_a^{\mathrm{res}}$, and the fractional solution is at least~$Q_a^{\mathrm{res}}$ before rounding, so after discarding the fractional contribution the integral assignment collects at least
    \begin{align*}
        (Q_a - Q_a^{\mathrm{res}}) + (Q_a^{\mathrm{res}} - \varepsilon Q_a)
        \;=\; (1-\varepsilon) Q_a  \enspace .
    \end{align*}
    Completing the assignment of the remaining fractional jobs only adds contribution, so the bound is preserved.
\end{proof}

It remains to bound the makespan. 
The residual capacity~$T_i^{\mathrm{res}}$ tracks the budget left on machine~$i$.
Initially~$T_i^{\mathrm{res}}$ is equal to~$T_i$ minus the guessed load, and it decreases by~$p_{ij}$ whenever a job is assigned to machine~$i$. 
Thus, as long as the makespan constraint of machine~$i$ remains present, the assigned load charged to~$i$ never exceeds~$T_i$, and the increase over the fractional load present when this constraint is dropped is bounded in~\Cref{lem:makespan-guarantee} by at most~$(1+\varepsilon)T_i$.
\begin{restatable}[Makespan guarantee]{lemma}{makespanguaranteelemma}
\label{lem:makespan-guarantee}
    The final assignment~$X\in \{0,1\}^{m\times n}$ satisfies
    \[\sum_{j:\,i\in\Gamma(j)} p_{ij} X_{ij} \leq (2+\varepsilon) T_i\]
    for every machine~$i \in M$.
\end{restatable}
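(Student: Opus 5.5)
Fix a machine $i$ and split the analysis by when (if ever) its makespan constraint is dropped.

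\emph{Constraint never dropped.} Machine $i$ keeps its constraint through all iterative-relaxation rules and also through the final Shmoys--Tardos completion. The guessed load plus the load of jobs fixed to $1$ on $i$ is at most $T_i$, by the invariant that $T_i^{\mathrm{res}}$ is charged exactly these loads. The remaining fractional load is at most $T_i^{\mathrm{res}}$. The Shmoys--Tardos rounding of the surviving assignment LP (only assignment and makespan constraints) then adds at most one extra job of processing time $p_{ij}\le T_i$, since we restricted to $i\in\Gamma(j)$. The total is at most $2T_i$.

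\emph{Constraint dropped by the degree rule.} At the moment of dropping we have $d(i)\le 1+\varepsilon+\sum_j x_{ij}$. Write $B_i$ for the load already fixed on $i$, guessed load included, so that $B_i+\sum_j p_{ij}x_{ij}\le T_i$. Every job that ends up on $i$ afterwards corresponds to one of the $d(i)$ variables present at that moment, since no new variables are ever created. The final load is therefore at most
\begin{equation*}
B_i+\sum_{j:(i,j)\in E}p_{ij} = B_i+\sum_j p_{ij}x_{ij}+\sum_j p_{ij}(1-x_{ij}) \le T_i + T_i\sum_j(1-x_{ij}).
\end{equation*}
Here we use $p_{ij}\le T_i$ for eligible pairs. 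By the rule, $\sum_j(1-x_{ij}) = d(i)-\sum_j x_{ij}\le 1+\varepsilon$, so the load is at most $(2+\varepsilon)T_i$. This holds no matter how later iterations or the final completion assign those jobs, because we bounded by assuming all of them land on $i$.

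\emph{Remaining points to check.} We must verify that the final Shmoys--Tardos step never creates a machine-job pair outside the current support $E$, so the bound in the second case stays valid. We must also confirm that dropping the covering constraints before this step leaves a feasible point of the assignment/makespan polytope, so Shmoys--Tardos applies with residual capacities $T_i^{\mathrm{res}}$ on the machines still constrained. The main subtlety is the accounting in the first case: the Shmoys--Tardos rounding must be applied to the residual capacities, with additive error bounded by $\max p_{ij}\le T_i$ over remaining eligible jobs, and not to the original targets. Once that is set up, the bound $T_i+T_i=2T_i\le(2+\varepsilon)T_i$ follows directly.
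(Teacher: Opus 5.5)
Your proof is correct and follows the paper's argument. You use the same case split: when the constraint is dropped by the degree rule, the load is bounded by $(T_i - T_i^{\mathrm{res}}) + T_i^{\mathrm{res}} + T_i\bigl(d(i)-\sum_j x_{ij}\bigr) \le (2+\varepsilon)T_i$, and when it is never dropped, Shmoys--Tardos rounding of the residual assignment LP adds at most one job of size at most $T_i$. The points you flag are standard and the paper also uses them implicitly: Shmoys--Tardos only assigns along support edges, and dropping covering constraints keeps the current point feasible.
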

\begin{proof}
    Fix a machine~$i$.
    The makespan constraint of machine~$i$ is either dropped before all rounding rules have been applied or not.
    First suppose that machine~$i$ is dropped, and let~$x$ be the fractional solution at the moment it is dropped from the residual LP.
    Then there are at most~$d(i) \leq 1+\varepsilon + \sum_{j} x_{ij}$ jobs with a variable to machine~$i$.
    The jobs already assigned to machine~$i$ have load equal to~$T_i-T_i^\mathrm{res}$.
    Furthermore,~$\sum_{j} p_{ij} x_{ij} \leq T_i^{\mathrm{res}}$.
    All variables considered here are eligible on machine~$i$, so $p_{ij}\le T_i$.
    Hence, the increase in load on machine~$i$ after rounding these jobs is bounded by
    \begin{align*}
        \sum_{j} p_{ij} (1-x_{ij})
        \;\leq\; T_i \Bigl( d(i) - \sum_{j} x_{ij}\Bigr)
        \;\leq\; (1+\varepsilon) T_i  \enspace .
    \end{align*}
    Thus, the total load on machine~$i$ for the final integral assignment is at most the load of the jobs already assigned to~$i$ at the moment that the makespan constraint is dropped, plus the load of the fractional jobs at that moment, plus the potential increase of the fractional jobs when they are rounded up, that is,
    \begin{align*}
        (T_i - T_i^{\mathrm{res}}) + T_i^{\mathrm{res}} + (1+\varepsilon) T_i \;=\; (2+\varepsilon) T_i  \enspace .
    \end{align*}

    In the second case, the makespan constraint of machine~$i$ is never dropped.
    Let~$x$ be the final fractional solution at the moment that none of the rounding rules apply.
    Then, as mentioned in~\Cref{lem:covering-guarantee}, we can drop all covering constraints and are left with a generalized assignment problem.
    We can round the job assignments using the rounding technique of Shmoys and Tardos~\cite{ShmoysT1993}, violating the remaining makespan constraint on machine~$i$ by at most the largest processing time of a remaining job eligible on~$i$.
    This largest processing time is at most~$T_i$.
    Hence, the total load on machine~$i$ is at most
    $(T_i-T_i^{\mathrm{res}})+T_i^{\mathrm{res}}+T_i=2T_i$.
\end{proof}

\paragraph*{Proof of \texorpdfstring{\Cref{prop:rounding-covering-constraints}}{Proposition \ref{prop:rounding-covering-constraints}}.}
\label{subsubsec:proof-of-rounding-theorem}

We now assemble the preceding lemmas into a proof of~\Cref{prop:rounding-covering-constraints}.

\roundingcoveringconstraints*

\begin{proof}[Proof of \Cref{prop:rounding-covering-constraints}]
    If~$r=0$, there are no covering constraints, and the claim directly follows from the generalized-assignment rounding of Shmoys and Tardos~\cite{ShmoysT1993}, which already yields makespan at most~$2T_i$. We therefore assume~$r\geq1$ and run the algorithm described above.

    The algorithm enumerates all candidates and, for each one that is not discarded, performs the iterative rounding and outputs an integral assignment.
    It returns an assignment satisfying the makespan and covering bounds.
    To prove correctness it therefore suffices to exhibit one candidate whose run produces such an assignment.
    Let~$G^\star$ be the candidate that, for each constraint~$a\in[r]$, selects the heavy contributors of~$a$ together with their machines under~$\tau$.
    Since~$G^\star$ agrees with the feasible assignment~$\tau$, it is consistent and its fixed load on every machine is at most~$T_i$, so it is not discarded.

    Consider the run on~$G^\star$.
    By \Cref{lem:high-contribution-pair-removal}, deleting the high-contribution variables keeps the residual LP feasible, witnessed by~$\tau$ restricted to the unguessed jobs, and leaves every surviving contribution below~$Q_a/h$.
    The iterative rounding terminates, and by \Cref{lem:support-bound} we have~$|E|\leq \tfrac{2(1+\varepsilon)}{\varepsilon}r$.
    Discarding the remaining covering constraints and completing the assignment, \Cref{lem:covering-guarantee} gives~$\sum_{j}\sum_{i} q^a_{ij}X_{ij}\geq(1-\varepsilon)Q_a$ for every~$a\in[r]$, and \Cref{lem:makespan-guarantee} gives~$\sum_{j} p_{ij}X_{ij}\leq(2+\varepsilon)T_i$ for every machine~$i$.
    Since each job is assigned to exactly one eligible machine, we also have~$\sum_{i\in\Gamma(j)}X_{ij}=1$ for every job~$j$.
    Hence the run on~$G^\star$ produces an assignment meeting the makespan and covering bounds.

    It remains to bound the run time.
    There are at most~$(nm)^{\mathcal{O}(rh)} = (nm)^{\mathcal{O}(r^2/\varepsilon^2)}$ candidates.
    For each candidate the iterative rounding performs polynomially many steps, each recomputing an extreme point by solving a linear program of polynomial size, and assigning the remaining variables also takes polynomial time. The total run time is therefore~$(nm)^{\mathcal{O}(r^2/\varepsilon^2)}\cdot|I|^{O(1)}$.
\end{proof}

\subsection{Low-makespan max-min diversity}
\label{app:low-makespan-max-min}

We are now ready to give the full proof of \Cref{thm:max-min-2-makespan}.
\maxmintwomakespan*

\begin{proof}
    Set $\varepsilon_0=\varepsilon/3$. We first prove a threshold statement. Fix a value~$Q\le \OPT_k^{\min}(T)$. We show that, given such a threshold~$Q$, one can compute~$k$ schedules of makespan at most~$(2+\varepsilon_0)T$ and max-min diversity at least~$\frac12(1-\varepsilon_0)Q$.

    The algorithm constructs the schedules one by one. After schedules~$\sigma_1,\dots,\sigma_r$ have been constructed, we maintain that each of them has makespan at most~$(2+\varepsilon_0)T$ and that every pair among them has diversity at least~$\frac12(1-\varepsilon_0)Q$.

    For the first schedule, we apply~\Cref{prop:rounding-covering-constraints} with no covering constraints and makespan targets~$T_i=T$ for all machines~$i$. Since~$T$ is feasible, the corresponding LP has an integral feasible solution. Thus, we obtain a schedule~$\sigma_1$ with~$\Cmax(\sigma_1)\le (2+\varepsilon_0)T$.

    Now suppose that~$1<r\le k$ and that schedules~$\sigma_1,\dots,\sigma_{r-1}$ have already been constructed. 
    Let~$\pi_1,\dots,\pi_k$ be an optimal solution for~$\OPT_k^{\min}(T)$.
    Thus, $\Cmax(\pi_a)\le T$ for every~$a\in[k]$, and $d(\pi_a,\pi_{a'})\ge \OPT_k^{\min}(T)\ge Q$ for every distinct $a,a'\in[k]$.
    We claim that there exists~$a\in[k]$ such that
    \begin{align*}
        d(\pi_a,\sigma_b)\ge \frac12 Q
        \qquad \forall b\in[r-1]  \enspace .
    \end{align*}
    Suppose not.
    Then for every~$a\in[k]$, there exists some~$b(a)\in[r-1]$ such that
    \begin{align*}
        d(\pi_a,\sigma_{b(a)})<\frac12 Q  \enspace .
    \end{align*}
    Since~$r-1<k$, the pigeonhole principle implies that two distinct optimal schedules~$\pi_a$ and~$\pi_{a'}$ are mapped to the same constructed schedule~$\sigma_b$.
    By the triangle inequality,
    \begin{align*}
        d(\pi_a,\pi_{a'}) \;\le\; d(\pi_a,\sigma_b)+d(\sigma_b,\pi_{a'}) <\; \frac12 Q+\frac12 Q \;=\; Q \enspace .
    \end{align*}
    This contradicts~$d(\pi_a,\pi_{a'})\ge \OPT_k^{\min}(T)\ge Q$.
    Hence, such a schedule~$\pi_a$ exists.

    Fix such a schedule~$\pi_a$.
    The schedule~$\pi_a$ has makespan at most~$T$ and diversity at least~$\frac12 Q$ from each of~$\sigma_1,\dots,\sigma_{r-1}$.
    For each previously constructed schedule~$\sigma_b$, define covering coefficients
    \begin{align*}
        q^b_{ij}=w_j\mathbf 1[i\ne \sigma_b(j)]  \enspace .
    \end{align*}
    Then~$\pi_a$ is an integral feasible solution of~\eqref{eq:max-min-div-LP} with makespan targets~$T_i=T$ for all machines~$i$ and covering targets~$\frac12 Q$, one covering constraint for each previously constructed schedule.
    Applying~\Cref{prop:rounding-covering-constraints} with parameter~$\varepsilon_0$ gives a schedule~$\sigma_r$ with~$\Cmax(\sigma_r)\le (2+\varepsilon_0)T$ and, for every~$b\in[r-1]$,
    \begin{align*}
        d(\sigma_b,\sigma_r) \;\ge\; (1-\varepsilon_0)\cdot \frac12 Q \;=\; \frac12(1-\varepsilon_0)Q \enspace .
    \end{align*}
    Thus, after adding~$\sigma_r$, all constructed schedules still satisfy the same makespan and pairwise-diversity guarantees.
    By induction, after~$k$ steps we obtain~$k$ schedules of makespan at most \mbox{$(2+\varepsilon_0)T$} and max-min diversity at least~$\frac12(1-\varepsilon_0)Q$.
    This proves the threshold statement.

    It remains to remove the assumption that a suitable threshold~$Q$ is known.
    Let $Q^*=\OPT_k^{\min}(T)$.
    We always run the threshold algorithm with $Q=0$ and retain its output as a baseline family. Let $W=\sum_{j\in J}w_j$.
    If $W=0$, return this family.
    Otherwise, let $w_{\min}=\min\{w_j \mid w_j>0\}$.
    Enumerate $Q=W/(1+\varepsilon_0)^\ell$ for $\ell=0,1,\ldots$, including and stopping at the first value with $Q\leq w_{\min}$.
    The baseline covers $Q^*=0$.
    If $Q^*>0$, then $w_{\min}\leq Q^*\leq W$, so one guessed threshold~$\widehat Q$ satisfies
    \begin{align*}
        \widehat Q\le Q^* < (1+\varepsilon_0)\widehat Q  \enspace .
    \end{align*}
    We run the threshold algorithm for every guessed threshold.
    Runs that do not return~$k$ schedules satisfying the required makespan and pairwise-diversity bounds for the guessed threshold are discarded.
    Among the remaining outputs and the baseline family, we return a family with maximum actual max-min diversity.

    For the guessed threshold~$\widehat Q$, the threshold statement applies. Therefore, one of the runs returns schedules of makespan at most~$(2+\varepsilon_0)T\le (2+\varepsilon)T$ and max-min diversity at least
    \begin{align*}
        \frac12(1-\varepsilon_0)\widehat Q \;\ge\; \frac12\cdot \frac{1-\varepsilon_0}{1+\varepsilon_0} Q^* \ge\; \frac12(1-\varepsilon)Q^*\;=\; \frac12(1-\varepsilon)\OPT_k^{\min}(T) \enspace .
    \end{align*}
    The second inequality follows from~$\varepsilon_0=\varepsilon/3$ and~$\varepsilon\in(0,1)$.

    Finally, we bound the run time. For a fixed threshold, the sequential construction invokes~\Cref{prop:rounding-covering-constraints} exactly~$k$ times. Each invocation has at most~$k-1$ covering constraints and uses parameter~$\varepsilon_0$.
    Therefore, each threshold run takes time $(nm)^{\mathcal{O}(k^2/\varepsilon^2)}\cdot |I|^{O(1)}$.
    The number of guessed thresholds is $\mathcal O\left(1+\frac{1}{\varepsilon}\log\frac{W}{w_{\min}}\right)$, which is absorbed in the stated bound. Hence the total run time is $(nm)^{\mathcal{O}(k^2/\varepsilon^2)}\cdot |I|^{O(1)}$. 
\end{proof}

\subsection{Hardness for max-min diversity}
\label{app:max-min-hardness}
We prove the hardness result using two machines and all processing times equal to~$0$.
Thus, every schedule has makespan~$0$, and only the diversity requirement matters.

\maxminhardness*
\begin{proof}
  We reduce from \textsc{Partition}. 
  Let $a_1, \dots, a_n$ be an instance of \textsc{Partition} with \mbox{$\sum_{j=1}^n a_j = 2B$}. 
  Without loss of generality, we assume that $0 < a_j \le B$ for every $j$.

  For each $a_j$, we create a job~$j$ of weight~$w_j=a_j$. We also create one additional dummy job~$d$ of weight~$w_d=B$. Thus, the total weight of all jobs is~$3B$.

  We use exactly two machines, $M=\{1,2\}$. Every job, including the dummy job, may be assigned to either machine, and we set $p_{ij}=0$ for every machine~$i$ and job~$j$. Consequently, every assignment is a feasible schedule with makespan~$0$.
    
  We set the diversity value $Q$ equal to $2B$. 
  We show that the \textsc{Partition} instance is a ``yes''-instance if and only if there exist three feasible schedules $\sigma_1, \sigma_2, \sigma_3$ such that
  \begin{align*}
    d(\sigma_a,\sigma_b)\ge 2B,
      \qquad \forall 1\le a<b\le 3  \enspace .
  \end{align*}
    
  First suppose that the \textsc{Partition} instance is a ``yes''-instance. 
  Then there exists a set $S \subseteq \{1, \dots, n\}$ such that $\sum_{j\in S}a_j=B$. 
  Let $\overline S=\{1,\ldots,n\}\setminus S$. 
  Since $\sum_{j=1}^n a_j = 2B$, we also have that $\sum_{j\in \overline S}a_j=B$.
  We define three schedules $\sigma_1, \sigma_2, \sigma_3$ as follows.
  For the dummy job, set
  \begin{align*}
        \sigma_1(d)=1,
        \qquad
        \sigma_2(d)=1,
        \qquad
        \sigma_3(d)=2 \enspace .
  \end{align*}
    
  For each job $j \in S$ set 
  \begin{align*}
    \sigma_1(j)=1,
        \qquad
        \sigma_2(j)=2,
        \qquad
        \sigma_3(j)=2 \enspace .
  \end{align*}
    
  For each job $j \in \overline S$ set
  \begin{align*}
    \sigma_1(j)=2,
        \qquad
        \sigma_2(j)=1,
        \qquad
        \sigma_3(j)=2 \enspace .
  \end{align*}
  The three groups $S$, $\overline S$, and $\{d\}$ each have total weight $B$, and each pair of schedules differs on exactly two of them. 
  Hence,
  \begin{align*}
    d(\sigma_1,\sigma_2)= 2B, \qquad
    d(\sigma_1,\sigma_3)= 2B, \qquad
    d(\sigma_2,\sigma_3)= 2B \enspace .
  \end{align*}
  Therefore, the constructed instance admits three feasible schedules with minimum pairwise diversity at least $2B$.
    
  Conversely, suppose there exist three schedules $\sigma_1, \sigma_2, \sigma_3$ such that 
  \begin{align*}
    d(\sigma_a,\sigma_b)\ge 2B,
        \qquad \forall 1\le a<b\le 3 \enspace .
  \end{align*}
    
  Since there are only two machines, for each job either all three schedules assign it to the same machine, or exactly one schedule assigns it to a different machine from the other two.
    
  Let~$G_0$ be the set of jobs on which all three schedules agree. 
  For $i \in \{1,2,3\}$, let~$G_i$ be the set of jobs for which~$\sigma_i$ is the unique schedule that differs from the other two. 
  Define 
  \begin{align*}
        W_i = \sum_{j \in G_i} w_j, \qquad i \in \{0,1,2,3\} \enspace .
  \end{align*}
  Then
  \begin{align*}
    W_0+W_1+W_2+W_3 = 3B \enspace .
  \end{align*}
  Jobs in~$G_0$ do not contribute any positive pairwise diversity.
  A job in $G_1$ contributes to the two pairwise diversities involving $\sigma_1$, and
  analogously for $G_2$ and $G_3$.
  Therefore,
  \begin{align*}
        d(\sigma_1,\sigma_2) = W_1+W_2, \qquad
        d(\sigma_1,\sigma_3) = W_1+W_3, \qquad
        d(\sigma_2,\sigma_3)= W_2+W_3 \enspace .
  \end{align*}
  Since all three pairwise diversities are at least~$2B$, summing the three inequalities gives
  \begin{align*}
    2(W_1+W_2+W_3)\ge 6B \enspace .
  \end{align*}
  On the other hand, $W_1+W_2+W_3 \le W_0+W_1+W_2+W_3 = 3B$. 
  So $W_1+W_2+W_3=3B$ and thus~$W_0=0$ and all three pairwise diversities are exactly $2B$.
  Hence,
  \begin{align*}
        W_1+W_2 = 2B, \qquad
        W_1+W_3 = 2B, \qquad
        W_2+W_3 = 2B \enspace .
  \end{align*}
  Solving this set of equations gives $W_1=W_2=W_3=B$.
    
  Consequently, the jobs are partitioned into three groups $G_1,G_2,G_3$, each of total weight~$B$. 
  The dummy job has weight exactly~$B$, so the group containing the dummy job contains no other jobs. 
  The remaining two groups contain only the original~$w_j=a_j$ jobs and each group has total weight~$B$. 
  Therefore, they define a partition of $a_1, \dots, a_n$.
    
  Thus the \textsc{Partition} instance is a ``yes''-instance if and only if the constructed instance admits three feasible schedules with minimum pairwise diversity at least~$2B$. 
  Consequently, the decision version of the~$k=3$ max-min diversity problem is $\mathsf{NP}$-hard.
\end{proof}
\end{document}